\newcommand{\indep}{\perp \!\!\! \perp}
\def\equationautorefname~#1\null{%
	(#1)\null
}
\newtheoremstyle{saber}% name
  {}%      Space above, empty = `usual value'
  {}%      Space below
  {\itshape}% Body font
  {}%         Indent amount (empty = no indent, \parindent = para indent)
  {\bfseries}% Thm head font
  {.}%        Punctuation after thm head
  {.5em}% Space after thm head: \newline = linebreak
  {}%         Thm head spec
\theoremstyle{saber}
\newtheorem{theorem}{Theorem}[section]
\newtheorem{lemma}{Lemma}[section]
\newtheorem{definition}{Definition}[section]
\newtheorem{remark}{Remark}%[theorem]
\newcommand*{\theoremautorefname}{Theorem}
\newcommand*{\lemmaautorefname}{Lemma}
\newcommand*{\definitionautorefname}{Definition}
\newcommand*{\corollaryautorefname}{Corollary}
\newcommand*{\factautorefname}{Fact}
\newcommand*{\propertyautorefname}{Property}
\newcommand{\argmax}{\operatornamewithlimits{argmax}}
\newcommand{\argmin}{\operatornamewithlimits{argmin}}
\newcommand{\limsup}{\operatornamewithlimits{limsup}}
\newcommand{\liminf}{\operatornamewithlimits{liminf}}
\newcommand{\norm}[1]{\lVert#1\rVert}
\newcommand{\abs}[1]{\lvert#1\rvert}
\newcommand{\set}[1]{\left\{#1\right\}}
\newcommand{\mset}[1]{\lbrack #1\rbrack}
\newcommand{\etal}[1]{{\em #1 et al.}~}
\newcommand{\ie}{{\em i.e.,} }
\newcommand{\eg}{{\em e.g.,} }
\newcommand{\etc}{{\em etc.,} }
\newcommand{\wrt}{{\em w.r.t.} }
\newcommand{\dotprod}[2]{
  \langle #1, #2 \rangle
}
\newcommand{\iid}{i.i.d.}
\newcommand{\bigParenthes}[1]{
  \big(#1\big)
}
\newcommand{\bigBracket}[1]{
  \big\{#1\big\}
}
\newcommand{\bigSqBracket}[1]{
  \big[#1\big]
}
\newcommand{\BigParenthes}[1]{
  \Big(#1\Big)
}
\newcommand{\BigBracket}[1]{
  \Big\{#1\Big\}
}
\newcommand{\BigSqBracket}[1]{
  \Big[#1\Big]
}
\newcommand{\biggParenthes}[1]{
  \bigg(#1\bigg)
}
\newcommand{\biggBracket}[1]{
  \bigg\{#1\bigg\}
}
\newcommand{\biggSqBracket}[1]{
  \bigg[#1\bigg]
}
\newcommand{\BiggParenthes}[1]{
  \Bigg(#1\Bigg)
}
\newcommand{\BiggBracket}[1]{
  \Bigg\{#1\Bigg\}
}
\newcommand{\BiggSqBracket}[1]{
  \Bigg[#1\Bigg]
}
\newcommand{\bracket}[1]{
  \{#1\}
}
\newcommand{\parenthes}[1]{
  (#1)
}
\newcommand{\sqBracket}[1]{
  [#1]
}
\newcommand{\prob}[1]{\mathbb{P}[#1]}
\newcommand{\Prob}[1]{\mathbb{P}\big[#1\big]}
\newcommand{\expect}[1]{\mathbb{E}[#1]}
\newcommand{\Expect}[1]{\mathbb{E}\big[#1\big]}
\newcommand{\expectt}[1]{\mathbb{E}\bigg[#1\bigg]}
\newcommand{\walk}[1]{%
  \@tempswafalse
  \@for\next:=#1\do
    {\if@tempswa\!\!\rightarrow\!\!\else\@tempswatrue\fi\next}%
}
\newcommand{\seq}{\!=\!}
\newcommand{\sminus}{\!-\!}
\newcommand{\sm}[1]{\!#1\!}
\newcommand{\union}[2]{#1\!\cup\!#2}
\newcommand{\hl}[2]{{\color{#1}#2}}
\newcommand{\hlb}[1]{{\color{blue}#1}}
\newcommand{\hlr}[1]{{\color{red}#1}}
\newcommand{\hlp}[1]{{\color{purple}#1}}
\newcommand{\hlc}[3]{{\color{#1}#2 [remark: #3]}}
\title{On Efficient Range-Summability of IID Random Variables in Two or Higher Dimensions}
\titlerunning{On Efficient Range-Summability of IID RVs in Multiple Dimensions}
\author{Jingfan Meng}{School of Computer Science, Georgia Institute of Technology, Atlanta, USA}{jmeng40@gatech.edu}{}{}
\author{Huayi Wang}{School of Computer Science, Georgia Institute of Technology, Atlanta, USA}{huayiwang@gatech.edu}{}{}
\author{Jun Xu}{School of Computer Science, Georgia Institute of Technology, Atlanta,  USA}{jx@cc.gatech.edu}{}{}
\author{Mitsunori Ogihara}{Department of Computer Science, University of Miami, Coral Gables,  USA}{ogihara@cs.miami.edu}{}{}
\begin{document}
	
	%% title and authors

\authorrunning{J. Meng and H. Wang and J. Xu and M. Ogihara} %TODO mandatory. First: Use abbreviated first/middle names. Second (only in severe cases): Use first author plus 'et al.'

\Copyright{Jingfan Meng, Huayi Wang, Jun Xu, and Mitsunori Ogihara} %TODO mandatory, please use full first names. LIPIcs license is "CC-BY";  http://creativecommons.org/licenses/by/3.0/

\ccsdesc[500]{Theory of computation~Streaming, sublinear and near linear time algorithms}

\keywords{fast range-summation, multidimensional data streams, Haar wavelet transform} %TODO mandatory; please add comma-separated list of keywords

\relatedversiondetails{\normalfont A previous version of this paper is available at} {https://arxiv.org/abs/2110.07753}
\funding{This material is based upon work supported by the National Science Foundation under Grant No. CNS-1909048, CNS-2007006, CNS-2051800, and by Keysight Technologies under Grant No. BG005054.}
\acknowledgements{}

\maketitle
\begin{abstract}
$d$-dimensional (for $d>1$) efficient range-summability ($d$D-ERS) of random variables (RVs) is a fundamental algorithmic problem that has applications to two important families of database problems, namely,
fast approximate wavelet tracking (FAWT) on data streams and approximately answering range-sum queries over a data cube. 
Whether there are efficient solutions to the $d$D-ERS problem, or to the latter database problem, have been two long-standing open problems.  Both are solved in this work.
Specifically, we propose a novel solution framework to $d$D-ERS on RVs that have Gaussian or Poisson distribution.  Our $d$D-ERS solutions are the first ones that have polylogarithmic time complexities. 
 Furthermore, we develop a novel $k$-wise independence theory that allows our $d$D-ERS solutions to have both high computational efficiencies and strong provable independence guarantees. Finally, 
 we show that under a sufficient and likely necessary condition, certain existing solutions for 1D-ERS can be generalized to higher dimensions.
 %the case $d>1$.
 %we generalize existing DST-based solutions for 1D-ERS to 2D, and   characterize a sufficient and likely necessary condition on the target distribution for this generalization to be feasible.

	%\emph{dyadic simulation}, a novel solution framework to ERS that extends and improves existing frameworks in a fundamental and systematic way, and develop three novel ERS solutions for differ that is highly likely necessaryent underlying distributions. We show that our solutions are better than all existing ones in that the underlying random variables thus generated are independent empirically and can be made provably so at a small cost. Furthermore, we develop a novel $k$-wise independent theory of DSTs that provide both high computational efficiency and strong provable
	%independence guarantees.existing dyadic simulation solution to $1$D-ERS and the Haar wavelet transform (HWT) and propose a novel HWT-based $d$D-ERS solution for Gaussian random variables that, and propose a generalized dyadic simulation solution for 2D-ERS that works when the target distribution is Poisson
	%Additionally, as novel applications of our ERS solutions, we propose two range-efficient data streaming algorithms that can process arbitrary range updates.  
	%Finally, as a novel application of our  solution for ERS-Gaussian to ANNS-$L_1$, 
	%we explain why in RW-LSH we cannot compute random walks using EH3, and propose GRW-LSH as its replacement that has nearly the same efficacy, 
	%requires virtually zero storage, and is usually more efficient computationally.
\end{abstract}

%for processing data streams with 
%In this work, we make three major contributions.  First, we propose 

\section{Introduction}\label{sec:intro}

Efficient range-summability (ERS) of  random variables (RVs) is a fundamental algorithmic problem that has been studied for nearly two 
decades~\cite{feigenbaum, rusu-jour, dyahist, 1ddyadic}.  This problem has so far been defined only in one dimension (1D) as follows.
Let $X_0, X_1, \cdots, X_{\Delta-1}$ be a list of 
%ideally (mutually) independent
 \emph{underlying RVs} each of which has the same {\it target distribution} $X$.
Here, the (index) universe size $\Delta$ is typically a large number (say $\Delta = 2^{64}$).  
A 1D-ERS problem calls for the following oracle for answering range-sum queries over (realizations of) these underlying RVs.  
At initialization, the oracle chooses a {\it random outcome} $\omega$ from the {\it sample space} $\Omega$, which
{\it mathematically determines} the (values of the) realizations $X_0(\omega), X_1(\omega), \cdots, X_{\Delta-1}(\omega)$;   here the phrase 
“mathematically determines” emphasizes that (an implementation of) the oracle does not actually realize these RVs (and pay the $O(\Delta)$ time cost) at initialization.   
Thereafter, given any query range $[l, u) \triangleq \{l, l+1, \cdots, u-1\}$ that lies in the universe $[0, \Delta)$, the oracle is required to return 
$S[l, u) \triangleq \sum_{i=l}^{u-1}X_i(\omega)$, the sum of the realizations of all underlying RVs in the range.   This requirement is called the \emph{consistency}
requirement, which is one of the two essential requirements for the ERS oracle.   We will show that such an ERS oracle can be efficiently implemented using 
hash functions.  With such an implementation, the outcome $\omega$ corresponds to the seeds of these hash functions. 

%For example, we will describe an oracle based on hash functions. 

The other essential requirement is \emph{correct distribution}, which has two aspects.   
%The consistency requirement is that the returned $S[l, u)$ associated with the outcome $\omega$ (in initialization) must be equal to $\sum_{i=l}^{u-1} X_{i}(\omega)$. 
The first aspect is that the underlying RVs $X_0, X_1, \cdots, X_{\Delta-1}$ each has the same target (marginal) distribution $X$. 
The second aspect is that these RVs should satisfy certain independence guarantees. 
%We will elaborate  later on two kinds of  guarantees: 
Ideally, it is desired for these RVs to be mutually independent, but this comes at a high storage cost as we will elaborate shortly.
In practice, another type of independence guarantee, namely $k$-wise independence (in the sense that any subset of $k$ underlying RVs are independent),
%and that induced by Nisan's PRG (pseudorandom generator)~\cite{}, 
is good enough for most applications when $k \ge 4$.   We will show that our solution for ERS in $d>1$ dimensions 
can provide $k$-wise independence guarantee at a small storage cost of $O(\log^d\Delta)$ for an arbitrarily large $k$.  
 
%the first kind is (strict) all-wise independence, and the second kind is $k$-wise independence, which means that any subset of $k$ underlying RVs are independent.

A straightforward but naive way to answer a range-sum query, say over $[l, u)$, is simply to sum up the realization of every underlying RV
$X_l(\omega), X_{l+1}(\omega), \cdots,$ $X_{u-1}(\omega)$ in the query range.
This solution, however, has a time complexity of
$O(\Delta)$ when $u - l$ is $O(\Delta)$.
% which is inefficient computationally when $\Delta$ is large.   
In contrast, an efficient solution should be able to do so with only $O(\mathrm{polylog}(\Delta))$ time complexity.
Indeed, all existing ERS solutions~\cite{calderbank, feigenbaum, rusu-jour, dyahist, 1ddyadic} have $O(\log\Delta)$ time complexity.

\subsection{Related Work on 1D-ERS}\label{sec:1ders}

%The straightforward way of computing range-sums is to generate each of the underlying RVs $X_\mathbf{i}$ in the range individually and then add them up. However, this solution has a time complexity of $O(\Delta^d)$ in the worst case, which is inefficient computationally when $\Delta$ is huge. $\Pr[X = 1] = \Pr[X = -1] = 0.5$,  (\emph{aka}. single-step random walk)

%We first give a brief survey of the related works on the ERS problem in 1D.  

There are in general two families of solutions to the ERS problem in 1D, following two different approaches.  
%All existing solutions have the same asymptotic time complexity of $O(\log \Delta)$.  
The first approach is based on error correction codes (ECC).
Solutions taking this approach include BCH3~\cite{rusu-jour}, EH3~\cite{feigenbaum}, and RM7~\cite{calderbank}. 
This approach has two drawbacks.  First, it works only when the target distribution $X$ is Rademacher.
Second, although it guarantees $3$-wise (in the case of BCH3 and EH3) or $7$-wise (in the case of RM7) independence among the underlying RVs, 
almost all empirical independence beyond that is destroyed. 
In addition, RM7 is very slow in practice~\cite{rusu-jour}.

%, the only existing solution that provides a strict independence guarantee beyond $3$-wise,

The second approach is based on a data structure called dyadic simulation tree (DST), which we will 
describe in~\autoref{sec:dst1D}.  The DST-based approach was first briefly mentioned in~\cite{dyahist} and later fully developed in~\cite{1ddyadic}.  
The DST-based approach is better than the ECC-based approach in two aspects.  First, it
supports a wider range of target distributions including Gaussian, Cauchy, Rademacher~\cite{1ddyadic}, and Poisson (see~\autoref{sec:2ddyasim}). 
Second, it provides stronger independence guarantees at a low computational cost.  
For example, when implemented using the tabulation hashing scheme~\cite{tabulation5wise}, it guarantees $5$-wise independence at a much lower computational
cost than RM7~\cite{1ddyadic}.  We will describe a nontrivial generalization of this result to $2$D in~\autoref{sec:kwise}.

% and this goal has been achieved by~\cite{feigenbaum, rusu-jour, dyahist, 1ddyadic} in 1D.

%[not used] Each such ERS solution generates  a range-sum $S[l, u)$ or an underlying RV $X_{i}$ in a very different way than the naive solution.
%This difference however does not matter since an ERS solution is , given any {\it outcome} $\omega$ in the {\it sample space} $\Omega$ (in probability theory terms),  , where $X_{i}(\omega)$ is the realization of $X_{i}$associated with the same outcome $\omega$.

%For example, our dyadic simulation solution satisfies the first requirement by construction
%and satisfies the second requirement under an idealized assumption (\emph{cf.}~\autoref{lem:correctness}).= (i_1, i_2, \cdots, i_d)^T

\subsection{ERS in Higher Dimensions}

In this work, we formulate the ERS problems in $d>1$ dimensions ($d$D), which we denote as $d$D-ERS, and propose the first-ever solutions to $d$D-ERS.
% for $d>1$.
A $d$D-ERS problem is similarly defined on a $d$-dimensional universe $[0, \Delta)^d$ that contains $\Delta^d$ integral points.  
Each $d$D point $\vec{\mathbf{i}}  \in [0,\Delta)^d$ is associated with
an RV $X_{\vec{\mathbf{i}}}$, and %these $\Delta^d$ underlying RVs are ideally independent and 
every such RV has the same target (marginal) distribution $X$. 
Here, for ease of presentation, we assume $\Delta$ is the same on each dimension and is a power of $2$, but our solutions can work without these two assumptions.
%Given two indices 
Let $\vec{\mathbf{l}}=(l_1, l_2, \cdots, l_d)^T$ and $\vec{\mathbf{u}} = (u_1, u_2, \cdots, u_d)^T$ be two $d$D points in $[0, \Delta)^d$ such that 
$l_j < u_j$ for each dimension $j=1,2,\cdots, d$.  We define $[\vec{\mathbf{l}}, \vec{\mathbf{u}})$ as the $d$D rectangular range ``cornered'' by these two points 
in the sense $[\vec{\mathbf{l}}, \vec{\mathbf{u}}) \triangleq [l_1, u_1)\times [l_2, u_2)\times \cdots\times [l_d, u_d)$, where $\times$ is the Cartesian product. 

A $d$D-ERS problem calls for the following oracle.  At initialization, the oracle chooses an outcome $\omega$ 
that \emph{mathematically determines} the realization $X_{\vec{\mathbf{i}}}(\omega)$ for each $\vec{\mathbf{i}}\in [0, \Delta)^d$. 
Thereafter, given any $d$D range $[\vec{\mathbf{l}}, \vec{\mathbf{u}})$, the oracle needs to return 
in $O(\mathrm{polylog}(\Delta))$ time $S[\vec{\mathbf{l}}, \vec{\mathbf{u}}) \triangleq \sum_{\vec{\mathbf{i}}\in [\vec{\mathbf{l}}, \vec{\mathbf{u}})} X_{\vec{\mathbf{i}}}(\omega)$, the sum of the realizations of all underlying RVs in this $d$D range.
%The $d$D-ERS problem is that, given any we need to compute ,
Unless otherwise stated, the vectors that appear in the sequel are assumed to be column vectors. 
We write them in boldface and with a rightward arrow on the top like in ``$\vec{\mathbf{x}}$''.

%Before we describe our $d$D-ERS solutions, we first highlight its importance to database theory.  
Several 1D-ERS solutions have been proposed as an essential building block for efficient solutions to several database problems.  
In two such database problems that we will describe in~\autoref{sec:introapplication}, their 1D solutions, both proposed in~\cite{onepasswavelet}, 
can be readily generalized to $d$D if their underlying 1D-ERS oracles can be generalized to $d$D.  
%\textbf{For example, in one such problem, a $d$D-ERS oracle is needed  so that the algorithm can estimate the range-sum of a data cube from the sketches.}
In fact, in~\cite{muthukrishnanhistogram}, 
authors stated explicitly that the only missing component for their solutions of the 1D database problems to be generalized to 2D was an efficient 
2D-ERS oracle where $X$ is the Rademacher distribution ($\Pr[X = 1] = \Pr[X = 0] = 0.5$, aka. single-step random walk).
% but they have never proposed a solution in~\cite{muthukrishnanhistogram} or any of their
%followup works (e.g.,~\cite{muthukrishnan2016}).  In~\cite{cormodewavelettracking}, a different group of researchers solved 
%this database problem in $d$D, by working around this $d$D-ERS problem.  
However, until this paper, \emph{no solution} to any $d$D-ERS problem for $d >1$ 
has been proposed.

\subsection{Our \boldmath{d}D-ERS Solutions}\label{sec:waveletapproach}

In this paper, we propose novel solutions to the two $d$D-ERS problems wherein the target distributions are Gaussian and Poisson respectively. 
We refer to these two problems as $d$D Gaussian-ERS and $d$D Poisson-ERS, respectively. 
Both solutions generalize the corresponding DST-based 1D-ERS solutions to higher dimensions and 
have a low time complexity of $O(\log^d\Delta)$ per range-sum query. 
Our $d$D Gaussian-ERS solution, in particular, is based on the Haar wavelet transform (HWT), since DST is equivalent to HWT when (and only when) the target distribution $X$ is Gaussian,
as will be shown in~\autoref{sec:wavelet1d}. 

Furthermore, we identify a sufficient condition that, if satisfied by the target distribution $X$, 
guarantees that the corresponding DST-based 1D-ERS solution can be generalized to a $d$D-ERS solution.   
We will prove in~\autoref{sec:casepositive} that Gaussian and Poisson are two ``nice'' distributions that satisfy this sufficient condition.
We will also show that, for all such ``nice'' distributions  (including those
we might discover
in the future),
this generalization process (from 1D to $d$D) follows a universal algorithmic framework that can be characterized as the Cartesian product of $d$ DSTs.   
We will also provide
%give evidence that $X$ satisfying this distribution is necessary for this generalization 
strong evidence that $X$ ``being nice'' is likely necessary for this DST generalization (from 1D to $d$D) to be feasible (see~\autoref{sec:infeasibility}).

%-ERS solution with $O(\log^d\Delta)$ time complexity per range-sum query}.

%In this section, we introduce our $d$D-ERS solutions.   Our first solution is for the $d$D-ERS problem where the .  We refer to this problem as .   
%Specifically, we call this problem $d$D Gaussian-ERS when the target distribution is standard Gaussian. A

Unfortunately, so far we have not found any ``nice'' distribution other than Gaussian and Poisson.
%We  strong evidence that being a ``nice'' distribution is likely necessary for the generalization (from 1D to $d$D) to be feasible (see~\autoref{sec:infeasibility}).  
Hence $d$D-ERS for other target distributions remains an open problem, and is likely not solvable by the (generalized) DST approach. 
We emphasize this is not a shortcoming of the DST approach:  That we have obtained computationally efficient solutions in the cases of Gaussian and Poisson 
is already a pleasant surprise, as the $d$D-ERS problem has been open for nearly two decades.    
%In addition, we have strong evidence that none of the ECC-based 1D-ERS solutions can be extended to higher dimensions for providing accurate
%$4$-wise or better independence guarantees.  
Furthermore, 
%since it is well known that a Gaussian range sum can be used wherever a Rademacher range sum is needed (e.g., a Tug-of-War sketch~\cite{} 
%provides the same ``service to upper-layer applications" as a ``Gaussian Tug-of-War sketch"~\cite{}), 
we will show that our $d$D Gaussian-ERS solution leads to computationally efficient solutions to both aforementioned database problems (to be described in~\autoref{sec:introapplication}),
by answering their calls for a $d$D Gaussian-ERS or equivalent oracle.

Our $d$D Gaussian-ERS and Poisson-ERS solutions both support two different types of
%satisfies two kinds of 
independence guarantees, at different storage costs.   
The first type is the ideal case in which the $\Delta^d$ underlying RVs are mutually independent. 
% as will be shown in
As will be shown in~\autoref{sec:dyasim}, we can achieve this ideal case by paying $O(T\log^d \Delta)$ storage cost, where 
$T$ is the total number of range-sum queries to be answered (i.e., $O(\log^d \Delta)$ storage cost per range query). 
%This scheme has the strongest possible guarantee, but may be impractical when the number of range-sum queries is too large.
The second type is also quite strong:  The $\Delta^d$ underlying RVs are $k$-wise independent, where the constant $k$ can be arbitrarily large.
In~\autoref{sec:kwise}, we propose a $k$-wise independence scheme that can provide the second type of guarantees by 
employing $O(\log^{d}\Delta)$ $k$-wise 
independent hash functions.
Its storage cost is quite small:  only $O(\log^{d}\Delta)$ for storing the seeds of these hash functions.
We emphasize that the issue of how strong this independence guarantee (among the underlying RVs) needs to be 
affects only the storage cost of our Gaussian-ERS and Poisson-ERS solutions, and is orthogonal to all other issues described
in earlier paragraphs such as the $O(\log^d\Delta)$ time complexity of both solutions and 
the sufficient and likely necessary condition for a DST-based $d$D-ERS solution to exist. 

This $k$-wise independence scheme makes our $d$D-ERS solutions very practically useful for two reasons. 
First, such a $k$-wise independent hash function in practice requires a very short seed (not longer than a few kilobytes), and each hash operation can be computed in 
nanoseconds~\cite{univhash, tabulationpower}.
Second, most applications of ERS only require the underlying RVs to be 4-wise independent~\cite{onepasswavelet, muthukrishnanhistogram}.

The contributions of this work can be summarized as follows.  First, we provide the first set of answers to the long-standing open question whether there is an efficient 
solution to {\it any} $d$D-ERS problem for $d > 1$.   Second, our Gaussian-ERS solution solves a long-standing open problem in data streaming
that we will describe next.
 %:to approximately answer, in an efficient manner, range-sum queries over a $d$-dimensional dynamic (subject to updates that arrive as a stream)  data cube for $d > 1$.  
Third, our $k$-wise independence theory and hashing scheme make our $d$D ERS solutions 
very practically useful.  

%one of which solves a long-standing open problem. 

The rest of the paper is organized as follows. In~\autoref{sec:introapplication}, we describe two applications of our $d$D Gaussian-ERS solutions.  In~\autoref{sec:dyasim}, we first describe our HWT-based Gaussian-ERS scheme in $1$D, and then generalize it to 2D and $d$D. In~\autoref{sec:kwise}, we describe our $k$-wise independence theory and scheme. In~\autoref{sec:infeasibility}, we propose a sufficient and likely necessary condition on the target distribution for the DST approach to be generalized to $d$D. Finally, we conclude the paper in~\autoref{sec:conclusion}.

\section{Applications of $\bf{d}$D Gaussian-ERS}\label{sec:introapplication}

In this section, we introduce two important applications of our $d$D Gaussian-ERS solution. 

\subsection{Fast Approximate Wavelet Tracking}\label{sec:FAWT}

The first application is to the problem of fast approximate wavelet tracking (FAWT) on data streams~\cite{onepasswavelet, cormodewavelettracking}. 
%Our $d$D Gaussian-ERS solution can be applied in the same way to another problem called space-efficient approximate histogram maintenance (SEAHM) on data streams~\cite{dyahist, muthukrishnanhistogram} that is equivalent to FAWT on 1D, but harder on $d$D.
%Since, on data streams, the FAWT problem is provably equivalent to the  problem, here we focus only on how our solution is applied to FAWT with the understanding that it can be applied to SEHM in the same way.HWT-based 
We first introduce the FAWT problem in 1D~\cite{onepasswavelet}, or 1D-FAWT for short.  
In this problem, the precise system state is comprised of a $\Delta$-dimensional vector $\vec{\mathbf{s}}$, each scalar of which is a counter.  
The precise system state at any moment of time is determined by a data stream, in which each data item is an update to 
one such counter (called a {\it point update}) or all counters in a 1D range (called a {\it range update}).  In 1D-FAWT, $\vec{\mathbf{s}}$ is considered
a $\Delta$-dimensional signal vector that is constantly ``on the move'' caused by the updates in the data stream.  
Let $\vec{\mathbf{r}}$ be the ($\Delta$-dimensional) vector of HWT coefficients of $\vec{\mathbf{s}}$. 
Clearly, $\vec{\mathbf{r}}$ is also a ``moving target''.  We denote as $\vec{\mathbf{r}}_t$ the snapshot of $\vec{\mathbf{r}}$ at a time $t$.
In 1D-FAWT, the goal is to closely track (the precise value of) $\vec{\mathbf{r}}$ over time using a {\it sketch}, in the sense that at moment $t$, 
%\textbf{denoting the snapshot of $\vec{\mathbf{r}}$ at this moment as $\vec{\mathbf{r}}_t$}, 
we can recover from the sketch
an estimate $\vec{\mathbf{r}}'_t$ of $\vec{\mathbf{r}}_t$, such that $\|\vec{\mathbf{r}}_t - \vec{\mathbf{r}}'_t\|_2$ is small.   An acceptable solution should use a sketch whose size (space complexity)  is
only $O(\mathrm{polylog}(\Delta))$, and be able to maintain the sketch with a computation time cost of $O(\mathrm{polylog}(\Delta))$ per point or range update.  

The first solution to 1D-FAWT was proposed in~\cite{onepasswavelet}.  It requires the efficient computation of an arbitrary scalar in $H \vec{\mathbf{x}}$, where $H$ is the $\Delta\times\Delta$ Haar matrix (to be defined in~\autoref{sec:1dmath}) and 
$\vec{\mathbf{x}}$ is a $\Delta$-dimensional vector of 4-wise independent Rademacher RVs.
% due to the fact that its tracking (the leading HWT coefficients) operation  that each has the Rademacher distribution
%is performed in the wavelet domain while its sketching operation is implicitly in the time domain.
A key step of this computation is 
to compute a range-sum of 4-wise independent Rademacher RVs (in $1$D), that is used therein as a Tug-of-War (ToW) sketch~\cite{ams} for ``sketching'' the $L_2$ difference (approximation error) between 
the signal vector and its FAWT approximation.
%which was tackled in~\cite{onepasswavelet} using 
An aforementioned ECC-based ERS solution is used therein to tackle this Rademacher-ERS problem. 
%We note this range sum is used in this 1D-FAWT  solution also known as a Tug-of-War (ToW) sketch~\cite{tug-of-war-paper} for estimating the $L_2$ norm of a data stream.
Authors of~\cite{muthukrishnanhistogram} stated that if they could find a solution to this Rademacher-ERS problem in $d$D, then the 1D-FAWT solution in~\cite{onepasswavelet} would become a
$d$D-FAWT solution.  
%However, as mentioned earlier, no multi-dimensional ERS solution has been found until this paper.
The first solution to $d$D-FAWT, proposed in~\cite{cormodewavelettracking}, explicitly bypassed this ERS problem.

We note that the 1D-FAWT solution above continues to work, and its time and space complexities remain the same, 
if we replace the $\vec{\mathbf{x}}$ with a $\Delta$-dimensional vector of 
4-wise independent standard Gaussian RVs.  This is because, with this replacement, the aforementioned ToW sketch becomes a Gaussian Tug-of-War (GToW) sketch (which maps a data item to a 
Gaussian RV instead of a Rademacher RV)~\cite{stabledist},
and ToW and GToW are known to have the same $(\epsilon, \delta)$ accuracy bound~\cite{ams, stabledist} for sketching the $L_2$ norm of a data stream (used here for sketching the 
aforementioned $L_2$ difference). 
Based on this insight, our $d$D Gaussian-ERS solution can be used to construct a $d$D-FAWT solution as follows.  
We simply change, in the {\it contingent} $d$D-FAWT solution proposed in~\cite{onepasswavelet}, the distribution of all $\Delta^d$ underlying 4-wise independent RVs 
from Rademacher to Gaussian.  
With this replacement, this {\it contingent} solution will finally work, {\it provided} we can solve the resulting $d$D Gaussian-ERS problem.
% (with underlying Gaussian 
%RVs being 4-wise independent). 
The latter problem is solved by our $k$-wise (with $k=4$ here) independence scheme, to be described in~\autoref{sec:kwise}.  
The resulting $d$D-FAWT solution has the same time and space complexity of $O(\log^d\Delta)$
as that proposed in~\cite{cormodewavelettracking} for achieving the same accuracy guarantee.   

%, which  can be solved using  for $d$D Gaussian-ERS.

% This replacement of target distribution works, since any range-sum would have similar distributions before and after.
%\textbf{ [This is on accuracy, not directly on complexity] since any range-sum would have similar distributions (\textbf{random walk vs. Gaussian random walk}) before and after this replacement.}
%With this replacement, the resulting $d$D Gaussian-ERS problem on $4$-wise independent underlying RVs can be solved using our $k$-wise ($k=4$) independence scheme proposed in~\autoref{sec:kwise}.\textbf{seeds of $k$-wise independent hash functions (each of which is very short in practice)}
	
%	using our Gaussian-ERS solution combined with the  (with $k = 4$for this application)  theory }

\subsection{Range-Sum Queries over Data Cube}\label{sec:datacube}
Our second application is to the problem of approximately answering range-sum queries over a data cube~\cite{datacube} that is similarly ``on the move'' propelled by the (point or range) updates that arrive in a stream. 
This problem can be formulated as follows.  The precise system state is comprised of $\Delta^d$ counters, namely $\sigma_{\vec{\mathbf{i}}}$ for $\mathbf{i}\in [0, \Delta)^d$, that are ``on the move''.   
Given a range $[\vec{\mathbf{l}},\vec{\mathbf{u}})$ at moment $t$, the goal is to approximately compute the sum of counter values in this range $C[\vec{\mathbf{l}},\vec{\mathbf{u}})\triangleq \sum_{\vec{\mathbf{i}}\in[\vec{\mathbf{l}},\vec{\mathbf{u}})} \sigma_{\vec{\mathbf{i}}}(t)$, where $\sigma_{\vec{\mathbf{i}}}(t)$ is the value of the counter $\sigma_{\vec{\mathbf{i}}}$ at moment $t$.   %This problem in 1D can be generalized to that in $d$D in the same way as the ERS problem.Until this paper, no solution that satisfies has been found for this problem
A desirable solution to this problem in $d$D should satisfy three requirements (in which multiplicative terms related to the desired $(\epsilon, \delta)$ accuracy bound are ignored).
First, any range-sum query is answered in $O(\mathrm{polylog}(\Delta))$ time. 
Second, its space complexity is $O(\mathrm{polylog}(\Delta))$. 
Third, every point or range update to the system state is processed in $O(\mathrm{polylog}(\Delta))$ time. 
It has been a long-standing open question whether there is a solution to this problem that satisfies 
all three requirements when $d>1$.
For example, solutions producing exact answers (to the range queries)~\cite{ibtehazmdsegmenttree, propolyne, shiftsplit} all require $O(\Delta^d \log\Delta)$ space and hence do not satisfy the second requirement;
and Haar+ tree~\cite{haarplustree} works only on static data, and hence does not satisfy the third requirement.
%Indeed, whether such a desirable solution exists has been a long-standing open problem.

In 1D, a solution that satisfies all three requirements (with $d=1$) was proposed in~\cite{onepasswavelet, dyahist}.
It involves 
1D-ERS computations on $4$-wise independent underlying RVs where the target distribution is either Gaussian or Rademacher, which are tackled using 
a DST-based (in~\cite{dyahist}) or a ECC-based (in~\cite{onepasswavelet}) 1D-ERS solution, respectively.  
As shown in~\cite{onepasswavelet, dyahist}, this range-sum query solution can be readily generalized to $d$D if the ERS computations above can be performed in $d$D. 
This gap is again filled by our $k$-wise ($k=4$) independence scheme for $d$D Gaussian-ERS, 
resulting in the first $d$D solution that satisfies all three requirements, all with $O(\log^d\Delta)$ (time or space) complexity (ignoring $\epsilon$ and $\delta$ terms).

In the resulting $d$D solution, we maintain $O(\log (1/\delta)/\epsilon^2)$ (independent instances of) sketches that each ``sketches'' the content (counter values) of the data cube.
Here we describe only one such sketch, which we denote as $A$, since these sketches are statistically and functionally identical.  
At any time $t$, $A(t)$ should track the current system state, namely ($\sigma_{\vec{\mathbf{i}}}(t)$)'s, as follows:  $A(t) \triangleq \sum_{\vec{\mathbf{i}}\in[0, \Delta)^d}\sigma_{\vec{\mathbf{i}}}(t) X_{\vec{\mathbf{i}}}$.
Here $X_{\vec{\mathbf{i}}}$ for $\mathbf{i}\in [0, \Delta)^d$ are (realizations of) a set of $\Delta^d$ $4$-wise independent standard Gaussian underlying RVs that have
one-to-one correspondences with the set of $\Delta^d$ counters as follows:  Each $X_{\vec{\mathbf{i}}}$ is associated with a counter $\sigma_{\vec{\mathbf{i}}}$.  
If we implement these $\Delta^d$ RVs using (an instance of) our $d$D Gaussian-ERS solution, then we can keep the value of $A(t)$ up-to-date, with a time complexity of $O(\log^d\Delta)$
per point or range update (to the system state).  
%, we need to first fix $\Delta^d$ $4$-wise independent standard Gaussian underlying RVs, each of which say $X_{\vec{\mathbf{i}}}$
%is associated with a counter say $\sigma_{\vec{\mathbf{i}}}$.  Then, for each point or range update to the system state (the values of these counters), we 
%adjust the sketch value accordingly so that it is always equal to $A \triangleq \sum_{\vec{\mathbf{i}}\in[0, \Delta)^d}\sigma_{\vec{\mathbf{i}}}(t) X_{\vec{\mathbf{i}}}$ at any time $t$.  
Then, given a query range $[\vec{\mathbf{l}},\vec{\mathbf{u}})$ at time $t$, we estimate the range-sum of counters 
$C[\vec{\mathbf{l}},\vec{\mathbf{u}})$ from the sketch $A(t)$ using $A(t)\cdot S[\vec{\mathbf{l}},\vec{\mathbf{u}})$ as the estimator.
These $O(\log (1/\delta)/\epsilon^2)$ estimators, one obtained from each sketch, are then combined to produce a final estimation that has the following accuracy guarantee (that is the 
same as in the 1D case).
%By the guarantees in~\cite{onepasswavelet}, with $O(\log (1/\delta)/\epsilon^2)$ space complexity and the same amount of time complexity per point or range update, the range-sum  $C[\vec{\mathbf{l}},\vec{\mathbf{u}})$ can be estimated 
With probability at least $1-\delta$, the final estimation deviates from the actual value of $C[\vec{\mathbf{l}},\vec{\mathbf{u}})$ by at most $\epsilon \sqrt{V[\vec{\mathbf{l}},\vec{\mathbf{u}})}\|\bm{\sigma}\|_2$, where $V[\vec{\mathbf{l}},\vec{\mathbf{u}}) \triangleq \prod_{j=1}^{d}(u_j - l_j)$ is the number of counters in the query range, and $ \|\bm{\sigma}(t)\|_2 \triangleq \left(\sum_{\vec{\mathbf{i}}\in[0, \Delta)^d} \sigma^2_{\vec{\mathbf{i}}}(t)\right)^{1/2}$ is the $L_2$ norm of the system state. 
Since each sketch uses an independent $d$D Gaussian-ERS scheme instance, our $d$D solution satisfies all three aforementioned requirements, all with $O(\log^d\Delta \log (1/\delta)/\epsilon^2)$ time and space complexity.

\subsection{A Closer Comparison with Related Work}

In this section, per referees' requests, we provide an in-depth comparison of this work with prior works on 1D-FAWT~\cite{onepasswavelet, dyahist}, 
on $d$D-FAWT~\cite{cormodewavelettracking}, and on 1D data 
cube~\cite{dyahist}.  
%Its purpose is to assess how far each prior work had come towards formulating and solving this $d$D range summability problem to and explain why in some cases.
%We first provide a short story here.  
%This explanation highlights the unique contributions of this work in terms of four detachments.

We start with explaining how the $d$D-FAWT solution proposed in~\cite{cormodewavelettracking} manages to avoid confronting the $d$D-ERS problem.  
The $d$D-FAWT solution~\cite{cormodewavelettracking} maintains ToW sketches for groups of wavelet coefficients in the wavelet domain.  
As explained earlier, each ToW sketch ``measures'' the $L_2$ norm (and hence the total energy by squaring) of such a group.     
By the property of HWT, each point or range update to the system state in the time domain translates into $O(\log^d \Delta)$ updates to the sketches the wavelet domain;
we also use this property in our solution to keep its time complexity below $O(\log^d \Delta)$ as shown in~\autoref{sec:waveletdd}.  
To solve the $d$D-FAWT using these sketches in the wavelet domain, we need only to identify the groups that are (hierarchical) ``$L_2$ heavy hitters''~\cite{cormodewavelettracking}.   
In~\cite{cormodewavelettracking}, a binary search tree built on these sketches is used to search for such ``$L_2$ heavy hitters'' in $O(\log \Delta\cdot \log\log \Delta)$ time.  
Since this $d$D-FAWT solution~\cite{cormodewavelettracking} does not involve computing the range sums of the Rademacher RVs underlying the ToW sketches, 
it does not need to formulate or solve any ERS problem.
%these RVs only need to be 4-wise independent as required by ToW.  

As we will elaborate in Section~\ref{sec:dyasim}, our $d$D Gaussian-ERS solution works in the same way as the $d$D-FAWT solution proposed in~\cite{cormodewavelettracking},
by shifting the (representations of) input streams and the range queries from the time domain to the wavelet domain.  Hence, arguably had $d$D-FAWT solution proposed in~\cite{cormodewavelettracking} 
used the Gaussian ToW (GToW) instead of the ToW sketch, this shift would have resulted in a $d$D-FAWT solution containing the bulk of our $d$D Gaussian-ERS solution as an embedded module.  
However, such an embedded module is still ``two hops away'' from our $d$D Gaussian-ERS solution as follows.
First, since the objective of and the intuition behind this shift in~\cite{cormodewavelettracking} were to avoid rather than to solve the ERS problem, it would not be easy for the authors of~\cite{cormodewavelettracking} to realize that 
the embedded module can be extended to a standalone $d$D Gaussian-ERS solution.  
%[old] First, it is by no means obvious to realize that the embedded module is a standalone $d$D Gaussian-ERS solution. 
Second, without our aforementioned $k$-wise independence theory and construction, 
the embedded module does not yet guarantee $4$-wise independence among
underlying Gaussian RVs that is needed for $d$D-FAWT.  

On a related note, should we try to extend the 1D-FAWT solution proposed in~\cite{onepasswavelet}, which maintains the ToW sketches
in the time domain, to $d$D without the aforementioned Rademacher-by-Gaussian replacement, 
%we would be faced with a $d$D Rademacher-ERS problem.  
the underlying Rademacher RVs would have to be efficiently range-summable to keep the time complexity of each point or range update to the sketches low.   
However, this appears to be a tall order for now:  For $d>1$, no ECC-based Rademacher-ERS solution has ever been found as explained earlier, and a DST-based 
Rademacher-ERS solution is unlikely to exist, as we will show in \autoref{sec:infeasibility} and \autoref{sec:casenegative}.

%{\bf JFM: I think you can "avoid" ERS by moving everything, including the input streams and the query ranges into the wavelet domain. That is exactly what our algorithm is doing. Cormode used 4-wise %independent ToW sketches. If they had proposed i.i.d. Gaussian sketches, then their solution would be identical to ours. We want to stress that although this idea might sound obvious, nobody has ever realized %that this wavelet-based solution is also an ERS solution.}

A referee asked whether the 1D data cube solution proposed in~\cite{onepasswavelet, dyahist} can be extended to $d$D using the same aforementioned ERS avoidance strategy of 
maintaining the sketches in the wavelet domain as used in~\cite{cormodewavelettracking}.  In retrospect, this solution approach would work, but unlikely to be taken 
since it is counterintuitive and still ``two hops away'' (from the right solution) as explained above.  Indeed, authors of~\cite{onepasswavelet, dyahist} unsurprisingly took the much more intuitive approach of
maintaining sketches in the time domain and as a result had to confront the $d$D Gaussian or Rademacher-ERS problem as explained in \autoref{sec:datacube}.

%{\bf Actually, our comparisons with the mentioned works~\cite{cormodewavelettracking, onepasswavelet, dyahist} above also apply to this problem.
%In short, if one starts with the idea of maintaining sketches in the wavelet domain,
%then it is possible to ``avoid'' ERS, and moreover a deeper study from that point leads to our $d$D Gaussian-ERS solution. However, the above path seems to be counter-intuitive for most researchers, who follow the more intuitive path of maintaining sketches in time domain.
%All of these works on the latter path~\cite{onepasswavelet, dyahist} requires a $d$D Gaussian or Rademacher-ERS solution as is explained in \autoref{sec:datacube}.}

%{\bf [old] The answer to this question is negative for the following reason.  
%Either the ToW or the GToW sketches are used in all three FAWT and data cube solutions~\cite{onepasswavelet, dyahist, cormodewavelettracking}, whether the sketches are maintained 
%in the time domain or in the wavelet domain.  
%As explained in \autoref{sec:datacube}, to use such a sketch $A(t)$ for answering a data cube query, we have to compute 
%$A(t)\cdot S[\vec{\mathbf{l}},\vec{\mathbf{u}})$, which involves computing the range sum $S[\vec{\mathbf{l}},\vec{\mathbf{u}})$.  } 

Now we highlight a key difficulty that we believe has prevented authors of~\cite{onepasswavelet, dyahist, muthukrishnanhistogram} from solving the $d$D-ERS problem and 
extending their FAWT and data cube solutions from 1D to $d$D:  The Rademacher or Gaussian RVs underlying the sketches need to be both 4-wise independent
and efficiently range-summable, and conventional wisdom (until our work) has it that a magic hash function family is needed to achieve both.  
Authors of~\cite{onepasswavelet, muthukrishnanhistogram} tried to extend a magic hash function family, that induces such Rademacher RVs in 1D, to $d$D.
However, as explained earlier, a $d$D Rademacher-ERS solution is unlikely to exist.  Authors of~\cite{dyahist} proposed the 1D-DST that laid the foundation of this work 
and our prior work~\cite{1ddyadic}.  A key innovation of~\cite{dyahist} is that the 1D-ERS is achieved via a 1D-DST instead of a magic hash function.  
However, their DST-based 1D Gaussian-ERS solution still relies on a magic hash function, called Nisan's PRG (Pseudorandom Generator)~\cite{nisan}, to provide 4-wise independence among the underlying Gaussian RVs.
The use of Nisan's PRG~\cite{nisan} however restricts the applicability and the extensibility of the 1D-DST approach, since Nisan's PRG provides independence guarantees only for memory-constrained applications
such as data streaming~\cite{stabledist}.  It is also not clear whether the 1D-DST approach powered by Nisan's PRG can be extended to $d$D.  
In comparison, in our $d$D-ERS solutions, both $d$D-ERS and 4-wise independence are provided by the specially engineered $d$D-DST.
As a result, a magic hash function family is no longer needed, 
% whereas 
since the hash values produced by a hash function are no longer required to be efficiently range summable.

%the $d$D-DST (that exists for Gaussian and Poisson) data structure is engineered to provide $k$-wise independence among underlying RVs.
%Each underlying hash function, one the other hand, only needs to output $k$-wise independent hash values, which as mentioned earlier is easy to achieve.  
%Note $k$-wise independence among underlying RVs is fundamentally different and much harder to attain than $k$-wise independence among hash values (produced by a hash function), since 
%each underlying RV can be a weighted sum of $O(\log^d \Delta)$ such hash values, as we will elaborate in \autoref{sec:kwise}.  
%{\bf JFM: I think the difficulty here is that the underlying RVs need to be range-summable, while the hash values are not.}

Finally, we state a key difference between this work and~\cite{onepasswavelet, dyahist, muthukrishnanhistogram, cormodewavelettracking} with respect to wavelets.  In this work, ERS is the end and wavelets is the means,
whereas in~\cite{onepasswavelet, dyahist, muthukrishnanhistogram} it is the other way around.  In~\cite{cormodewavelettracking}, wavelets is the end, but~\cite{cormodewavelettracking} cleverly avoids using 
ERS as the means as just explained.

\section{Our Solution to $\bf{d}$D Gaussian-ERS}\label{sec:dyasim}
%focus on the mathematical representation of our algorithms, so we
In this section, we describe our $d$D Gaussian-ERS solution that answers a range-sum query in $O(\log^d\Delta)$ time.  
To explain this solution with best clarity, for now we require it to provide the aforementioned ideal guarantee that the $\Delta^d$
underlying RVs are {\it mutually independent}, with the understanding that this requirement affects only the space complexity of 
our solution.  In the next section, this requirement will be relaxed to these RVs being $k$-wise independent, and as a result, 
the space complexity of our solution is reduced to $O(\log^d\Delta)$.  

%that has the all-wise independence guarantee.
Our solution can be summarized as follows.  
%Our wavelet-based approach for $d$D Gaussian-ERS is based on the following two observations.  
Let $\vec{\mathbf{x}}$ denote the $\Delta^d$ underlying standard Gaussian RVs, namely $X_{\vec{\mathbf{i}}}$ for $\vec{\mathbf{i}} \in [0, \Delta)^d$, arranged (in the dictionary order of $\vec{\mathbf{i}}$) 
into a $\Delta^d$-dimensional vector.  Then, after the $d$D Haar wavelet transform (HWT) is performed on $\vec{\mathbf{x}}$, 
we obtain another $\Delta^d$-dimensional vector $\vec{\mathbf{w}}$ whose scalars are the 
HWT coefficients of $\vec{\mathbf{x}}$.  Our solution builds on the following two observations.
The first observation is that scalars in $\vec{\mathbf{x}}$ are i.i.d. standard Gaussian RVs if and only if scalars in $\vec{\mathbf{w}}$ are (see \Cref{lem:correctness}).  
The second observation is that the answer to any $d$D range-sum query 
%$(X_{\vec{\mathbf{i}}})$'s
can be expressed as a weighted sum of $O(\log^d\Delta)$ scalars (HWT coefficients) in $\vec{\mathbf{w}}$ (see~\Cref{lem:log2delta}).  
Our algorithm
%to appear in~\autoref{sec:1dalgorithm} (on 1D) and~\autoref{sec:2dalgo} (on 2D),
is simply to {\it generate and remember} only these $O(\log^d\Delta)$ HWT coefficients (that participate in this range-sum query).  
% \textbf{Given any range-sum query,  it computes such a weighted sum  using $O(\log^d\Delta)$ scalar of $\vec{\mathbf{w}}$ that are either newly generated or remembered before}, and hence has a time complexity of
%$O(\log^d\Delta)$.  
Our solution satisfies the 
%aforementioned consistency requirement by the linearity of HWT,  as will be shown in~\autoref{sec:1dalgorithm}, 
%and the 
correct distribution requirement (with mutual independence guarantee) by the first observation.
Since the first observation is true only when the target distribution is Gaussian,
%For example, our dyadic simulation solution satisfies the first requirement by construction
%and satisfies the second requirement under an idealized assumption (\emph{cf.}~\autoref{lem:correctness}).
this HWT-based solution does not work for any other target distribution. 
%Using \textbf{truly random hash functions} (to be introduced shortly), we can fix $\Delta^d$ i.i.d. standard Gaussian underlying RVs and generate any $d$D rectangular range-sum of them in $O(\log^d\Delta)$ time.

%\textbf{We begin by precisely stating the $d$-dimensional ($d$D) Gaussian-ERS problem, which consists of two requirements.
%Before we describe the dyadic simulation approach, we state the precise problem statement of ERS, which consists of First, the underlying RVs $\{X_{\vec{\mathbf{i}}}\}$ for each $\vec{\mathbf{i}} =(i_1, i_2, \cdots, i_d)\in [\Delta]^d$ are i.i.d. standard Gaussian RVs.
%Second, every range-sum $S[\vec{\mathbf{l}, \vec{\mathbf{u})$ is equal to  $\sum_{\vec{\mathbf{i}\in[\vec{\mathbf{l}, \vec{\mathbf{u})}X_{\vec{\mathbf{i}}$.Second, given any range defined by $\vec{\mathbf{l}}$ and $\vec{\mathbf{u}}$, its range-sum $S[\vec{\mathbf{l}}, \vec{\mathbf{u}})$, defined as$\sum_{\vec{\mathbf{i}}\in[\vec{\mathbf{l}}, \vec{\mathbf{u}})}X_{\vec{\mathbf{i}}}$, can be computed in $O(polylog(\Delta))$ time. } \textbf{This notation will be generalized to higher dimensions later.}$\mathbbm{1}_R$, we do not follow thenotation

%In this section, we describe our HWT-based $d$-dimensional ($d$D) Gaussian-ERS solution that can compute any $d$D range-sum in $O(\log^d\Delta)$ time. 

In the following, we first introduce the concept of the dyadic simulation tree (DST) in 1D in~\autoref{sec:dst1D}.
Then, we show that 1D DST is equivalent to 1D HWT in the Gaussian case and present our HWT-based Gaussian-ERS algorithm for 1D, in~\autoref{sec:wavelet1d}.
Finally, we describe our HWT-based Gaussian-ERS algorithms for 2D and $d$D in~\autoref{sec:2dwavelet} and~\autoref{sec:waveletdd}, respectively.

\subsection{A Brief Introduction to DST}\label{sec:dst1D}

%In this section, we briefly describe the one-dimensional (1D) 
In this section, we briefly introduce the 
concept of the DST, which as mentioned earlier was proposed in~\cite{1ddyadic} as a general solution approach to the one-dimensional ($1$D) ERS problems for  
arbitrary target distributions.

\begin{figure*}
	\centering
	\begin{subfigure}[t]{0.27\textwidth}
		\centering
		\includegraphics[width=\textwidth]{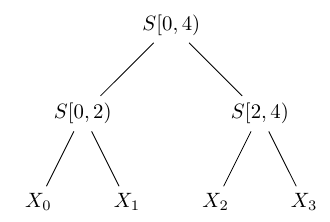}
			\centering
		\caption{A general DST.}\label{fig:1ddst}
	\end{subfigure}
		\begin{subfigure}[t]{0.68\textwidth}
		\centering
		\includegraphics[width=\textwidth]{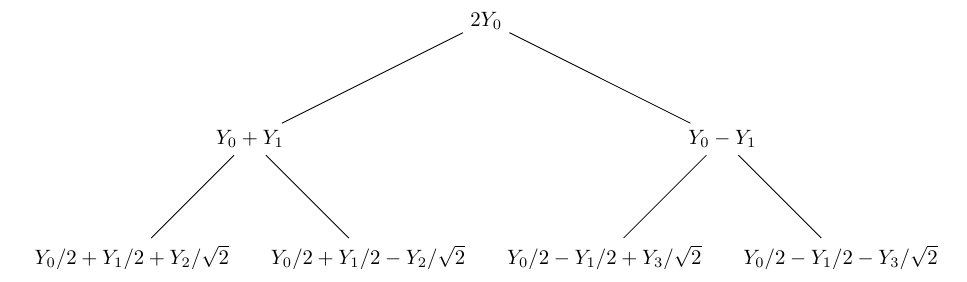}
			\centering
		\caption{A Gaussian-DST.}\label{fig:gaudst}
	\end{subfigure}
\caption{Illustrations of a general DST and a Gaussian-DST with $\Delta=4$.}
\end{figure*}

We say that $[l, u)$ is a 1D \emph{dyadic range} if there exist integers $j \ge 0$ and $m \ge 0$ such that $l = j \cdot 2^m$ and $u = (j+1) \cdot 2^m$. 
We call the sum on a dyadic range a \emph{dyadic range-sum}.  Note that any underlying RV $X_i$ is a dyadic range-sum (on the dyadic range $[i, i+1)$).
%Recall that the 1D universe is $[0, \Delta)$ and its size is $\Delta$.  
Let each underlying RV $X_i$ 
%indexed by a scalar $i\in [0, \Delta)$ 
have standard Gaussian distribution $\mathcal{N}(0, 1)$. 
In the following, we focus on how to compute a dyadic range-sum, since any (general) 1D range can be ``pieced together'' using at most 
$2\log_2 \Delta$ dyadic ranges~\cite{rusu-jour}.
%Dyadic range  can be efficiently computed in a DST~\cite{1ddyadic}.
%In a dyadic simulation solution, Unlike the naive solution, our dyadic simulation approach, as the sum of $U$ i.i.d. standard Gaussian RVs, ,dyadic range-sumsIn a DST, the value of each underlying RV (as with each dyadic range-sum) is computed in a counter-intuitive ``top-down'' manner.  are  using a dyadic simulation tree (DST) of binary splits in a ``top-down'' manner (to be explained in~\autoref{sec:wavelet1d}). 
We illustrate the process of computing dyadic range-sums using a ``small universe'' example (with $\Delta= 4$) shown in \autoref{fig:1ddst}.  To begin with, the total sum of the universe $S[0, 4)$ sitting at the root of the tree is generated directly from its distribution $\mathcal{N}(0, 4)$. Then, $S[0, 4)$ is split into two children, the half-range-sums $S[0, 2)$ and $S[2, 4)$, such that RVs $S[0, 2)$ and $S[2, 4)$ sum up to $S[0, 4)$, are (mutually) independent, and each has distribution $\mathcal{N}(0, 2)$. This is done by generating the RVs $S[0, 2)$ and $S[2, 4)$ from a 
conditional (upon $S[0, 4)$) distribution that will be specified shortly. Afterwards, $S[0, 2)$ is split in a similar way into two i.i.d. underlying RVs $X_0$ and $X_1$, and so is $S[2, 4 )$ (into
$X_2$ and $X_3$).  As shown in~\autoref{fig:1ddst}, the four underlying RVs are the leaves of the DST.

We now specify the aforementioned conditional distribution used for each split. Suppose the range-sum to split consists of $2n$ underlying RVs, and that its value is  equal to $z$. The lower half-range-sum $S_l$ (the left child in~\autoref{fig:1ddst}) is generated from the following conditional pdf (or pmf): 
\begin{equation}\label{eq:dstsplit}
	f(x\mid z) = \phi_n(x)\phi_n(z-x)/\phi_{2n}(z),
\end{equation}
where $\phi_n(\cdot)$ is the pdf (or pmf) of $X^{*n}$, the $n^{th}$ convolution power of the target distribution, and $\phi_{2n}(\cdot)$ is the pdf (or pmf) of $X^{*2n}$. Then, the upper half-range-sum (the right child) is defined as $S_u\triangleq z-S_l$.  
It was shown in~\cite{1ddyadic} that splitting a (parent) RV using this conditional distribution guarantees that the two resulting RVs $S_l$ and $S_u$ are i.i.d.   
This guarantee holds regardless of the target distribution. %probability distribution of the parent RV.  
However, {\it computationally efficient} procedures for generating an RV $S_l$ with distribution $f(x\mid z)$ 
are found only when the target distribution is one of the few ``nice''  distributions:   
Gaussian, Cauchy, and Rademacher as shown in~\cite{1ddyadic}, and Poisson as shown in~\autoref{sec:2ddyasim}.
  
%shown to guarantee the independence of the underlying RVs for arbitrary target distribution (including Cauchy and random walks). underlying RVswill be 
%the distribution of the sum of $n$ i.i.d. underlying RVs, 

Among them, Gaussian distribution has a nice property that an RV $S_l$ with distribution $f(x\mid z)$ can be generated as a linear combination of $z$ and a ``fresh'' standard Gaussian RV $Y$ as $S_l\triangleq z/2 + \sqrt{n/2}\cdot Y$, 
since if we plug Gaussian pdfs  $\phi_n(\cdot)$ and $\phi_{2n}(\cdot)$ into~(\ref{eq:dstsplit}), $f(x\mid z)$ is precisely the pdf of $\mathcal{N}(z/2, n/2)$.  Here, $Y$ being ``fresh'' means it is independent of all other RVs.
%, \eg generated using a \textbf{truly random hash function.}

%= z/2 - \sqrt{n/2} Wdistribution $\mathcal{N}(z/2, n/2)$, by defining  comes from the aforementioned list of i.i.d. RVs $\mathcal{W}$ and is used only once.]

%In~\cite{dyahist, 1ddyadic}, similar DST constructions (tree representations) that differ only at the this conditional distribution are also described for Cauchy distribution and single-step random walks, but for Gaussian distribution only,  %as follows:  $\mathcal{N}(z/2, n/2)$ has the same distribution as 
%Since the DST only cares about the distribution
 %has  $z/2 + \sqrt{n/2} W$, where . first define a standard Gaussian RV $W$, and then let.The left half-range-sum is equivalently defined as
This linearly decomposable property has a pleasant consequence that \emph{every} dyadic range-sum generated by this $1$D Gaussian-DST can be recursively decomposed to a linear combination of some i.i.d. standard Gaussian RVs, as illustrated in~\autoref{fig:gaudst}. In this example, let $Y_0, Y_1, Y_2$ and $Y_3$ be four i.i.d. standard Gaussian RVs. %For the interest of space, we rotate the DST such that the root is on the left (instead of top) and the leaves are on the right. 
The total sum of the universe $S[0, 4)$ is written as $2Y_0$, because they have the same distribution $\mathcal{N}(0, 4)$. Then, it is split into two half-range-sums $S[0,2)\triangleq Y_0 + Y_1$ and $S[2, 4)\triangleq Y_0 - Y_1$ using the linear decomposition above with $z=2Y_0$ and a fresh RV $Y_1$. Finally, $S[0,2)$ and $S[2, 4)$ are similarly split into the four underlying RVs using fresh RVs $Y_2$ and $Y_3$, respectively. %\textbf{[redundant?] For example, $X_0\triangleq S[0,2)/2 + 1/\sqrt{2} \cdot Y_2 = Y_0/2 + Y_1/2 + Y_2/\sqrt{2}$.}

%which leads to an equivalent wavelet representation. In the following example, we illustrate this observation for a small universe $U=8$. Let $V, W_{0, 0}, W_{1, 0}, W_{1, 1}, W_{2, 0}, W_{2, 1}, W_{2, 2}, W_{2, 3}$ be eight i.i.d. standard Gaussian RVs such that $V$ defines the range-sum of the universe and $W_{m, j}$ is used in the split of the dyadic interval $[j\cdot U2^{-m}, (j+1)\cdot U2^{-m})$. At first, the range-sum of the universe is defined as $S[0, 8)\triangleq \sqrt{8}V$, since it has distribution $\mathcal{N}(0, 8)$. By the decomposition above, the split of $S[0, 8)$ to $S[0, 4)$ and $S[4, 8)$ is defined as $S[0, 4) \triangleq S[0, 8)/2 + \sqrt{2}W_{0, 0}=\sqrt{2}V + \sqrt{2}W_{0, 0}$ and $S[4, 8) \triangleq S[0, 8) - S[0, 4) = \sqrt{2}V - \sqrt{2}W_{0, 0}$. %This definition is equivalent to the conditional distribution above, because the $S[0, 4)$ and $S[4, 8)$ have independent distribution $\mathcal{N}(0, 4)$.Similarly, $S[0, 2) \triangleq S[0, 4)/2 + W_{1, 0} = V/\sqrt{2} + W_{0, 0}/\sqrt{2} + W_{1, 0}$, and the first underlying RV $X_0 = S[0, 1) \triangleq S[0, 4)/2 + W_{2, 0}/\sqrt{2} = V/\sqrt{8} + W_{0, 0}/\sqrt{8} + W_{1, 0}/2 + W_{2, 0}/\sqrt{2}$.

\subsection{HWT Representation of 1D Gaussian-DST}\label{sec:wavelet1d}
%To show our scheme, we first mathematically define how each underlying RV is generated from the truly random hash function in~\autoref{sec:1dmath}. 
%Then, we show in~\autoref{sec:1dalgorithm} the algorithm that returns any range-sum of the RVs defined above in $O(\log \Delta)$ time. 

%The sole purpose of this introduction is to 
%show that, 

In this section, we show that when the target distribution is Gaussian, 
a DST is mathematically equivalent to a Haar wavelet transform (HWT) in the 1D case.  We will also show that this equivalence carries over to higher dimensions.  
Note that this equivalence does not apply to 
any target distribution other than Gaussian, and hence the HWT representation cannot replace 
the role of DST in general.
In the following, we describe in~\autoref{sec:1dalgorithm} our HWT-based 1D Gaussian-ERS solution that has $O(\log\Delta)$ time complexity, after making some mathematical preparations in~\autoref{sec:1dmath}.

\subsubsection{Mathematical Preliminaries}\label{sec:1dmath}
It is not hard to verify that, if we apply HWT (to be specified soon) to the four underlying RVs shown in~\autoref{fig:gaudst}, namely
$X_0 = Y_0/2 + Y_1/2 + Y_2/\sqrt{2}$, $X_1 = Y_0/2 + Y_1/2 - Y_2/\sqrt{2}$, $X_2 = Y_0/2 - Y_1/2 + Y_3/\sqrt{2}$, and $X_3 = Y_0/2 - Y_1/2 - Y_3/\sqrt{2}$, then the four HWT coefficients we obtain are precisely $Y_0, Y_1, Y_2, Y_3$, respectively.  In other words, we have  $\vec{\mathbf{y}} = H_4 \vec{\mathbf{x}}$, where
$\vec{\mathbf{x}} \triangleq (X_0, X_1, X_2, X_3)^T$, $\vec{\mathbf{y}} \triangleq (Y_0, Y_1, Y_2, Y_3)^T$, and $H_4$ is the $4\times 4$ Haar matrix $H_4$.  %Note that $H_4^T$ is also the inverse of $H_4$, since $H_4$ is orthonormal (\autoref{lem:orthoh}).
This example is illustrated as a matrix-vector multiplication in~\autoref{fig:dhwt1d}.
% In, we show a numerical example on how a realization of $\vec{\mathbf{x}$ is generated from its 4 wavelet coefficients in $\vec{\mathbf{y}$.

%\textbf{Too early for this example? Change Figure 2 to forward transform?}
%the well-known $4\times 4$ Haar matrix gives a  of the formula $\vec{\mathbf{x} = H_4^T \vec{\mathbf{y}$.
%We can compactly write this example in the vector notation.  Then, , where 
%Here $H_4^T$ is the following matrix. 
\begin{comment}
\[ H_4^T = \left(\begin{array}{cccc}
	1/2 &  1/2 & 1/\sqrt{2}   & 0             \\
	1/2 &  1/2 &  -1/\sqrt{2} & 0              \\
	1/2 &  -1/2 &  0               & 1/\sqrt{2}  \\ 
	1/2 &  -1/2 & 0               & -1/\sqrt{2} \\
\end{array}\right)\;.\]
\end{comment}

\begin{figure}
	\centering\includegraphics[width=0.7\textwidth]{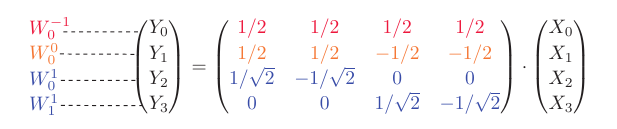}
	\caption{An illustration of the  HWT formula $\vec{\mathbf{y}} = H_4 \vec{\mathbf{x}}  $.%=  \begin{pmatrix}
%  \textcolor{red}{\psi_{0}^{-1}}  & \textcolor{orange}{\psi_{0}^{0}}  
 % & \textcolor{blue}{\psi_{0}^{1}} & \textcolor{blue}{\psi_{1}^{1}} \end{pmatrix} \cdot \vec{\mathbf{y}$
	}\label{fig:dhwt1d}
\end{figure}

The above example in which $\Delta = 4$ can be generalized to an arbitrary universe size $\Delta$ (that is a power of $2$) as follows.
In general, HWT is defined as $ \vec{\mathbf{w}} = H_{\Delta} \vec{\mathbf{x}}$, where $\vec{\mathbf{w}}$ and $\vec{\mathbf{x}}$ are both $\Delta$-dimensional vectors, and
$H_{\Delta}$ is a $\Delta\times \Delta$ Haar matrix.
To simplify notations, we drop the subscript $\Delta$ in the sequel.
In wavelet terms, $\vec{\mathbf{x}}$ is called a discrete signal vector and $\vec{\mathbf{w}}$ is called the \emph{HWT coefficient} vector.  
Clearly, the $i^{th}$ HWT coefficient is the inner product between $\vec{\mathbf{x}}$ and the $i^{th}$ row of $H$, for $i = 0, 1,  \cdots, \Delta-1 $.
In the wavelet theory, we index each HWT coefficient as $W^m_j$ (instead of $W_i$) for $m=-1, 0, 1, \cdots, \log_2\Delta -1$ and $j =0, 1, \cdots, 2^{m^+}-1$ (where $m^+\triangleq \max\{0, m\}$) in the dictionary order of $(m, j)$, and refer to the corresponding 
row (transposed into a column vector)  in $H$  that computes $W^m_j$  as the \emph{HWT vector} $\vec{\bm{\psi}}^m_j$.  Hence we have $W^m_j \triangleq \langle\vec{\mathbf{x}}, \vec{\bm{\psi}}^m_j\rangle$ by definition.
In wavelet terms, parameter $m$ is called scale and parameter $j$ is called  location.
In~\autoref{fig:dhwt1d}, the 4 HWT coefficients and 4 HWT vectors from top to bottom are on 3 different scales ($-1, 0$, and $1$) and are ``assigned'' 3 different colors accordingly.

We define the \emph{indicator vector} of a 1D range $R$, denoted as $\mathbbm{1}_R$, as a $\Delta$-dimensional 0-1 vector, the $i^{th}$ scalar of which takes value $1$ if $i\in R$ and $0$ otherwise, 
for $i = 0, 1, \cdots, \Delta-1$.  Throughout this paper, the indicator vectors are the only vectors that are not written in boldface with a rightward arrow on the top.  
We now specify the HWT vectors. Every HWT vector $\vec{\bm{\psi}}^m_j$ is normalized such that $\| \vec{\bm{\psi}}^m_j\|_2 = 1$. 
The first HWT vector $\vec{\bm{\psi}}^{-1}_{0}\triangleq \Delta^{-1/2}\cdot \mathbbm{1}_{[0,\Delta)}$ is special:  Its corresponding coefficient $W^{-1}_0$ reflects the scaled 
(by $\Delta^{-1/2}$) range-sum of the entire universe, whereas
every other HWT coefficient is the (scaled) 
difference of two range-sums.  
Every other HWT vector $\vec{\bm{\psi}}^m_j$, for $m =0, 1, \cdots, \log_2\Delta-1$ and $j=0,1, \cdots, 2^m-1$, 
corresponds to the dyadic range $I^m_j \triangleq [j\Delta/2^m, (j+1)\Delta/2^m)$
in the sense the latter serves as the support of the former: $\vec{\bm{\psi}}^m_j$ is defined by setting the first half of $I^m_j$ to the value $\sqrt{2^{m}/\Delta}$, the second half of $I^m_j$ to the value $-\sqrt{2^{m}/\Delta}$, and the rest of the universe $[0, \Delta)\setminus I^m_j$ to the value $0$. Note that $\vec{\bm{\psi}}^m_j$ has the same number of scalars with value $\sqrt{2^{m}/\Delta}$ as those with value $-\sqrt{2^{m}/\Delta}$, so $\langle\vec{\bm{\psi}}^m_j, \mathbbm{1}_{I^m_j}\rangle= 0$.
%Based on the above properties of HWT vectors $\vec{\bm{\psi}}^m_j$, it is not hard to find that 
From the definition above, $H$ is known to be orthonormal~\cite{nievergeltwavelet}, so the following theorem can be applied to it.
\begin{theorem}[\cite{laubmatrixbook}]\label{lem:orthoh}
Let $M$ be an $n\times n$ matrix.  If $M$ is orthonormal, then it has the following two properties: 
\begin{enumerate}
	\item $M^T = M^{-1}$, and $M^T$ is also orthonormal.
	\item Given any two $n$-dimensional vectors $\vec{\mathbf{x}}, \vec{\mathbf{y}}$, we have $\langle \vec{\mathbf{x}}, \vec{\mathbf{y}} \rangle = \langle M\vec{\mathbf{x}}, M\vec{\mathbf{y}} \rangle $.
\end{enumerate}
\end{theorem}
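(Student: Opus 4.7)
The plan is straightforward: both properties follow from the defining relation $M^T M = I$ of an orthonormal matrix, so the proof is essentially bookkeeping. This is why the authors cite Laub's matrix-analysis textbook rather than working it out in the paper.

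First I would establish $M^T M = I$ by reading off entries: the $(i,j)$ entry of $M^T M$ is the inner product of the $i$th and $j$th columns of $M$, and orthonormality of the columns makes this equal to $\delta_{ij}$. Given $M^T M = I$ and the hypothesis that $M$ is square, a one-sided inverse is automatically two-sided, so $M M^T = I$ as well. This yields $M^T = M^{-1}$. The identity $M M^T = I$ also says that the columns of $M^T$ (equivalently the rows of $M$) form an orthonormal set, so $M^T$ is itself orthonormal. This settles Property 1.

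For Property 2, I would expand the inner product as a matrix product and invoke $M^T M = I$ from Property 1:
\begin{equation*}
\langle M\vec{\mathbf{x}}, M\vec{\mathbf{y}} \rangle = (M\vec{\mathbf{x}})^T (M\vec{\mathbf{y}}) = \vec{\mathbf{x}}^T (M^T M) \vec{\mathbf{y}} = \vec{\mathbf{x}}^T \vec{\mathbf{y}} = \langle \vec{\mathbf{x}}, \vec{\mathbf{y}} \rangle.
\end{equation*}

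There is no real obstacle here. The only conceptual point worth flagging is the passage from $M^T M = I$ to $M M^T = I$, which genuinely uses squareness of $M$; without that hypothesis the two identities are inequivalent, and an isometry between spaces of different dimensions need not have an orthonormal transpose. Since $H_\Delta$ and its higher-dimensional Kronecker-product analogues are square by construction, this hypothesis is automatic in the HWT setting of the paper.
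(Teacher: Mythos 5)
Your proof is correct and is exactly the standard argument; the paper itself gives no proof of this statement, simply citing it from Laub's textbook, so there is nothing to diverge from. Your remark that squareness is what upgrades $M^TM=I$ to $MM^T=I$ is a fair point to flag, and it is indeed automatic for $H$ and its Kronecker powers in the paper's setting.
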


Let $\vec{\mathbf{w}}$ be a $\Delta$-dimensional vector of i.i.d. standard Gaussian RVs.
% each of which is generated from a truly independent hash function as explained earlier.  
% that is directly drawn from the list $\mathcal{W}$ in the assumption, 
We mathematically define the vector of underlying RVs $\vec{\mathbf{x}} = (X_0, X_1, \cdots, X_{\Delta-1})^T$ as $\vec{\mathbf{x}}\triangleq H^T \vec{\mathbf{w}}$. Hence, we have $\vec{\mathbf{w}} = H\vec{\mathbf{x}}$ by the first property  in~\Cref{lem:orthoh}.
The underlying RVs defined this way are i.i.d. standard Gaussian, by the following theorem.

\begin{theorem}[Proposition 3.3.2 in~\cite{vershyninprobability}]\label{lem:correctness}
	Let $\vec{\mathbf{x}} = M\vec{\mathbf{w}}$ where $M$ is an orthonormal matrix.  Then $\vec{\mathbf{x}}$ is a vector of i.i.d. standard Gaussian RVs if and only if $\vec{\mathbf{w}}$ is.
\end{theorem}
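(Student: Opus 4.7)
The plan is to exploit the rotational invariance of the joint density of i.i.d.\ standard Gaussians, which is essentially the entire content of the theorem. The proof will be a direct change-of-variables computation, made transparent by the two properties of orthonormal matrices recorded in \autoref{lem:orthoh}.

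First I would show the forward direction: assume the scalars in $\vec{\mathbf{w}}$ are i.i.d.\ standard Gaussian, so that the joint pdf factors as
\begin{equation*}
f_{\vec{\mathbf{w}}}(\vec{\mathbf{w}}) = (2\pi)^{-n/2} \exp\!\left(-\tfrac{1}{2}\|\vec{\mathbf{w}}\|_2^2\right),
\end{equation*}
where $n$ is the dimension. Since $M$ is orthonormal, by part~1 of \autoref{lem:orthoh} it is invertible with $M^{-1} = M^T$, so the map $\vec{\mathbf{w}} \mapsto \vec{\mathbf{x}} = M\vec{\mathbf{w}}$ is a bijection with inverse $\vec{\mathbf{w}} = M^T\vec{\mathbf{x}}$. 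The Jacobian determinant is $|\det M^T| = 1$ (orthonormality forces $\det M \in \{\pm 1\}$). Applying the standard change-of-variables formula and then using part~2 of \autoref{lem:orthoh} to write
\begin{equation*}
\|M^T \vec{\mathbf{x}}\|_2^2 = \langle M^T\vec{\mathbf{x}},\, M^T\vec{\mathbf{x}}\rangle = \langle \vec{\mathbf{x}}, \vec{\mathbf{x}}\rangle = \|\vec{\mathbf{x}}\|_2^2,
\end{equation*}
I would conclude
\begin{equation*}
f_{\vec{\mathbf{x}}}(\vec{\mathbf{x}}) = f_{\vec{\mathbf{w}}}(M^T\vec{\mathbf{x}}) \cdot 1 = (2\pi)^{-n/2} \exp\!\left(-\tfrac{1}{2}\|\vec{\mathbf{x}}\|_2^2\right),
\end{equation*}
which is precisely the joint density of $n$ i.i.d.\ standard Gaussian RVs. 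Since this density factors into the product of the marginal densities, the scalars of $\vec{\mathbf{x}}$ are mutually independent and each is standard Gaussian.

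For the converse direction, I would simply note that part~1 of \autoref{lem:orthoh} also says $M^T$ is orthonormal, and from $\vec{\mathbf{x}} = M\vec{\mathbf{w}}$ we get $\vec{\mathbf{w}} = M^T\vec{\mathbf{x}}$, so the same argument applies with the roles of $\vec{\mathbf{x}}$ and $\vec{\mathbf{w}}$ (and with the matrix $M^T$ in place of $M$) swapped. This gives the ``only if'' direction without any further work.

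There is no serious obstacle here; the only point that deserves a sentence of care is the verification that an orthonormal change of variables preserves both the unit Jacobian and the exponent $\|\cdot\|_2^2$, both of which are immediate consequences of the two properties already proved in \autoref{lem:orthoh}. The whole argument is essentially one display equation, and since the theorem is cited from \cite{vershyninprobability}, a short self-contained derivation along the above lines is all that is needed.
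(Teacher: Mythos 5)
Your proof is correct: the change-of-variables computation with unit Jacobian, combined with norm preservation under the orthonormal map (both drawn from \autoref{lem:orthoh}), is exactly the standard rotation-invariance argument, and the converse follows by symmetry since $M^T$ is again orthonormal. The paper itself offers no proof — it simply cites Proposition 3.3.2 of \cite{vershyninprobability} — and your derivation is essentially the argument given there; an equally short alternative would be to note that a Gaussian vector is determined by its mean and covariance and that $M I M^T = I$, but nothing in your write-up needs repair.
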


%As mentioned earlier, the foundation of our wavelet-based ERS solution (for any number of dimensions) is the following theorem.  

\subsubsection{Our HWT-based Algorithm for $1$D-ERS}\label{sec:1dalgorithm}
%With $H$ and $\mathbbm{1}_{[l, u)}$ defined above, 
%We are now ready to state our HWT-based solution to the Gaussian-ERS problem in 1D.   in the second paragraph of~\autoref{sec:intro}In the above formula, the first equation is by $\vec{\mathbf{w}} = H\vec{\mathbf{x}}$, from the definition of $\vec{\mathbf{x}}$ above and the f, as dictated by the consistency requirementact that $H^T$ is the inverse of $H$, and 
Given any range $[l, u)$, we compute its range-sum $S[l, u)$ as 
$\langle \vec{\mathbf{w}}, H\mathbbm{1}_{[l, u)}\rangle$, which is the sum of the HWT coefficients in $\vec{\mathbf{w}}$ weighted by the scalars in $H\mathbbm{1}_{[l, u)}$.
This weighted sum can be computed in $O(\log\Delta)$ time, because,
by~\Cref{cor:logdelta}, the $\Delta$-dimensional vector $H\mathbbm{1}_{[l, u)}$ contains only $O(\log\Delta)$ nonzero
scalars (weights), and by~\Cref{rem:o1time}, for each such scalar, its index can be located and its value computed in $O(1)$ time.   
We refer to the $O(\log\Delta)$ corresponding scalars in $\vec{\mathbf{w}}$ whose weights are nonzero as \emph{participating HWT coefficients} in the sequel.

%that are on the corresponding locations to these $O(\log\Delta)$ nonzero scalars (in $H\mathbbm{1}_{[l, u)}$) as \emph{participating scalars} in the sequel. 
%Note that $\langle \vec{\mathbf{w}}, H\mathbbm{1}_{[l, u)}\rangle$ depends on only the participating scalars of $\vec{\mathbf{w}}$.

To provide the aforementioned ideal guarantee of \emph{mutual independence} (among the $\Delta$ underlying RVs), for each such participating HWT coefficient (which is a standard Gaussian RV),
we generate the RV and remember its realization (in memory) if it has never been generated before (say for answering an earlier range-sum query), 
or retrieve its realization from memory otherwise.   The space complexity of this algorithm is $O(\min\{T\log \Delta , \Delta\})$, since each of the $T$ range-sum queries involves $O(\log\Delta)$ participating HWT coefficients.
This algorithm satisfies the aforementioned consistency requirement, because $\langle \vec{\mathbf{w}}, H\mathbbm{1}_{[l, u)}\rangle = \langle H\vec{\mathbf{x}}, H\mathbbm{1}_{[l, u)}\rangle = \langle \vec{\mathbf{x}}, \mathbbm{1}_{[l, u)}\rangle = X_{l} + X_{l+1} + \cdots + X_{u-1}$.  The second equation above is by the second property  in~\Cref{lem:orthoh}.

\begin{theorem}\label{cor:logdelta}
	Given any range $[l, u) \subseteq [0, \Delta)$, $H \mathbbm{1}_{[l, u)}$ contains at most $2\log_2\Delta +2$ nonzero scalars.
\end{theorem}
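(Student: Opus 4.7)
The plan is to exploit the support and cancellation structure of the Haar vectors $\vec{\bm{\psi}}^m_j$ and show, scale by scale, that only $O(1)$ rows of $H$ can yield a nonzero inner product with $\mathbbm{1}_{[l,u)}$. Since the entries of $H\mathbbm{1}_{[l,u)}$ are exactly the numbers $\langle \vec{\bm{\psi}}^m_j, \mathbbm{1}_{[l,u)}\rangle$ indexed by $(m,j)$, the theorem reduces to counting the pairs $(m,j)$ for which this inner product is nonzero.

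First, I would invoke the explicit formula for $\vec{\bm{\psi}}^m_j$ from \autoref{sec:wavelet1d}: for $m\ge 0$, letting $A_+$ and $A_-$ denote the first and second halves of the dyadic support $I^m_j$, one has
$$\langle \vec{\bm{\psi}}^m_j, \mathbbm{1}_{[l,u)}\rangle = \sqrt{2^m/\Delta}\,\bigl(|A_+\cap [l,u)| - |A_-\cap [l,u)|\bigr).$$
The key observation is that this quantity vanishes in two regimes: (i) if $I^m_j\cap [l,u) = \emptyset$, both counts are zero; (ii) if $I^m_j\subseteq [l,u)$, then $A_+$ and $A_-$ lie entirely in $[l,u)$, so both counts equal the common value $|A_+|=|A_-|$. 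Hence a nonzero coefficient forces $I^m_j$ to \emph{straddle} a boundary of $[l,u)$, i.e., to meet $[l,u)$ without being contained in it.

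Next, I would bound, at each fixed scale $m\in\{0,1,\ldots,\log_2\Delta-1\}$, the number of straddling intervals. Since $\{I^m_j\}_{j=0}^{2^m-1}$ partitions $[0,\Delta)$, the dyadic intervals that intersect $[l,u)$ form a contiguous run $I^m_{j_1},I^m_{j_1+1},\ldots,I^m_{j_2}$, and every interval strictly between them is fully contained in $[l,u)$, contributing $0$ by case (ii). So only $I^m_{j_1}$ and $I^m_{j_2}$ can give nonzero entries at scale $m$ (and only one when $j_1=j_2$), yielding at most two nonzeros per scale. Summing over the $\log_2\Delta$ scales $m\in\{0,\ldots,\log_2\Delta-1\}$ gives at most $2\log_2\Delta$ nonzeros.

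Finally, the exceptional scale $m=-1$ is handled separately: since $\vec{\bm{\psi}}^{-1}_0 = \Delta^{-1/2}\mathbbm{1}_{[0,\Delta)}$, its coefficient is $\Delta^{-1/2}(u-l)$, contributing at most one additional nonzero. Altogether this gives at most $2\log_2\Delta + 1\le 2\log_2\Delta+2$ nonzeros, as claimed. The main obstacle is the careful bookkeeping of half-open-interval boundary cases: one must verify that when an endpoint of $I^m_j$ coincides with $l$ or $u$, the interval falls into regime (i) or (ii) and contributes $0$, so that only strictly straddling intervals count, and there are at most two such per scale. Once this is cleanly checked the bound is immediate.
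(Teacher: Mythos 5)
Your proof is correct and follows essentially the same route as the paper: the paper's Lemma~\ref{lem:complexity1d} uses the identical three-case analysis (disjoint, contained with cancellation, straddling an endpoint) to get at most two nonzero coefficients per scale, and Theorem~\ref{cor:logdelta} then follows by multiplying by the $\log_2\Delta+1$ scales. Your separate treatment of the scale $m=-1$ even yields the marginally sharper bound $2\log_2\Delta+1$, but the argument is the same.
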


%, so the complexity of computing a range-sum is polylogarithmic in the wavelet domain. In~\cite{1ddyadid}, we have shown, under the tree representation, why

%The reader may relate~\autoref{lem:complexity1d} to Corollary 1 in~\cite{1ddyadic}, which states that the computation of an arbitrary range-sum involves at most $2$ splits on each scale of the DST. These two theorems are subtly different: The graph-theoretic (DST) theorem there holds also for target distributions other than Gaussian, 

%\autoref{lem:complexity1d} 
\Cref{cor:logdelta} is a straightforward corollary of~\autoref{lem:complexity1d}, since $H$ has only $\log_2\Delta+1$ scales.

%The next lemma is the theoretical foundation building block for our high-dimensional ERS solution.

%Now, under the wavelet representation, we have a more intuitive view on this result. The range-sum $\sum_{i=l}^{u}X_i = \mathbbm{1}_{[l, u]}^T \vec{\mathbf{x} = (H_\Delta \mathbbm{1}_{[l, u]})^T \vec{\mathbf{w}$, where $\mathbbm{1}_{[l, u]}$ is the indicator vector defined in~\autoref{sec:introapplication}. From~\autoref{lem:complexity1d}, a result also known as the  in the literature, the vector $H_\Delta \cdot \mathbbm{1}_{[l, u]}$ has at most $2\log_2\Delta$ nonzero elements, each of which involves an element of $\vec{\mathbf{w}$ that correspond to a split in the computation of $S[l, u]$.(the scale $m=-1$ has only one coefficient) 

\begin{lemma}\label{lem:complexity1d}
	Given any range $[l, u)\subseteq [0, \Delta)$, $H \mathbbm{1}_{[l, u)}$ contains at most $2$ nonzero scalars on each scale.
\end{lemma}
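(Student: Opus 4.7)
The plan is to analyze, for each fixed scale $m \in \{0, 1, \ldots, \log_2\Delta - 1\}$, the scalar of $H\mathbbm{1}_{[l,u)}$ at location $j$, which by the definition of $H$ equals the inner product $\langle \vec{\bm{\psi}}^m_j, \mathbbm{1}_{[l,u)}\rangle$. From the explicit description of $\vec{\bm{\psi}}^m_j$ given just before the lemma, this inner product can be rewritten as
\[
\langle \vec{\bm{\psi}}^m_j, \mathbbm{1}_{[l,u)}\rangle = \sqrt{2^m/\Delta}\,\bigl(|L^m_j \cap [l,u)| - |U^m_j \cap [l,u)|\bigr),
\]
where $L^m_j$ and $U^m_j$ denote the lower and upper halves of the dyadic interval $I^m_j$. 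So the task reduces to counting how many locations $j$ make this difference nonzero.

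Next I would use the fact that, on a fixed scale $m$, the dyadic intervals $\{I^m_j\}_{j=0}^{2^m-1}$ partition $[0,\Delta)$. Since $[l,u)$ is a contiguous range, the indices $j$ for which $I^m_j$ meets $[l,u)$ form a contiguous block $\{j_1, j_1+1, \ldots, j_2\}$. I would then split into three cases: (i) if $I^m_j$ is disjoint from $[l,u)$, the inner product is trivially $0$; (ii) if $I^m_j$ is strictly in the interior of the block (i.e.\ $j_1 < j < j_2$), then $[l,u)$ contains $I^m_j$ entirely, so both $|L^m_j \cap [l,u)|$ and $|U^m_j \cap [l,u)|$ equal $\Delta/2^{m+1}$ and the inner product is $0$; (iii) only the two boundary indices $j_1$ and $j_2$ (which coincide if $j_1=j_2$) can possibly give a nonzero inner product.

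This immediately yields at most $2$ nonzero scalars on each scale $m \ge 0$, as claimed. To derive the corollary \autoref{cor:logdelta}, I would simply sum over the $\log_2\Delta$ non-trivial scales together with the single-coefficient scale $m=-1$ whose unique HWT vector is $\Delta^{-1/2}\mathbbm{1}_{[0,\Delta)}$, giving the bound $2\log_2\Delta + 2$.

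There is no real obstacle here: the argument is essentially a one-line geometric observation that a contiguous range can ``cut'' the dyadic partition on any fixed scale in at most two places. The only mild care needed is to confirm the degenerate cases (the range lying inside a single $I^m_j$, or sharing an endpoint with a dyadic boundary) only \emph{reduce} the count below $2$, which is immediate from the case analysis above.
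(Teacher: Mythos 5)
Your proof is correct and follows essentially the same route as the paper's: a per-scale case analysis showing disjoint and fully-contained dyadic intervals contribute zero, so only the (at most two) intervals partially overlapping $[l,u)$ — those containing $l$ or $u$ — can yield nonzero coefficients. Your explicit cancellation computation for the contained case and your handling of scale $m=-1$ (trivially one coefficient) match the paper's argument in substance.
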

\begin{proof}
On scale $m=-1$, there is only one HWT coefficient anyway, so the claim trivially holds.
We next prove the claim for any fixed $m\geq 0$.  
For each HWT vector $\vec{\bm{\psi}}^m_j$, $j= 0, 1, \cdots, 2^m-1$, we denote the corresponding HWT coefficient as $r^m_j \triangleq \langle \vec{\bm{\psi}}^m_j, \mathbbm{1}_{[l, u)}\rangle$.
It is not hard to verify that the relationship between the range $[l, u)$ and the dyadic range $I^m_j$ must be one of the following three cases.
	\begin{enumerate}
		\item $I^m_j$ and $[l, u)$ are disjoint. In this case, $r^m_j = 0$.
		\item $I^m_j \subseteq [l, u)$. In this case,  $r^m_j = \langle \vec{\bm{\psi}}^m_j, \mathbbm{1}_{I^m_j}\rangle = 0$ as explained in the second last sentence above~\Cref{lem:orthoh}. 
%		is the range-sum of $\psi^m_j$ on $I^m_j$, which is always $0$.
		\item \label{case:nonzero} Otherwise, $I^m_j$ partially intersects $[l, u)$. This case may happen only to at most two ($I^m_j$)'s: the one that covers $l$ and the one that covers $u-1$. In this case, $r^m_j$ can be nonzero.
	\end{enumerate}
\end{proof}
\begin{remark}\label{rem:o1time}
Each scalar $r^m_j$ (in $H \mathbbm{1}_{[l, u)}$) that may be nonzero can be identified and computed in $O(1)$ time as follows. 
Note $r^m_j$ may be nonzero only in the case~(\ref{case:nonzero}) above, in which  $j$ is equal to either 
$\lfloor l2^m/\Delta\rfloor$ or $\lfloor (u-1)2^m/\Delta\rfloor$.  As a result, if $r^m_j\not=0$, its value can be computed in two steps~\cite{propolyne}.
First, intersect $[l, u)$ with the first half and the second half of $I^m_j$, respectively.
Second, scale the size of the first intersection minus the size of the second by  $\sqrt{2^{m}/\Delta}$, as was explained by the third last sentence above~\Cref{lem:orthoh}. %explains why the second step is taken.}
%normalize the following difference by: the .
\end{remark}

The following lemma is a special case of~\Cref{lem:complexity1d} where $l = u-1$. 
%It will be used for establishing the correctness of our $k$-wise independence scheme in~\autoref{sec:kwise}. 
This lemma holds, because in case~(\ref{case:nonzero}) above, for $m = -1, 0, 1, \cdots, \log_2\Delta-1$, there exists a  unique dyadic interval $I^m_j$ that covers $l = u-1$ (namely, the one with $j=\lfloor l2^m/\Delta\rfloor$).

%, which will be used in, relates to~\autoref{fig:1ddst}, which shows the generation of any underlying RV involves exactly one split on each level of the DST. It is proved in a way similar to~\Cref{lem:complexity1d}, in which each nonzero scalar $r^m_j$ corresponds to the unique dyadic range $I^m_j$ that covers $i$ on scale $m$.

\begin{lemma}\label{lem:onehot}
	Given any $l\in [0, \Delta)$, $H \mathbbm{1}_{\{l\}}$ has exactly one nonzero scalar on each scale.  
\end{lemma}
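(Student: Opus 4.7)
The plan is to mirror the case-analysis proof of \autoref{lem:complexity1d}, but specialized to the singleton indicator $\mathbbm{1}_{\{i\}}$ in place of the general range indicator $\mathbbm{1}_{[l,u)}$. Writing $r^m_j \triangleq \langle \vec{\bm{\psi}}^m_j, \mathbbm{1}_{\{i\}}\rangle$, note that $r^m_j$ is simply the $i$-th coordinate of $\vec{\bm{\psi}}^m_j$, so I need to count, for each scale $m$, how many HWT vectors at that scale are nonzero at position $i$.

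I will handle the two types of scales separately. First, on scale $m = -1$ there is only one HWT vector, namely $\vec{\bm{\psi}}^{-1}_0 = \Delta^{-1/2}\mathbbm{1}_{[0,\Delta)}$, and its $i$-th coordinate is $\Delta^{-1/2} \neq 0$, giving exactly one nonzero scalar. Second, fix any scale $m \geq 0$. The key structural fact is that the $2^m$ dyadic ranges $\{I^m_j\}_{j=0}^{2^m-1}$ partition $[0, \Delta)$, so there is a \emph{unique} index $j^\star = \lfloor i \cdot 2^m / \Delta\rfloor$ with $i \in I^m_{j^\star}$. For every $j \neq j^\star$, the support of $\vec{\bm{\psi}}^m_j$ lies inside $I^m_j$, which is disjoint from $\{i\}$, so $r^m_j = 0$. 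For $j = j^\star$, the $i$-th coordinate of $\vec{\bm{\psi}}^m_{j^\star}$ equals $+\sqrt{2^m/\Delta}$ if $i$ lies in the first half of $I^m_{j^\star}$ and $-\sqrt{2^m/\Delta}$ otherwise; in either case it is nonzero. This yields exactly one nonzero scalar on scale $m$, which combined with the $m=-1$ case completes the proof.

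There is no real obstacle here; the argument reduces to the observation that $\{I^m_j\}_j$ is a partition of $[0,\Delta)$ at each scale, which was implicit in the three-case analysis of \autoref{lem:complexity1d} but is sharper for a singleton since cases 1 and 3 of that proof collapse: the range $\{i\}$ can never properly straddle the boundary of a dyadic interval, and it can never fully contain $I^m_j$ for $m \le \log_2\Delta - 1$. Hence every scale contributes exactly one nonzero entry, as claimed.
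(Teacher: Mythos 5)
Your proof is correct and follows essentially the same route as the paper, which likewise argues by specializing the case analysis of \autoref{lem:complexity1d} to the singleton $\{i\}$ and identifying, on each scale $m$, the unique dyadic range $I^m_{j^\star}$ covering $i$ as the sole source of a nonzero coordinate. Your treatment is simply a more explicit write-up of that argument, including the correct observation that a singleton can neither straddle a dyadic boundary nor contain a dyadic range of length at least two.
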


\subsection{Range-Summable Gaussian RVs in 2D}\label{sec:2dwavelet}
%Now we generalize our scheme to 2D. As in the 1D case, we first mathematically define how each underlying RV is generated from the truly random hash function in~\autoref{sec:2dmath}. Then, we show in~\autoref{sec:2dalgo} the algorithm that returns any 2D range-sum of the RVs defined above in $O(\log^2 \Delta)$ time. 

In the following, we describe in~\autoref{sec:2dalgo} our 2D Gaussian-ERS solution that has $O(\log^2 \Delta)$ time complexity, after making some mathematical preparations in~\autoref{sec:2dmath}.
% we first mathematically define the underlying RVs and then give an algorithm for computing the 2D range-sums.
\subsubsection{Mathematical Preliminaries}\label{sec:2dmath}
Like in the 1D case, our 2D Gaussian-ERS solution builds on the 2D-HWT $\vec{\mathbf{w}} = H^{\otimes 2}\vec{\mathbf{x}}$.  
Here the vector $\vec{\mathbf{x}}$ is comprised of the $\Delta^2$ underlying RVs $X_{\vec{\mathbf{i}}}$ for $\vec{\mathbf{i}}\in[0, \Delta)^2$, listed in the dictionary order; and the vector $\vec{\mathbf{w}}$ is comprised of the resulting $\Delta^2$ 
2D-HWT coefficients.  The $\Delta^2 \times \Delta^2$ 2D-HWT matrix $H^{\otimes 2}$ is the self {\it Kronecker product} (defined next) of the $\Delta\times\Delta$ 1D-HWT matrix $H$.

%\textbf{We introduce the two-dimensional (2D) HWT before generalizing our HWT-based ERS solution to 2D. A 2D-HWT transforms $\Delta^2$ numbers, namely $X_{\vec{\mathbf{i}}}$ for $\vec{\mathbf{i}}\in[0, \Delta)^2$, to $\Delta^2$ HWT coefficients. It is defined as $\vec{\mathbf{w}} = H^{\otimes 2}\vec{\mathbf{x}}$. In the above formula, $\vec{\mathbf{x}}$ is the $\Delta^2$-dimensional vector defined by arranging $X_{\vec{\mathbf{i}}}$ for $\vec{\mathbf{i}}\in[0, \Delta)^2$ in dictionary order of $\vec{\mathbf{i}}$, $\vec{\mathbf{w}}$ is the $\Delta^2$-dimensional vector of HWT wavelets, which will be explained shortly, and the 2D-HWT matrix $H^{\otimes 2}$ is the  self {\it Kronecker product} (defined next) of the 1D-HWT matrix $H$.}

%Including the scale $-1$, there are $\log\Delta + 1$ scales, and at each scale $m$, there are $2^{m^+}$ (where $m^+\triangleq max\{0, m\}$) locations Similar to the 1D case, the 2D HWT transforms $\Delta^2$ with $j =0, 1, \cdots, 2^{m^+}-1$. %$%\textbf{[Move to 1D?]} \autoref{fig:dhwt1d} 

\begin{definition}\label{def:kronecker}
	Let $A$ be a $p\times q$ matrix and $B$ be a $t\times v$ matrix. Then their Kronecker product $A\otimes B$ is the following $pt\times qv$ matrix.
	\[A\otimes B \triangleq \begin{pmatrix}
		a_{11} B & \cdots & a_{1q}B \\
		\vdots & \ddots & \vdots\\
		a_{p1} B & \cdots & a_{pq} B
	\end{pmatrix}.\]
\end{definition}
\begin{comment}
	\begin{example}
		When $\Delta=2$, the 1D Haar matrix is $H_2 = \begin{pmatrix}
			1/\sqrt{2} & 1/\sqrt{2} \\ 1/\sqrt{2} & -1/\sqrt{2}
		\end{pmatrix}$, and the 2D HWT matrix $H_2^{\otimes 2}  = \begin{pmatrix}
			1/2 & 1/2 & 1/2 & 1/2 \\ 1/2 & -1/2 & 1/2 & -1/2 \\ 1/2 & 1/2 & -1/2 & -1/2 \\1/2 & -1/2 & -1/2 & 1/2
		\end{pmatrix} $ is the order-4 Hadamard matrix. Note that $H^{\otimes 2}_2$ is different from the 1D matrix $H_4$ shown above, although both matrices have the same shape. Therefore, 2D HWT is very different from the 1D transform (i.e., $H_2 \otimes H_2 \not= H_4$).
\end{example}\end{comment}

We now state two theorems concerning the Kronecker product.

\begin{theorem}[Theorem 13.3 in~\cite{laubmatrixbook}]\label{th:mixedproduct}
	Let $P, Q, R, T$ be four matrices such that the matrix products $P\cdot R$ and $Q\cdot T$ are well-defined. Then $(P\otimes Q)\cdot (R\otimes T) = (P\cdot R)\otimes (Q\cdot T)$.
\end{theorem}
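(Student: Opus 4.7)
The plan is to verify the identity by a direct block-matrix computation straight from the definition of the Kronecker product. To begin, I would fix compatible dimensions: suppose $P$ is $p\times q$ and $R$ is $q\times s$ (so $PR$ is defined and $p\times s$), and $Q$ is $t\times v$ and $T$ is $v\times w$ (so $QT$ is defined and $t\times w$). A quick dimension count then shows that $(P\otimes Q)$ is $pt\times qv$, $(R\otimes T)$ is $qv\times sw$, so their product is $pt\times sw$, which matches the size of $(PR)\otimes(QT)$. This sanity check confirms that both sides live in the same matrix space.

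The heart of the argument is that the Kronecker product as given in~\autoref{def:kronecker} already endows each factor with a compatible block structure. Concretely, $P\otimes Q$ is a $p\times q$ array of $t\times v$ blocks whose $(i,j)$-block is $P_{ij}\,Q$, and $R\otimes T$ is a $q\times s$ array of $v\times w$ blocks whose $(j,k)$-block is $R_{jk}\,T$. Because the inner block dimensions ($v$) match, I can multiply the two block matrices block-wise. The $(i,k)$-block of $(P\otimes Q)(R\otimes T)$ is therefore
\begin{equation*}
\sum_{j=1}^{q} (P_{ij}\,Q)(R_{jk}\,T) \;=\; \sum_{j=1}^{q} P_{ij} R_{jk}\,(QT) \;=\; \Big(\sum_{j=1}^{q} P_{ij} R_{jk}\Big)(QT) \;=\; (PR)_{ik}\,(QT),
\end{equation*}
where the scalars $P_{ij}$ and $R_{jk}$ are pulled past $Q$ and $T$, and the resulting scalar sum is identified as the $(i,k)$-entry of $PR$. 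By~\autoref{def:kronecker}, $(PR)_{ik}(QT)$ is exactly the $(i,k)$-block of $(PR)\otimes(QT)$, so the two matrices agree block-by-block, hence entry-by-entry.

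I do not anticipate any real obstacle. The only things requiring care are keeping the four dimension parameters straight so that both matrix products and both Kronecker products are simultaneously well-defined, and justifying the block multiplication step, which is legitimate because the column-partition of $P\otimes Q$ into $v$-wide blocks coincides with the row-partition of $R\otimes T$ into $v$-tall blocks. Apart from this bookkeeping, the identity is essentially an immediate consequence of how the Kronecker product is defined in terms of scalar-multiplied blocks.
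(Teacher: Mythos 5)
Your block-matrix computation is correct and complete: the dimension bookkeeping is right, the block partitions are compatible, and the identification of the $(i,k)$-block as $(PR)_{ik}(QT)$ is exactly what is needed. The paper itself does not prove this statement---it is quoted verbatim from the cited matrix-analysis textbook---and your argument is the standard proof of the mixed-product property, so there is nothing to compare beyond noting that your proposal fills in the omitted (cited) proof correctly.
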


\begin{theorem}[Corollary 13.8 in~\cite{laubmatrixbook}]\label{lem:kroneckerorthonormal}
	The Kronecker product of two orthonormal matrices is also orthonormal.
	% Since $H_\Delta$ is orthonormal, $H_\Delta^{\otimes 2}$, the Kronecker product of two orthonormal matrices, is also orthonormal.
\end{theorem}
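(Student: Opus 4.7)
The plan is to reduce the orthonormality of $A \otimes B$ to the orthonormality of $A$ and $B$ individually, using the mixed product property (\autoref{th:mixedproduct}) as the main tool. Recall from property 1 of \autoref{lem:orthoh} that a square matrix $M$ is orthonormal iff $M^T M = I$, so it suffices to verify that $(A \otimes B)^T (A \otimes B)$ equals an identity matrix of the appropriate size.

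First, I would establish the elementary transpose identity $(A \otimes B)^T = A^T \otimes B^T$. This follows directly from \autoref{def:kronecker}: the $(i,j)$-block of $A \otimes B$ is $a_{ij} B$, and transposing the whole block matrix swaps blocks and transposes each one, yielding the $(j,i)$-block $a_{ij} B^T$, which is precisely the $(j,i)$-block of $A^T \otimes B^T$.

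Second, assuming $A$ is $p\times p$ orthonormal and $B$ is $t\times t$ orthonormal, so that $A^T A = I_p$ and $B^T B = I_t$, I would compute
\begin{equation*}
	(A\otimes B)^T (A\otimes B) = (A^T \otimes B^T)(A \otimes B) = (A^T A)\otimes(B^T B) = I_p \otimes I_t,
\end{equation*}
where the middle equality is an application of \autoref{th:mixedproduct} with $P=A^T$, $Q=B^T$, $R=A$, $T=B$. A direct inspection of \autoref{def:kronecker} shows $I_p \otimes I_t = I_{pt}$ (the diagonal blocks are $I_t$ and the off-diagonal blocks vanish). Hence $A \otimes B$ is orthonormal.

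There is no genuine obstacle here; the only mildly non-routine step is the transpose identity $(A\otimes B)^T = A^T \otimes B^T$, which is not among the Kronecker-product facts explicitly cited in the excerpt and so should be stated (and justified in one line from \autoref{def:kronecker}) before invoking \autoref{th:mixedproduct}. Everything else is a one-line computation.
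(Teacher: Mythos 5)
Your proof is correct: the transpose identity $(A\otimes B)^T = A^T\otimes B^T$ follows as you say from \autoref{def:kronecker}, and combining it with the mixed-product property (\autoref{th:mixedproduct}) gives $(A\otimes B)^T(A\otimes B) = (A^TA)\otimes(B^TB) = I$, which is the standard argument. The paper itself offers no proof of this statement---it is imported verbatim as Corollary 13.8 of the cited matrix-analysis text---so your derivation simply supplies the routine justification that the paper delegates to the reference, using the same Kronecker-product machinery the paper already quotes.
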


%We first explain the outcome of the 2D-HWT before defining its matrix $H^{\otimes 2}$.  
Now we describe the $\Delta^2$ 2D HWT coefficients and the order in which they are listed in $\vec{\mathbf{w}}$.
Recall that in the 1D case, each HWT coefficient takes the form $W^m_j$, where $m$ is the scale, and the $j$ is the location.  
In the 2D case, each dimension has its own pair of scale and location parameters that is independent of the other dimension.  
For convenience of presentation, we refer to these two dimensions as vertical (the first) and horizontal (the second), respectively.
We denote the vertical scale and location pair as $m_1$ and $j_1$, and the horizontal pair as $m_2$ and $j_2$.  
% are for the horizonal wavelet decomposition, and the other, 
%which we denote as ..., are  for the vertical.  , $X_1, \cdots, X_{\Delta^2-1}$ = 0, 1, \cdots, 2^{m_1^+}-1
Each HWT coefficient takes the form $W^{m_1, m_2}_{j_1, j_2}$.  
In the 2D case, there are $(\log_2\Delta + 1)^2$ scales, namely $(m_1, m_2)$ for $m_1, m_2 = -1, 0, 1, \cdots, \log_2\Delta-1$.
% = -1, 0, \cdots, \log\Delta-1$ and $m_2 = -1, 0, \cdots, \log\Delta-1$.in $[-1, \log_2\Delta)^2$. 
At each scale $(m_1, m_2)$, there are $n_{m_1, m_2} \triangleq 2^{m_1^++m_2^+}$ locations, namely $(j_1, j_2)$ for $j_1 = 0, 1, \cdots, 2^{m_1^+}-1$ and $j_2 = 0, 1, \cdots, 2^{m_2^+}-1$.

We now give a 2D example in which $\Delta = 4$.
% and compare it with the earlier 1D example with the same $\Delta$ value.  
In this 2D example, there are $\Delta^2 = 16$ HWT coefficients.  To facilitate the ``color coding'' of different scales, we arrange the 16 HWT coefficients 
into a $4\times 4$ matrix $\textbf{W}$ shown in~\autoref{fig:colors}.
$\textbf{W}$ is the only matrix that we write in boldface in order to better distinguish it from its scalars ($W^{m_1, m_2}_{j_1, j_2}$)'s. % and tint them to $9$ different colors according to the scale parameter $(m_1, m_2)$.
%To add more intuition, we highlight the set of coefficients of each scale  with a different color.
\autoref{fig:colors} contains three differently colored rows of heights 1, 1, and 2 respectively, that correspond to vertical scales $m_1 = -1, 0, 1$ respectively, and contains three differently colored columns that correspond to the three horizontal scales.
Their ``Cartesian product'' contains 9 ``color cells'' that correspond to the 9 different scales (values of $(m_1, m_2)$).   For example, the cell colored in pink corresponds to scale $(1, 1)$
and contains 4 HWT coefficients $W^{1,1}_{0,0}$, $W^{1,1}_{0,1}$, $W^{1,1}_{1,0}$, $W^{1,1}_{1,1}$.
The vector $\vec{\mathbf{w}}$ is defined from $\mathbf{W}$ by flattening its 16 scalars in the row-major order, as shown at the bottom of~\autoref{fig:colors}.

\begin{figure*}[!ht]
	\centering\includegraphics[width=\textwidth]{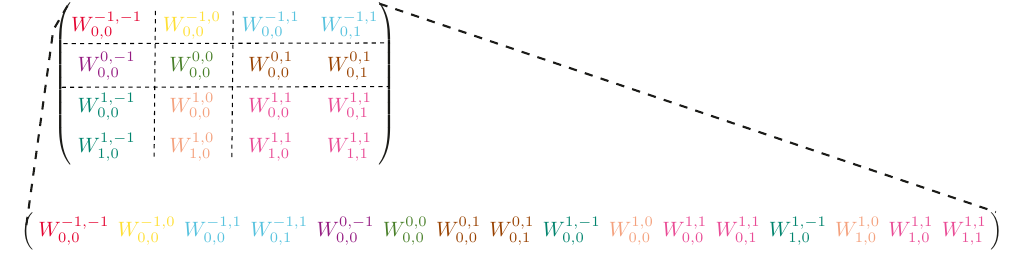}
	\caption{The 2D-HWT coefficients, arranged both as a matrix $\mathbf{W}$ and as a flattened vector $\vec{\mathbf{w}}^T$.}\label{fig:colors}
\end{figure*}
Like in the 1D case, let $\vec{\mathbf{w}}$ be a vector of $\Delta^2$ i.i.d. standard Gaussian RVs. 
As explained earlier, the vector $\vec{\mathbf{x}}$ of $\Delta^2$ underlying RVs are \emph{mathematically} defined as $\vec{\mathbf{x}}\triangleq (H^{\otimes 2})^T \vec{\mathbf{w}}$. 
The RVs in $\vec{\mathbf{x}}$ are i.i.d. standard Gaussian by~\Cref{lem:correctness}, because $H^{\otimes 2}$ is an orthonormal matrix by~\Cref{lem:kroneckerorthonormal}.
% in which each element $W_{l_1, l_2}$ is indexed by two integers $l_1$ and $l_2$ for two dimensions. directly drawn from the list $\mathcal{W}$ in the assumption that is 
\subsubsection{Our HWT-Based Algorithm for $2$D-ERS}\label{sec:2dalgo}
%Our HWT-based ERS solution in 2D is similar to that in 1D.  
 %, so we can apply just as in the 1D case.  

%Apply to the following: ,

%the indicator vector for this range is equal to the Kronecker product of two 1D indicator vectors Note that the 2D rectangular range is the Cartesian product of two 1D ranges $[\vec{\mathbf{l}, \vec{\mathbf{u}) =Note that a 2D range-sum $S[\vec{\mathbf{l}, \vec{\mathbf{u}) $  ,  . H^{\otimes 2}for similar reasons as their 1D counterparts in~\autoref{sec:1dalgorithm}

Our 2D-ERS algorithm (that guarantees mutual independence among the underlying RVs) is similar to the 1D-ERS algorithm described earlier. 
Given any 2D range $[\vec{\mathbf{l}}, \vec{\mathbf{u}}) \triangleq [l_1, u_1) \times [l_2, u_2)$, where $\vec{\mathbf{l}} = (l_1, l_2)^T$ and $\vec{\mathbf{u}} = (u_1, u_2)^T$, 
we compute its range-sum $S[\vec{\mathbf{l}}, \vec{\mathbf{u}})$ as $\langle \vec{\mathbf{w}}, H^{\otimes 2}\mathbbm{1}_{[\vec{\mathbf{l}}, \vec{\mathbf{u}})}\rangle$. 
Here the 2D indicator vector $\mathbbm{1}_{[\vec{\mathbf{l}}, \vec{\mathbf{u}})}$ is defined as the result of flattening the following $\Delta\times\Delta$ matrix in  row-major order: 
For $\vec{\mathbf{i}}\in [0, \Delta)^2$, the $\vec{\mathbf{i}}^{th}$ scalar in the matrix takes value $1$ if $\vec{\mathbf{i}}\in [\vec{\mathbf{l}}, \vec{\mathbf{u}})$ and takes value $0$ otherwise.
	This return value $\langle \vec{\mathbf{w}}, H^{\otimes 2}\mathbbm{1}_{[\vec{\mathbf{l}}, \vec{\mathbf{u}})}\rangle$ can be computed in $O(\log^2\Delta)$ time, since 
it involves generating, and computing the weighted sum of, $O(\log^2\Delta)$ participating HWT coefficients according to~\Cref{lem:log2delta}.   The space complexity is $O(\min\{T\log^2 \Delta, \Delta^2\})$
for remembering the realizations of the $O(\log^2 \Delta)$ participating HWT coefficients (per query) like that explained earlier in the 1D case.  
Our 2D-ERS algorithm meets the consistency requirement, 
because $\langle \vec{\mathbf{w}}, H^{\otimes 2}\mathbbm{1}_{[\vec{\mathbf{l}}, \vec{\mathbf{u}})}\rangle =  \langle H^{\otimes 2}\vec{\mathbf{x}}, H^{\otimes 2}\mathbbm{1}_{[\vec{\mathbf{l}}, \vec{\mathbf{u}})}\rangle  =  \langle \vec{\mathbf{x}}, \mathbbm{1}_{[\vec{\mathbf{l}}, \vec{\mathbf{u}})}\rangle = \sum _{(i_1, i_2) \in [\vec{\mathbf{l}}, \vec{\mathbf{u}})} X_{i_1, i_2}$. 

\begin{lemma}\label{lem:log2delta}
	For any 2D range $[\vec{\mathbf{l}}, \vec{\mathbf{u}}) \subseteq [0, \Delta)^2$, $H^{\otimes 2}\mathbbm{1}_{[\vec{\mathbf{l}}, \vec{\mathbf{u}})}$ has $O(\log^2 \Delta)$ nonzero scalars.
\end{lemma}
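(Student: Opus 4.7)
The plan is to reduce the 2D claim to the 1D result (\autoref{cor:logdelta}) via the Kronecker-product structure. The key observation, already noted just above the lemma statement, is that the 2D indicator vector factors as $\mathbbm{1}_{[\vec{\mathbf{l}}, \vec{\mathbf{u}})} = \mathbbm{1}_{[l_1, u_1)} \otimes \mathbbm{1}_{[l_2, u_2)}$. Applying the mixed-product property (\autoref{th:mixedproduct}) with $P = R = H$ and $Q = T = H$ acting on these two factor vectors yields
\[
H^{\otimes 2}\mathbbm{1}_{[\vec{\mathbf{l}}, \vec{\mathbf{u}})} = (H \otimes H)\bigl(\mathbbm{1}_{[l_1, u_1)} \otimes \mathbbm{1}_{[l_2, u_2)}\bigr) = \bigl(H\mathbbm{1}_{[l_1, u_1)}\bigr) \otimes \bigl(H\mathbbm{1}_{[l_2, u_2)}\bigr).
\]
So the 2D transformed indicator is itself a Kronecker product of two 1D transformed indicators.

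Next I would invoke \autoref{cor:logdelta} on each factor: $H\mathbbm{1}_{[l_1, u_1)}$ has at most $2\log_2\Delta + 2$ nonzero scalars, and similarly for $H\mathbbm{1}_{[l_2, u_2)}$. The final step is the elementary fact that if $\vec{\mathbf{a}} \in \mathbb{R}^\Delta$ has $N_1$ nonzero entries and $\vec{\mathbf{b}} \in \mathbb{R}^\Delta$ has $N_2$ nonzero entries, then $\vec{\mathbf{a}} \otimes \vec{\mathbf{b}}$ has exactly $N_1 N_2$ nonzero entries, since its $(i\Delta + j)$-th scalar is $a_i b_j$, which is nonzero iff both $a_i$ and $b_j$ are. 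Combining gives at most $(2\log_2\Delta + 2)^2 = O(\log^2\Delta)$ nonzero scalars, completing the proof.

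There is essentially no obstacle: the proof is a one-line reduction via \autoref{th:mixedproduct} plus \autoref{cor:logdelta}, and the counting of nonzeros in a Kronecker product is immediate from its definition (\autoref{def:kronecker}). The only subtlety worth flagging is making sure the flattening convention used to define $\mathbbm{1}_{[\vec{\mathbf{l}}, \vec{\mathbf{u}})}$ (row-major) is consistent with the Kronecker identity $\mathbbm{1}_{[\vec{\mathbf{l}}, \vec{\mathbf{u}})} = \mathbbm{1}_{[l_1, u_1)} \otimes \mathbbm{1}_{[l_2, u_2)}$; this is indeed the case, and the paper already asserts it just before stating the lemma. The same argument, incidentally, generalizes to $d$ dimensions with an $O(\log^d\Delta)$ bound, matching the overall complexity claim of the paper.
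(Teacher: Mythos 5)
Your proof is correct and follows essentially the same route as the paper's: factor $\mathbbm{1}_{[\vec{\mathbf{l}}, \vec{\mathbf{u}})}$ as a Kronecker product, apply the mixed-product property (\autoref{th:mixedproduct}) to get $(H\mathbbm{1}_{[l_1, u_1)}) \otimes (H\mathbbm{1}_{[l_2, u_2)})$, and invoke \autoref{cor:logdelta} on each factor. Your explicit counting of nonzeros in the Kronecker product and the remark on the row-major flattening convention are just slightly more detailed versions of steps the paper leaves implicit.
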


\begin{proof}
	It is not hard to verify $\mathbbm{1}_{[\vec{\mathbf{l}}, \vec{\mathbf{u}})}  = \mathbbm{1}_{[l_1, u_1)} \otimes \mathbbm{1}_{[l_2, u_2)}$.
	By~\Cref{th:mixedproduct}, $H^{\otimes 2}\mathbbm{1}_{[\vec{\mathbf{l}}, \vec{\mathbf{u}})} = (H \otimes H) \cdot (\mathbbm{1}_{[l_1, u_1)} \otimes \mathbbm{1}_{[l_2, u_2)})= (H\mathbbm{1}_{[l_1, u_1)}) \otimes (H\mathbbm{1}_{[l_2, u_2)})$. By~\Cref{cor:logdelta}, both $H \mathbbm{1}_{[l_1, u_1)}$ and $H\mathbbm{1}_{[l_2, u_2)}$ have $O(\log \Delta)$ nonzero scalars, so their Kronecker product has $O(\log^2 \Delta)$ nonzero scalars.
\end{proof}

\subsection{Generalization to Higher Dimensions}\label{sec:waveletdd}

Our HWT-based Gaussian-ERS solution, just like HWT itself, can be naturally generalized to higher dimensions as follows.
In dimension $d > 2$, we continue to have the inverse HWT formula $\vec{\mathbf{x}}\triangleq M^T \vec{\mathbf{w}}$, 
where $\vec{\mathbf{x}}$ is the vector of $\Delta^d$ underlying RVs (arranged in dictionary order of $\vec{\mathbf{i}}$), $\vec{\mathbf{w}}$ is the vector of their HWT coefficients (that are
i.i.d. standard Gaussian RVs), 
and $M$ is the $\Delta^d\times\Delta^d$ HWT matrix in $d$D.  Here $M\triangleq \underbrace{H\otimes \cdots \otimes H}_{d}$, where $H$ is the 1D Haar matrix described above.  Since $M$ is orthonormal by \Cref{lem:kroneckerorthonormal}, the RVs in $\vec{\mathbf{x}}$ are i.i.d. standard Gaussian by~\Cref{lem:correctness}. 

In our $d$D-ERS algorithm (that guarantees mutual independence among the underlying RVs), 
given a $d$D range $[\vec{\mathbf{l}}, \vec{\mathbf{u}})\triangleq [l_1, u_1)\times [l_2, u_2)  \times \cdots\times [l_d, u_d) $, 
its range-sum $S[\vec{\mathbf{l}}, \vec{\mathbf{u}}) $ can be computed as $\langle \vec{\mathbf{w}}, M\mathbbm{1}_{[\vec{\mathbf{l}}, \vec{\mathbf{u}})}\rangle$, because $\langle \vec{\mathbf{w}}, M\mathbbm{1}_{[\vec{\mathbf{l}}, \vec{\mathbf{u}})}\rangle = \langle M\vec{\mathbf{x}}, M\mathbbm{1}_{[\vec{\mathbf{l}}, \vec{\mathbf{u}})}\rangle = \langle \vec{\mathbf{x}}, \mathbbm{1}_{[\vec{\mathbf{l}}, \vec{\mathbf{u}})}\rangle  = \sum_{\vec{\mathbf{i}} \in [\vec{\mathbf{l}}, \vec{\mathbf{u}})} X_{\vec{\mathbf{i}}}$.  
The weighted sum
$\langle \vec{\mathbf{w}}, M\mathbbm{1}_{[\vec{\mathbf{l}}, \vec{\mathbf{u}})}\rangle$ can be computed in $O(\log^d\Delta)$ time, because the weight vector $M\mathbbm{1}_{[\vec{\mathbf{l}}, \vec{\mathbf{u}})} =  M\cdot(\mathbbm{1}_{[l_1, u_1)}\otimes \mathbbm{1}_{[l_2, u_2)}\otimes \cdots\otimes \mathbbm{1}_{[l_d, u_d)}) = (H\mathbbm{1}_{[l_1, u_1)}) \otimes (H\mathbbm{1}_{[l_2, u_2)}) \otimes \cdots \otimes (H\mathbbm{1}_{[l_d, u_d)})$ has only $O(\log^d \Delta)$ nonzero scalars (weights) by~\Cref{cor:logdelta} and 
the property of Kronecker product.  Hence, we need to generate and remember only $O(\log^d \Delta)$ corresponding
participating HWT coefficients.  
%We \emph{generate on demand and remember} all involved scalars of $\vec{\mathbf{w}}$ in the same way as we mentioned earlier.
As a result, our $d$D-ERS algorithm has $O(\min\{T\log^d \Delta, \Delta^d\})$ space complexity.
\section{$\bf{k}$-wise Independence Theory}\label{sec:kwise}

%\textbf{In this section, we aim to overcome the high space complexity of using truly random hash functions.
%	First, we propose three possible solutions with low space complexity in~\autoref{sec:kwisehash}, each of which has different guarantees on the independence of the underlying RVs.
% Then, in~\autoref{sec:kwisesolution}, we describe in detail the solution using $k$-wish independent hash functions}
	
%	For this purpose, we first introduce $k$-wise independent hash functions in. Then, we mathematically define how each of the $k$-wise independent underlying RVs is generated in~\autoref{sec:kwisemath}. Then, we show in~\autoref{sec:kwisealgo} the algorithm that returns any range-sum of the RVs defined above . }

%\subsection{Three Ways of Hash-Generating $W m_1 ... j_2$}\label{sec:kwisehash}

In this section, in all subsequent paragraphs, we assume $d=2$ ($2$D) for notational simplicity.  
All our statements and proofs can be readily generalized to higher dimensions. 
Recall that, for guaranteeing mutual independence among the $\Delta^d$ underlying RVs, 
%namely $X_{\vec{\mathbf{i}}}$ for $\vec{\mathbf{i}}\in [0, \Delta)^2$, 
our HWT-based $d$D Gaussian-ERS needs to \emph{remember} (the realization of) every participating HWT coefficient that was generated for
answering a past range-sum query, which can lead to high storage overhead when the number of queries $T$ is large.
%we have so far made an implicit assumption that each HWT coefficient $W^{m_1, m_2}_{j_1, j_2}$ can be generated {\it on demand} by hashing itsindex $(m_1, m_2, j_1, j_2)$ using an (idealized) truly random hash function $h$~\cite{mitzenmachertrulyrandom}, and then 
In this section
%As mentioned earlier, we can use the hashing trick to dramatically reduce the space complexity from $O(\Delta^d)$ to $O(\log^{d+1} \Delta)$ while the underlying RVs are still ``independent enough'' for most of the applications.
%Instead of storing the entire vector $\vec{\mathbf{w}$, 
we propose a $k$-wise independence theory and 
scheme that guarantees that the $\Delta^d$ underlying Gaussian RVs are $k$-wise independent.
It does so by using $O(\log^d \Delta)$ $k$-wise independent hash functions (described next)
instead.
% of a (idealized) truly random hash function to perform the hashing operations.  
This scheme has the same time complexity of $O(\log^d \Delta)$ as the idealized Gaussian-ERS solution, and
a much smaller space complexity of $O(\log^{d} \Delta)$, for storing the seeds of 
$O(\log^d \Delta)$ $k$-wise independent hash functions.
This scheme significantly extends its 1D version proposed in~\cite{1ddyadic}.  
Finally, we note this scheme works also for our Poisson-ERS solution.  We however will not explain how it works in this paper, since doing so would involve drilling down to 
the messy and lengthy detail of the Cartesian product of $d>1$ DSTs (since we cannot use the relatively clean and simple $d$D HWT in the Poisson case).

A $k$-wise independent hash function $h(\cdot)$ has the following property:  Given an arbitrary set of $k$ 
distinct keys $i_1, i_2, \cdots, i_k$, their hash values $h(i_1), h(i_2), \cdots, h(i_k)$ are independent.   
Such hash functions are very computationally efficient when $k$ is a small number
such as $k=2$ (roughly 2 nanoseconds per hash) and $k = 4$ (several nanoseconds per 
hash)~\cite{univhash,tabulation4wise,tabulationpower}.  
Typically, the hash values are (uniform random) integers. We can map them to Gaussian RVs using a deterministic function $g(\cdot)$ such as the Box-Muller transform~\cite{boxmuller}.

Recall (from~\autoref{fig:colors}) that the $\Delta^2$ HWT coefficients in the vector $\vec{\mathbf{w}}$ are on $(\log_2\Delta + 1)^2$ different scale pairs, namely $(m_1, m_2)$ for $m_1, m_2 = -1, 0, 1,\cdots, \log_2\Delta-1$.
Our scheme uses $(\log_2\Delta + 1)^2$ independent
$k$-wise independent hash functions that we denote as $h_{m_1, m_2}(\cdot)$, for $m_1, m_2 = -1, 0, 1, \cdots, \log_2\Delta-1$.    
During the initialization phase, we uniformly randomly seed these $(\log_2\Delta + 1)^2$ hash functions;  once seeded, they are fixed thereafter as usual.  
As mentioned earlier, these seeds correspond to the outcome $\omega$ that fixes (mathematically defines) the HWT coefficient vector $\vec{\mathbf{w}}$.

Our scheme can be stated {\it literally in one sentence:}  Each such (seeded and fixed) $h_{m_1, m_2}(\cdot)$ is solely responsible for hash-generating 
any HWT coefficient on scale $(m_1, m_2)$ that is \emph{participating} (as defined earlier) in answering a range-sum query.  
%;  in other words, our scheme replaces a truly random hash function by $(\log_2\Delta + 1)^2$ per-scale $k$-wise independent hash functions.
In other words, for any scale $m_1, m_2 = -1, 0, 1,\cdots, \log_2\Delta-1$, and location $j_1 = 0, 1, \cdots, 2^{m_1^+}-1$, $j_2 = 0, 1, \cdots, 2^{m_2^+}-1$, the value of the HWT coefficient $W^{m_1, m_2}_{j_1, j_2}$ is {\it mathematically defined} as $g(h_{m_1, m_2}(j_1, j_2))$, where $g(\cdot)$ is the aforementioned 
deterministic function (that maps an integer to a Gaussian RV).  Hence our scheme has a much lower space complexity of $O(\log^2 \Delta)$, 
for remembering the seeds of the $O(\log^2 \Delta)$ hash functions.

The following theorem states that our scheme achieves its intended objective of ensuring that the $\Delta^2$ underlying RVs {\it mathematically defined} by it are $k$-wise independent.
In this theorem and proof, we denote the vector of $\Delta^2$ HWT coefficients and the vector of $\Delta^2$ underlying RVs both mathematically defined by our scheme 
as $\vec{\mathbf{v}}$ and $\vec{\mathbf{z}}$, respectively.   
We do so to distinguish this vector pair from the original vector pair $\vec{\mathbf{w}}$ and $\vec{\mathbf{x}}$ that are mathematically defined by the idealized scheme 
(that guarantees mutual independence). 
Recall that $\vec{\mathbf{z}} = M^T\vec{\mathbf{v}}$ and $\vec{\mathbf{x}} = M^T\vec{\mathbf{w}}$, where $M = H^{\otimes 2}$ is the $2$D HWT matrix, and that 
$\vec{\mathbf{x}}$ is comprised of i.i.d. standard Gaussian RVs.

\begin{theorem}\label{th:kwise}
%	The vector of underlying RVs $\vec{\mathbf{z}}\triangleq M^T\vec{\mathbf{v}}$, where $M\triangleq H^{\otimes 2}$ and every HWT coefficient in $\vec{\mathbf{v}}$ is hash-generated using 
%	our scheme, 	
	The vector $\vec{\mathbf{z}}$ is comprised of $k$-wise independent standard Gaussian RVs.
\end{theorem}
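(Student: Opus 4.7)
The plan is to reduce this $k$-wise claim to the idealized (all-wise) result via a coupling argument. The key observation is that each underlying RV $Z_{\vec{\mathbf{i}}}$ involves exactly one HWT coefficient per scale pair $(m_1,m_2)$. Indeed, $Z_{\vec{\mathbf{i}}} = (M^T\vec{\mathbf{v}})_{\vec{\mathbf{i}}} = \langle M\mathbbm{1}_{\{\vec{\mathbf{i}}\}}, \vec{\mathbf{v}}\rangle$, and by \autoref{th:mixedproduct}
\[
  M\mathbbm{1}_{\{\vec{\mathbf{i}}\}} \;=\; (H\otimes H)(\mathbbm{1}_{\{i_1\}}\otimes \mathbbm{1}_{\{i_2\}}) \;=\; (H\mathbbm{1}_{\{i_1\}})\otimes (H\mathbbm{1}_{\{i_2\}}).
\]
By \autoref{lem:onehot} each 1D factor has exactly one nonzero scalar per scale, so the Kronecker product has exactly one nonzero scalar per scale pair, at a location $(j_1^{\star}(m_1,i_1), j_2^{\star}(m_2,i_2))$ determined solely by $\vec{\mathbf{i}}$, $m_1$, and $m_2$, with some coefficient $c^{m_1,m_2}(\vec{\mathbf{i}})\in\mathbb{R}$. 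Hence
\[
  Z_{\vec{\mathbf{i}}} \;=\; \sum_{(m_1,m_2)} c^{m_1,m_2}(\vec{\mathbf{i}})\cdot V^{m_1,m_2}_{j_1^{\star}(m_1,i_1),\, j_2^{\star}(m_2,i_2)}.
\]

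Next I would fix any $k$ distinct indices $\vec{\mathbf{i}}_1,\ldots,\vec{\mathbf{i}}_k$ and, for each scale pair $(m_1,m_2)$, collect the at most $k$ distinct locations arising in the corresponding expressions. Since $h_{m_1,m_2}$ is $k$-wise independent, its values on these $\le k$ distinct pre-images are independent and uniform, so after applying $g$ the corresponding $V$ RVs at that scale pair are jointly i.i.d.\ standard Gaussian. Because the hash functions across distinct scale pairs are independent by construction, aggregating over all $(\log_2\Delta+1)^2$ scale pairs shows that the collection of distinct $V$ RVs actually appearing in $Z_{\vec{\mathbf{i}}_1},\ldots,Z_{\vec{\mathbf{i}}_k}$ is jointly i.i.d.\ standard Gaussian.

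The coupling step then concludes the proof. Compare with the idealized scheme in which $\vec{\mathbf{v}}$ is filled with truly i.i.d.\ standard Gaussian coefficients. Each $Z_{\vec{\mathbf{i}}_t}$ is the \emph{same} deterministic linear function of the collection of $V$'s above in either scheme, and we just showed the joint law of that collection is identical in the two schemes. Therefore $(Z_{\vec{\mathbf{i}}_1},\ldots,Z_{\vec{\mathbf{i}}_k})$ has the same joint distribution in both. In the idealized scheme, $M^T$ is orthonormal (\autoref{lem:kroneckerorthonormal} applied to $H\otimes H$ and then its transpose), so \autoref{lem:correctness} implies that the entire $\vec{\mathbf{z}}=M^T\vec{\mathbf{v}}$ is a vector of i.i.d.\ standard Gaussians, and in particular any $k$-subtuple is. The conclusion transfers back to our scheme.

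The main obstacle I anticipate is conceptual rather than technical: one must resist the tempting but wrong statement that ``at most $k$ distinct $V$'s are touched in total''. There may in fact be up to $k$ distinct $V$'s \emph{per scale pair}, and thus up to $k(\log_2\Delta+1)^2$ overall. What rescues the argument is that we only need independence \emph{within} each scale (delivered by the $k$-wise hash on $\le k$ distinct pre-images) together with independence \emph{across} scales (delivered by using independent hash functions on different scales). Location collisions across different $Z$'s are harmless: they merely mean several $Z$'s share a $V$, exactly as they would in the idealized scheme, and the coupling packages this cleanly without having to chase covariances by hand.
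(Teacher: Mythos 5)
Your proposal is correct and follows essentially the same route as the paper's proof: compare against the idealized scheme, decompose by scale using the fact (Lemma~\ref{lem:eachcolor}, via Lemma~\ref{lem:onehot}) that each underlying RV touches exactly one coefficient per scale, invoke $k$-wise independence within each scale and independence of the per-scale hash functions across scales, and transfer the i.i.d.\ Gaussian conclusion from the idealized case via Theorem~\ref{lem:correctness}. The only cosmetic difference is that you couple the joint law of all touched coefficients directly, whereas the paper sums the per-scale contributions and invokes Proposition~\ref{ob:indepsum}; the substance is identical, and your closing caveat about ``up to $k$ distinct coefficients per scale, not in total'' matches the paper's reasoning.
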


\begin{proof}
It suffices to prove that \emph{any} $k$ distinct scalars in $\vec{\mathbf{z}}$ -- say the $(i_1)^{th}$, $(i_2)^{th}$, $\cdots$, $(i_k)^{th}$ scalars -- are i.i.d. standard Gaussian. 
Let $\vec{\mathbf{z}}'$ be the $k$-dimensional vector comprised of these $k$ scalars.  
Let $(M^T)'$ be the $k\times \Delta^2$ matrix formed by the $(i_1)^{th}, (i_2)^{th}, \cdots, (i_k)^{th}$ rows in $M^T$. 
Then, we have $\vec{\mathbf{z}}' = (M^T)' \vec{\mathbf{v}}$. 
%Recall that if we define $\vec{\mathbf{w}}$ as a vector of idealized all-wise independent standard Gaussian RVs and  
%$\vec{\mathbf{x}}\triangleq M^T \vec{\mathbf{w}}$, then $\vec{\mathbf{x}}$ is also comprised of i.i.d. standard Gaussian RVs by~\autoref{lem:correctness}.
Now let the random vector $\vec{\mathbf{x}}'$ be defined as $(M^T)' \vec{\mathbf{w}}$.  Then $\vec{\mathbf{x}}'$ is comprised of 
$k$ i.i.d. standard Gaussian RVs, as its scalars are a subset of those of $\vec{\mathbf{x}}$.  Hence, to prove that the scalars in $\vec{\mathbf{z}}'$ are i.i.d. standard Gaussian RVs, 
it suffices to prove the claim that $\vec{\mathbf{z}}'$ has the same distribution as $\vec{\mathbf{x}}'$.
	
We prove this claim using~\Cref{ob:indepsum}.  To this end, we 
first write $\vec{\mathbf{z}}'$ and $\vec{\mathbf{x}}'$ each as the sum of $N=(\log_2\Delta+1)^2$ independent random vectors.
Recall that in~\autoref{sec:2dwavelet}, we have classified the HWT coefficients in $\vec{\mathbf{w}}$ and $\vec{\mathbf{v}}$, and the columns of $M^T$ (called HWT vectors there)
into $N$ different $(m_1, m_2)$ scales (colors in~\autoref{fig:colors}).  Recall that $n_{m_1, m_2}$ scalars in $\vec{\mathbf{w}}$ and $\vec{\mathbf{v}}$, and accordingly $n_{m_1, m_2}$ columns of $M^T$,
have scale $(m_1, m_2)$.  %HWT coefficients $\vec{\mathbf{z}}$ can be classified in the same way.  
Let $\vec{\mathbf{w}}_{m_1, m_2}$ and $\vec{\mathbf{v}}_{m_1, m_2}$ be the $n_{m_1, m_2}$-dimensional vectors comprised of the coefficients classified to scale $(m_1, m_2)$ in $\vec{\mathbf{w}}$ and 
$\vec{\mathbf{v}}$, respectively.  Let $(M^T)'_{m_1, m_2}$ be the $k\times n_{m_1, m_2}$ matrix comprised of the columns of $(M^T)'$ classified to scale $(m_1, m_2)$.  
Then, 
we have $\vec{\mathbf{z}}' = \sum_{(m_1, m_2)}(M^T)'_{m_1, m_2}\vec{\mathbf{v}}_{m_1, m_2}$ and $\vec{\mathbf{x}}' = \sum_{(m_1, m_2)}(M^T)'_{m_1, m_2}\vec{\mathbf{w}}_{m_1, m_2}$, 
where both summations are over all $N$ scales.  
The $N$ summands in the RHS of the first equation are independent random vectors, because 
for each scale $(m_1, m_2)\in [-1, \log_2\Delta)^2$, all scalars in $\vec{\mathbf{v}}_{m_1, m_2}$ are generated by the same per-scale hash function $h_{m_1, m_2}(\cdot)$, 
which is independent of all $N - 1$ other per-scale hash functions.  The same can be said about the $N$ summands in the RHS of the second equation, 
since $\vec{\mathbf{w}}$ is comprised of i.i.d. RVs by design.  To prove this claim using~\Cref{ob:indepsum}, it remains to prove the fact that for each scale $(m_1, m_2)\in [-1, \log_2\Delta)^2$, the pair of random vectors $(M^T)'_{m_1, m_2}\vec{\mathbf{v}}_{m_1, m_2}$ and $(M^T)'_{m_1, m_2}\vec{\mathbf{w}}_{m_1, m_2}$
	have the same distribution.

This fact can be proved as follows.
Note that for each scale $(m_1, m_2)\in [-1, \log_2\Delta)^2$, 
each row in $(M^T)'_{m_1, m_2}$ has exactly one nonzero scalar, since the corresponding row in $M^T$, or equivalently the corresponding column in $M$, 
has exactly one nonzero scalar at each scale $(m_1, m_2)$, due to~\Cref{lem:eachcolor}.
%is a column of $M$, which equals to $H^{\otimes 2}\mathbbm{1}_{\{(i_1, i_2)\}}$ for some $(i_1, i_2)$, by~\autoref{lem:eachcolor},  corresponding to $\vec{\mathbf{i}$
%We claim that each row of $M^T_{m_1, m_2}$ has exactly one nonzero element, since it is a column of $M$, which is equal to 
Therefore, although the number of columns in $(M^T)'_{m_1, m_2}$ can be as many as $O(\Delta^2)$, at most $k$ of them (one for each row), say the  $(\alpha_1)^{th}, (\alpha_2)^{th}, \cdots, (\alpha_k)^{th}$ columns, contain nonzero scalars. 
%Suppose they are  in $(M^T)'_{m_1, m_2}$. 
Then, $(M^T)'_{m_1, m_2}\vec{\mathbf{v}}_{m_1, m_2}$  is a function of only the $(\alpha_1)^{th}, (\alpha_2)^{th}, \cdots, (\alpha_k)^{th}$ scalars in $\vec{\mathbf{v}}_{m_1, m_2}$, and these $k$ scalars are i.i.d. Gaussian RVs since they are all generated by the same $k$-wise independent hash function $h_{m_1, m_2}(\cdot)$. 
Note that $(M^T)'_{m_1, m_2}\vec{\mathbf{w}}_{m_1, m_2}$ is the same   function of the $(\alpha_1)^{th}, (\alpha_2)^{th}, \cdots$, $(\alpha_k)^{th}$ scalars in 
$\vec{\mathbf{w}}_{m_1, m_2}$, which are i.i.d. Gaussian RVs by design.  Hence, $(M^T)'_{m_1, m_2}\vec{\mathbf{v}}_{m_1, m_2}$ has the same distribution as $(M^T)'_{m_1, m_2}\vec{\mathbf{w}}_{m_1, m_2}$.
\end{proof}

\begin{proposition}\label{ob:indepsum}
	Suppose random vectors $\vec{\mathbf{x}}$ and $\vec{\mathbf{z}}$ each is the sum of $N$ \emph{independent} random vectors as follows: $\vec{\mathbf{x}} = \vec{\mathbf{x}}_1 + \vec{\mathbf{x}}_2 + \cdots + \vec{\mathbf{x}}_N$ and $\vec{\mathbf{z}} = \vec{\mathbf{z}}_1 + \vec{\mathbf{z}}_2 + \cdots + \vec{\mathbf{z}}_N$. Then, $\vec{\mathbf{x}}$ and $\vec{\mathbf{z}}$ have the same distribution if each pair of components $\vec{\mathbf{x}}_i$ and $\vec{\mathbf{z}}_i$ have the same distribution, for $i=1,2,\cdots, N$.
\end{proposition}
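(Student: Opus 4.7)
The plan is to prove the claim via characteristic functions, which convert independence of summands into a multiplicative identity and convert equality in distribution into equality of characteristic functions. Let $d$ denote the common dimension of the random vectors in question. For any random vector $\vec{\mathbf{y}} \in \mathbb{R}^d$, let $\varphi_{\vec{\mathbf{y}}}(\vec{\mathbf{t}}) \triangleq \mathbb{E}[\exp(\mathrm{i}\langle \vec{\mathbf{t}}, \vec{\mathbf{y}}\rangle)]$ denote its characteristic function, where $\vec{\mathbf{t}}\in\mathbb{R}^d$. By the uniqueness theorem for characteristic functions, two random vectors have the same distribution if and only if their characteristic functions agree at every $\vec{\mathbf{t}}\in\mathbb{R}^d$.

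First, I would invoke the standard fact that the characteristic function of a sum of mutually independent random vectors factorizes into the product of the individual characteristic functions. Applying this to both hypothesized decompositions yields, for every $\vec{\mathbf{t}}\in\mathbb{R}^d$,
\begin{equation*}
\varphi_{\vec{\mathbf{x}}}(\vec{\mathbf{t}}) = \prod_{i=1}^N \varphi_{\vec{\mathbf{x}}_i}(\vec{\mathbf{t}}), \qquad \varphi_{\vec{\mathbf{z}}}(\vec{\mathbf{t}}) = \prod_{i=1}^N \varphi_{\vec{\mathbf{z}}_i}(\vec{\mathbf{t}}).
\end{equation*}
Next, I would use the hypothesis that each $\vec{\mathbf{x}}_i$ has the same distribution as $\vec{\mathbf{z}}_i$, which by the uniqueness theorem implies $\varphi_{\vec{\mathbf{x}}_i}(\vec{\mathbf{t}}) = \varphi_{\vec{\mathbf{z}}_i}(\vec{\mathbf{t}})$ for each $i=1,2,\cdots, N$ and every $\vec{\mathbf{t}}$. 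Substituting this into the two factorizations above gives $\varphi_{\vec{\mathbf{x}}}(\vec{\mathbf{t}}) = \varphi_{\vec{\mathbf{z}}}(\vec{\mathbf{t}})$ for every $\vec{\mathbf{t}}$, and invoking uniqueness one more time concludes that $\vec{\mathbf{x}}$ and $\vec{\mathbf{z}}$ have the same distribution.

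There is no real obstacle here; the statement is a routine consequence of characteristic-function machinery. The only subtlety worth noting is that the hypothesis asserts independence only within each family (all the $\vec{\mathbf{x}}_i$ are mutually independent, and all the $\vec{\mathbf{z}}_i$ are mutually independent), and says nothing about the joint distribution across the two families; fortunately the argument above never needs such cross-family independence, since each sum is handled separately before its characteristic function is computed. An alternative route would be to argue directly by repeated convolution of the distributions, but the characteristic-function proof is the cleanest and generalizes verbatim to arbitrary dimension $d$ (which matches how the proposition is invoked in the proof of~\autoref{th:kwise}, where the relevant vectors have dimension $k$).
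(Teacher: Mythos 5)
Your proof is correct. The paper actually states this proposition without any proof, treating it as a routine observation, and your characteristic-function argument is a standard and fully valid way to justify it: independence within each family gives the factorizations, equality of the marginal laws of $\vec{\mathbf{x}}_i$ and $\vec{\mathbf{z}}_i$ gives equality of the factors, and uniqueness of characteristic functions finishes the argument. Your remark that no cross-family independence is needed is exactly the right point to flag; equivalently, one could simply observe that the law of a sum of independent vectors is the convolution of the summands' laws, which depends only on those marginal laws.
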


\begin{lemma}\label{lem:eachcolor}
	Any column of $M = H^{\otimes 2}$, which is equal to $H^{\otimes 2}\mathbbm{1}_{\{\vec{\mathbf{i}}\}}$ for some $\vec{\mathbf{i}} = (i_1, i_2)^T$, has exactly one nonzero scalar on each 2D scale $(m_1, m_2)$. 
%	Given any $(i_1, i_2)\in[\Delta]^2$, let $\vec{\mathbf{i} \triangleq (i_1, i_2)^T$. Then, $H^{\otimes 2}\mathbbm{1}_{\{\vec{\mathbf{i}\}}$ has 
\end{lemma}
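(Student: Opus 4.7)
The plan is to reduce the 2D statement to the 1D statement in \autoref{lem:onehot} via the mixed product property of the Kronecker product (\autoref{th:mixedproduct}). First I would observe that the 2D indicator decomposes as a Kronecker product of two 1D indicators: $\mathbbm{1}_{\{\vec{\mathbf{i}}\}} = \mathbbm{1}_{\{i_1\}}\otimes \mathbbm{1}_{\{i_2\}}$. Then applying \autoref{th:mixedproduct} with $P=Q=H$, $R = \mathbbm{1}_{\{i_1\}}$, $T = \mathbbm{1}_{\{i_2\}}$ gives
\[
H^{\otimes 2}\mathbbm{1}_{\{\vec{\mathbf{i}}\}} \;=\; (H\otimes H)\bigl(\mathbbm{1}_{\{i_1\}}\otimes \mathbbm{1}_{\{i_2\}}\bigr) \;=\; \bigl(H\mathbbm{1}_{\{i_1\}}\bigr)\otimes \bigl(H\mathbbm{1}_{\{i_2\}}\bigr).
\]
By \autoref{lem:onehot}, each 1D factor $H\mathbbm{1}_{\{i_r\}}$ has exactly one nonzero scalar on every 1D scale $m_r \in \{-1, 0, 1, \ldots, \log_2\Delta -1\}$.

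Next I would translate this into a statement about 2D scales. A scalar in $(H\mathbbm{1}_{\{i_1\}})\otimes (H\mathbbm{1}_{\{i_2\}})$ is the product of a scalar from the first factor and one from the second, and it is nonzero iff both factor scalars are nonzero. The crux is that the scale labeling is consistent between the 1D and 2D HWTs: the rows of the 2D matrix $H^{\otimes 2}$ are (by definition of the Kronecker product) tensor products of pairs of rows of $H$, and the 2D scale $(m_1, m_2)$ is precisely the pair of 1D scales of the two factor rows. Hence the scalars of the Kronecker product that lie on 2D scale $(m_1, m_2)$ are exactly the products of the 1D scale-$m_1$ scalars of the first factor with the 1D scale-$m_2$ scalars of the second factor.

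Combining these two observations: on each 2D scale $(m_1, m_2)$, the first factor contributes exactly one nonzero scalar (at 1D scale $m_1$) and the second factor contributes exactly one nonzero scalar (at 1D scale $m_2$), so their tensor product contributes exactly $1\cdot 1 = 1$ nonzero scalar on the 2D scale $(m_1, m_2)$. This completes the proof.

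The only non-routine step is the bookkeeping in the second paragraph, namely verifying that the 2D scale $(m_1, m_2)$ used in \autoref{fig:colors} really does correspond to the pair of 1D scales of the constituent rows of $H$ in the Kronecker product; this is immediate from \autoref{def:kronecker} applied to the scale-indexed block structure of $H$, but it is the only place where care is needed. Everything else is a direct application of \autoref{th:mixedproduct} and \autoref{lem:onehot}. The argument generalizes verbatim to any dimension $d$ by iterating the mixed product property $d$ times.
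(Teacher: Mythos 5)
Your proposal is correct and follows exactly the paper's own argument: decompose $\mathbbm{1}_{\{\vec{\mathbf{i}}\}} = \mathbbm{1}_{\{i_1\}}\otimes \mathbbm{1}_{\{i_2\}}$, apply \autoref{th:mixedproduct} to get $(H\mathbbm{1}_{\{i_1\}}) \otimes (H\mathbbm{1}_{\{i_2\}})$, and conclude via \autoref{lem:onehot}. Your extra bookkeeping on how 2D scales arise from pairs of 1D scales is a detail the paper leaves implicit, but the approach is the same.
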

\begin{proof}
	The 2D indicator vector can be decomposed to the Kronecker product of two 1D indicator vectors as $\mathbbm{1}_{\{\vec{\mathbf{i}}\}}= \mathbbm{1}_{\{i_1\}}\otimes \mathbbm{1}_{\{i_2\}}$, so $H^{\otimes 2}\mathbbm{1}_{\{\vec{\mathbf{i}}\}} = (H\mathbbm{1}_{\{i_1\}}) \otimes (H\mathbbm{1}_{\{i_2\}})$ by~\Cref{th:mixedproduct}.  The claim above follows from~\Cref{lem:onehot}, which implies that $H\mathbbm{1}_{\{i_1\}}$ and $H\mathbbm{1}_{\{i_2\}}$ each
	has exactly one nonzero scalar on each 1D scale.  
	
\end{proof}

\section{Multidimensional Dyadic Simulation}\label{sec:infeasibility}

%We emphasize that DST is more general than wavelet:   The DST-based approach We have shown that a DST is structurally equivalent to a wavelet transform when the underlying distribution is Gaussian.The DST approach is more general than the wavelet approach, since the former arguablyworks for any distributions and becomes the latter only when the underlying distribution is Gaussian (as far as we know), 

%=========

%\subsection{Hardness of $\bf{d}$D Dyadic Simulation}\label{subsec:hardness}
As explained in~\autoref{sec:dst1D}, in 
one dimension (1D), any dyadic range-sum $S[l, u)$, {\it no matter what} the target distribution is, 
can be computed by performing $O(\log\Delta)$ \emph{binary splits} along the path from the root $S[0, \Delta)$ to the node $S[l, u)$ along the 
dyadic simulation tree (DST). %.  \textbf{Hence the time complexity of computing a dyadic range-sum is always {\it $O(\log\Delta)$ binary splits}}, 
%although efficient algorithms for performing such binary splits have only been found for a few nice target distributions~\cite{1ddyadic}.  
Since we have just {\it computationally efficiently} generalized the Gaussian-DST approach (equivalent to the HWT-based approach in the 1D Gaussian case) to any dimension $d \ge 2$,
we wonder whether we can do the same for all target distributions.
By ``computationally efficiently'', we mean that a generalized solution should be able to compute any $d$D range-sum in $O(\log^d\Delta)$ time like in the Gaussian case. 

%$O(polylog(\Delta))$\textbf{where the the time and the space complexities of the generalized solution (to dD) are both $O( ...^d)$.} when the target distribution is Gaussian 

%In principle, on 1D this dyadic simulation approach works for anyarbitrary distribution of the underlying RVs is, ====merge====

%We have also shown that, when the underlying RVs have a Gaussian distribution, the DST approach for 1D becomes equivalent to 
%the wavelet-based approach for 1D, which can be generalized to higher dimensions, 
%with its time complexity scaling gracefully as $O(\log^d\Delta)$ on $d$ dimensions ($d$D).  
%Given the success of DST in general on 1D, and of generalizing it to higher dimensions in the case of Gaussian, 
%we wonder whether the DST approach can always be {\it efficiently} extended for higher dimensions, regardless of the target distribution. % of the underlying RVs.
%By ``efficiently'', we mean that the ideal extension should be able to compute any dyadic range sum in $O(polylog(\Delta))$ ``simple splits'', each of which is not much more 
%complicated than a binary split in the 1D case.    

Unfortunately, it appears hard, if not impossible, to generalize the DST approach to $d$D for arbitrary target distributions.  
We have identified a sufficient condition on the target distribution for such an efficient generalization to exist.
We prove the sufficiency by proposing a DST-based universal algorithmic framework (described in~\autoref{sec:2ddyasim} in the interest of space) that solves the $d$D-ERS problem for 
any target distribution satisfying this condition. 
%\textbf{In the interest of space, we describe the algorithmic framework in~\autoref{sec:2ddyasim}.}
%\textbf{for any target distribution}.  This condition is sufficient, since we , , that works for any target distribution that satisfies this condition, 
%in the sense it can compute any corresponding $d$D-range sum in $O(\log^d\Delta)$ ``simple splits''.  
Unfortunately, so far only two distributions, namely Gaussian and Poisson, are known to satisfy this condition, as is elaborated in~\autoref{sec:casepositive}.
We also describe in~\autoref{sec:casenegative} two example distributions that do not satisfy this sufficient condition, namely Cauchy and Rademacher.
In the following, we specify this condition and explain why it is ``almost necessary''.

\begin{comment}

we will provide strong evidence that the time complexity of the ``brute-force'' extension of the 1D DST solution to 2D is at least $\Omega(\Delta)$.    
We will show that if this condition is satisfied by a target distribution, then 
the 1D DST approach can be efficiently generalized to dD and the dD solution is essentially the d-way Cartesian product of the 
corresponding 1D DST solution.  

We have found the answer to be negative generally. We have identified an adversarial case in which the number of splits is likely to be large. However, we have found that such a generalization is possible when the target distribution satisfies an fascinating ``memoryless'' property.

\end{comment}

For ease of presentation, in the following, we fix the number of dimensions $d$ at 2.
We assume all underlying RVs, $X_{i_1, i_2}$ for $(i_1, i_2)$ in the 2D universe $[0, \Delta)^2$, are i.i.d. with a certain target distribution $X$. 
This assumption is appropriate for our reasoning below about the time complexity of a 2D ERS solution, since
as shown earlier this time complexity is not affected by the strength of the independence guarantee provided, in the cases of Gaussian and Poisson.
In a 2D universe, 
%for convenience, we refer to its two dimensions as the horizontal and the vertical dimensions respectively.Using these two terms, 
%we can say that 
any 2D range can be considered the Cartesian product of its horizontal and vertical 1D ranges.
We say a 2D range is \emph{dyadic} if and only if its horizontal and vertical 1D ranges are both dyadic.
Since any general  (not necessarily dyadic) 1D range can be ``pieced together'' using  $O(\log\Delta)$ 1D dyadic ranges~\cite{rusu-jour}, it is not hard to show, using the Cartesian product argument, that
any general 2D range can be ``pieced together'' using $O(\log^2\Delta)$ 2D dyadic ranges.  
Hence in the following, we focus on the generation of only 2D dyadic range-sums.
We assume all underlying RVs, $X_{i_1, i_2}$ for $(i_1, i_2)$ in the 2D universe $[0, \Delta)^2$, are i.i.d. with a certain target distribution $X$.

%In the following, we may omit the term ``2D'' when the dimension is clear from the context.computation (generation)

% where we explain how the universal algorithmic framework can be generalized to $d$ dimensions. 
We need to introduce some additional notations.  
%Problem arises when we attempt to . %because the above condiwith index $i$ in the first dimensiontional independence property no longer holds in two dimensions. consider a general 2D universe $[0, \Delta)^2$ and
We define each horizontal strip-sum  $S^H_i \triangleq X_{i, 0} + X_{i, 1} + \cdots + X_{i, \Delta-1}$ for $i\in[0,\Delta)$ as the sum of range $[i, i+1)\times [0, \Delta)$,
and each vertical strip-sum $S^V_i \triangleq X_{0, i} + X_{1, i} + \cdots + X_{\Delta-1,i}$ for $i\in[0,\Delta)$ as the sum of range $[0, \Delta)\times [i, i+1)$. 
%\textbf{Note that each $S^H$ or $S^V$ above is a dyadic range-sum that needs to be generated.}
We denote as $S$ the total sum of all underlying RVs in the universe, \ie
$S\triangleq \sum_{i_1=0}^{\Delta-1}\sum_{i_2=0}^{\Delta-1}X_{i_1, i_2} = \sum_{i=0}^{\Delta-1} S^H_i = \sum_{i=0}^{\Delta-1} S^V_i$.

Now we are ready to state this sufficient condition.  For ease of presentation, we break it down into two parts.  
The first part, stated in the following formula, states that the vector of vertical strip-sums and the vector of horizontal strip-sums in $[0, \Delta)^2$ 
are conditionally independent given the total sum $S$.  
\begin{equation}\label{condindep}
%\forall \Delta, \forall j\in [0, \Delta), \:
(S^V_0, S^V_1, \cdots, S^V_{\Delta-1})  \indep (S^H_0, S^H_1, \cdots, S^H_{\Delta-1}) \mid  S. 
%\forall (\Delta_1,\Delta_2),\quad \mathbf{h} \indep\mathbf{v}\mid S,and the size of the universe $(\Delta_1,\Delta_2)$S^H_0+ S^H_1+\cdots+S^H_{\Delta-1}
\end{equation}
%meaning the random vectors $\mathbf{h}$ and $\mathbf{v}$ 
%where the symbol ``$X\indep Y\mid Z$'' means RVs $X$ and $Y$ 
The second part is that this conditional independence relation holds for the two corresponding vectors in any 2D dyadic range (that is not necessarily a square).  
Intuitively, this condition says that how a 2D dyadic range-sum is split horizontally is conditionally independent (upon this 2D range-sum) of how it is split vertically. 
Roughly speaking, this condition implies that the 1D-DST governing the horizontal splits is conditionally independent of the other 1D-DST governing the vertical splits.  
Hence, our DST-based universal algorithmic framework for 2D can be viewed as the Cartesian product of the two 1D-DSTs,  
as will be elaborated in~\autoref{sec:2ddyasim}.

In the following, we offer some intuitive evidence why this condition is likely necessary.  
Without loss of generality, we consider the generation of an arbitrary horizontal strip-sum $S^H_{i_1}$ conditional on the vector of vertical strip-sums $(S^V_0, S^V_1, \cdots, S^V_{\Delta-1})$. 
Suppose~(\ref{condindep}) does not hold, which means $(S^H_0, S^H_1, \cdots, S^H_{\Delta-1})$ is not conditionally independent of $(S^V_0, S^V_1, \cdots, S^V_{\Delta-1})$ given $S$.  
Then the distribution of $S^H_{i_1}$ is arguably parameterized by the values (realizations) of all $\Delta$ vertical strip-sums $S^V_0, S^V_1, \cdots, S^V_{\Delta-1}$, since $S^H_{i_1}$ and any vertical strip-sum $S^V_{i_2}$ for $i_2\in[0,\Delta)$ are in general dependent RVs by~\Cref{th:hvdep} (See~\autoref{appd:l17} for its nontrivial proof).  
%By~\autoref{th:hvdep}, the conditional distribution of $\mathbf{h}$ is in general a nontrivial function of $\mathbf{v}$, which can be very complicated, because $\Delta_2$, the number of scalars in $\mathbf{v}$, can be in order $\Theta(\Delta)$. 
Hence, unless some magic happens (which we cannot rule out rigorously), to generate (realize) the RV $S^H_{i_1}$, conceivably we need to first realize all $\Delta$ RVs $(S^V_0, S^V_1, \cdots, S^V_{\Delta-1})$, the time complexity of which is $\Omega(\Delta)$.  
 
 \begin{theorem}\label{th:hvdep}
 	For any $(i_1, i_2)$ in $[0, \Delta)^2$, $S^H_{i_1}$ and $S^V_{i_2}$ are dependent RVs unless the target distribution $X$ satisfies $\Pr[X=c]=1$ for some constant $c$.
 \end{theorem}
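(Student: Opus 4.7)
The plan is to reduce the question to a constraint on the characteristic function $\phi_X$ of the target distribution, exploiting the fact that $S^H_{i_1}$ and $S^V_{i_2}$ share exactly one underlying RV, namely $X_{i_1,i_2}$.

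First, I would decompose $S^H_{i_1} = X_{i_1, i_2} + A$ and $S^V_{i_2} = X_{i_1, i_2} + B$, where $A \triangleq \sum_{j \neq i_2} X_{i_1, j}$ sums the remaining entries of row $i_1$ and $B \triangleq \sum_{j \neq i_1} X_{j, i_2}$ the remaining entries of column $i_2$. Since the three blocks $X_{i_1, i_2}$, $A$, and $B$ involve pairwise disjoint index sets, the i.i.d.\ assumption on the underlying RVs makes them mutually independent. This immediately yields the factorization
\[
\phi_{S^H_{i_1}, S^V_{i_2}}(s,t) \;=\; \phi_X(s+t)\,\phi_X(s)^{\Delta-1}\,\phi_X(t)^{\Delta-1}.
\]

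Next, I would assume for contradiction that $S^H_{i_1}$ and $S^V_{i_2}$ are independent, in which case the joint characteristic function must also factor as $\phi_X(s)^\Delta \phi_X(t)^\Delta$. The key trick is to specialize to $t=-s$ and use $\phi_X(-s)=\overline{\phi_X(s)}$: this collapses the identity to $|\phi_X(s)|^{2(\Delta-1)} = |\phi_X(s)|^{2\Delta}$ for every real $s$, which forces $|\phi_X(s)| \in \{0,1\}$ pointwise. Combined with the continuity of $\phi_X$ and $\phi_X(0)=1$, this produces a neighborhood $(-\epsilon, \epsilon)$ on which $|\phi_X| \equiv 1$.

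Finally, I would argue that $|\phi_X| \equiv 1$ on an interval forces $X$ to be degenerate. This follows from the equality case of the triangle inequality: $|\mathbb{E}[e^{isX}]| = 1 = \mathbb{E}|e^{isX}|$ is only possible when $e^{isX}$ equals a single deterministic unit-modulus complex number almost surely, so any two candidate support points $x_1, x_2$ of $X$ must satisfy $s(x_1 - x_2) \in 2\pi\mathbb{Z}$ for every $s$ in the neighborhood, which is possible only when $x_1 = x_2$. Hence the distribution of $X$ is concentrated on a single point $c$, contradicting the assumed non-degeneracy.

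The main subtlety lies in choosing the right substitution. The raw identity $\phi_X(s+t)\phi_X(s)^{\Delta-1}\phi_X(t)^{\Delta-1} = \phi_X(s)^\Delta \phi_X(t)^\Delta$ looks like Cauchy's functional equation and would ordinarily require a non-trivial analytic continuation / logarithm argument to solve globally; the $t=-s$ substitution collapses it instead into a pointwise algebraic condition on $|\phi_X|$ that is immediately rigid enough to conclude degeneracy. So the bottleneck is this one substitution rather than solving a hard functional equation.
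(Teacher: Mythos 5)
Your proof is correct, and it shares the paper's first move -- isolating the single shared summand by writing $S^H_{i_1}=X_{i_1,i_2}+A$ and $S^V_{i_2}=X_{i_1,i_2}+B$ with the three pieces mutually independent -- but it diverges at the key lemma. The paper abstracts this situation into \autoref{lem:xyz} (if $X,Y,Z$ are independent and $X+Z$, $Y+Z$ are independent, then $Z$ is a.s.\ constant) and proves it with distribution functions: independence forces $\mathrm{Cov}\bigl(F_X(a-Z),F_Y(b-Z)\bigr)=0$ for all $a,b$, the equality case of Chebyshev's correlation inequality (\autoref{th:chebyint}) then makes one of these monotone functions of $Z$ a.s.\ constant, and a limit argument on the cdf ($F_X(x_k)$ constant along an arithmetic progression $x_k\to\infty$) yields the contradiction. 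You instead work with characteristic functions: independence gives $\phi_X(s+t)\,\phi_X(s)^{\Delta-1}\phi_X(t)^{\Delta-1}=\phi_X(s)^{\Delta}\phi_X(t)^{\Delta}$, and the substitution $t=-s$ collapses this to $\lvert\phi_X(s)\rvert^{2(\Delta-1)}=\lvert\phi_X(s)\rvert^{2\Delta}$, so $\lvert\phi_X\rvert\in\{0,1\}$ pointwise; continuity at $0$ gives $\lvert\phi_X\rvert\equiv1$ on a neighborhood, and the equality case of the triangle inequality forces $e^{isX}$ to be a.s.\ constant for each such $s$, hence $X$ degenerate. Both arguments are moment-free and so cover heavy-tailed targets such as Cauchy. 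What each buys: your route avoids the correlation-inequality citation and the cdf-limit bookkeeping, resting only on standard facts about characteristic functions ($\phi_X(-s)=\overline{\phi_X(s)}$, continuity, equality in $\lvert\mathbb{E}[e^{isX}]\rvert\le 1$), and the $t=-s$ substitution indeed sidesteps any Cauchy-functional-equation analysis; the paper's route yields a cleaner reusable lemma stated for arbitrary independent $X,Y,Z$ rather than sums of copies of the same law -- though your argument also generalizes with minor changes, since near $s=0$ the factors $\phi_A(s)$ and $\phi_B(-s)$ are nonzero by continuity, which is all the collapse step needs.
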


% This theorem is rather intuitive but has a nontrivial proof, which is given in~\autoref{appd:l17} in the interest of space. 
 
 %Note that the only ``exception case'' above is trivial: We can simply generate every underlying RV with value $c$.
 
%If this function is implemented straightforwardly, it would take $\Omega(\Delta)$ time even reading the inputs, which is  obviously infeasible. One may attempt to generate $\mathbf{h}$ before $\mathbf{v}$ or even generate the scalars in $\mathbf{h}$ and $\mathbf{v}$ alternatingly, but it is hard to imagine how the conditional distribution on these routes can be simpler than the one above.

%Although we do not have a rigorous proof, we believe~(\ref{condindep}) is likely to be also necessary for DST-based solutions to be feasible in 2D. If~(\ref{condindep}) fails,  it is hard to imagine an alternative RV on the condition, because no other function of all (horizontal or vertical) strips is as ``handy'' as $S$ in the sense that $S$ is always generated at the initialization of DSTs. 

%Recall that in DST, dyadic range-sums are generated from a special conditional distribution given what has been generated so far. In 1D, this conditional distribution is elegantly formulated  by~\autoref{eq:dstsplit} depending only on the parent. However, things become much more complicated in 2D. Suppose in one attempt, we first vertically split $S$ into $\Delta_2$ vertical strips $\mathbf{v}$ before generating the horizontal strips $\mathbf{h}$.

\section{Conclusion}\label{sec:conclusion}
In this work, we propose novel solutions to $d$D-ERS for RVs that have Gaussian or Poisson distribution. 
Our solutions are the first ones that compute any multi-dimensional range-sum in polylogarithmic time. 
Our $d$D Gaussian-ERS scheme solves the long-standing open problem of efficiently answering approximate range-sum queries over a multidimensional data cube.
We develop a novel $k$-wise independence theory that provides both high computational efficiencies and strong provable independence guarantees. 
Finally, 
 we show that when the underlying distribution satisfies a sufficient and likely necessary condition, its DST-based 1D-ERS solution can be generalized to higher dimensions.

\bibliographystyle{plainurl}
\bibliography{bibs/draft}
\appendix

\section{Two Known Positive Cases:  Gaussian and Poisson}\label{sec:casepositive}
%We show why~(\ref{condindep}) holds when the target distribution is Gaussian and Poisson, but not when it is Cauchy or Rademacher.

When the target distribution is \emph{Gaussian}, 
it can be shown that the vector of vertical strips $(S^V_0, S^V_1, \cdots, S^V_{\Delta-1})$ is a function only of all the HWT coefficients in the form of $W^{-1, m_2}_{0, j_2}$ (for some scale $m_2$ and location $j_2$), which are the $\Delta$ scalars on the first row (from the top) of $\vec{\mathbf{w}}$  arranged into a $\Delta \times \Delta$ matrix 
(like that shown in~\autoref{fig:colors});  we denote these HWT coefficients  as a vector $\vec{\mathbf{w}}^V$. 
It can be shown that the vector of horizontal strips $(S^H_0, S^H_1, \cdots, S^H_{\Delta-1})$ is a function only of all the HWT coefficients in the form of $W^{m_1, -1}_{j_1, 0}$ (for some scale $m_1$ and location $j_1$), which are the $\Delta$ scalars on the first column (from the left) of $\vec{\mathbf{w}}$ in the same matrix form as above;
we denote these HWT coefficients  as a vector $\vec{\mathbf{w}}^H$. 
Since all the $\Delta^2$ HWT coefficients are i.i.d. Gaussian when the $\Delta^2$ underlying RVs are i.i.d. Gaussian (due to~\Cref{lem:correctness}), 
$(S^V_0, S^V_1, \cdots, S^V_{\Delta-1})$ and $(S^H_0, S^H_1, \cdots, S^H_{\Delta-1})$ are independent conditioned upon $S$, because 
the two vectors $\vec{\mathbf{w}}^H$ and $\vec{\mathbf{w}}^V$ share only a single common element:  $W^{-1,-1}_{0,0} = 1/\Delta \cdot  \sum_{i_1=0}^{\Delta-1}\sum_{i_2=0}^{\Delta-1}X_{i_1, i_2} = S/\Delta$.

%, the underlying RV indexed by $\vec{\mathbf{i}}$,
%We can relate the above example to the wavelet-based scheme in~\autoref{sec:dyasim} as follows. The vertical strip $S^V_j$ can be expressed by wavelet coefficients in $\mathbf{w}$ whose scale parameter on the first dimension is $m_1=-1$. Similarly, all horizontal strips can be expressed by wavelet coefficients with $m_2=-1$. 

To prove that \emph{Poisson} satisfies the sufficient condition, we need to introduce the following ``balls-into-bins'' process.  Mathematically but not computationally, 
we independently throw $S$ balls each uniformly and randomly into one of the $\Delta^2$ bins arranged as a $\Delta\times \Delta$ matrix indexed by $(i_1, i_2)\in [0, \Delta)^2$.
Each underlying $X_{i_1, i_2}$ is defined as the number of balls that end up in the bin indexed by $(i_1, i_2)$.
It is not hard to verify that, given the total sum $S$, 
the $\Delta^2$ underlying RVs generated by the ``balls-into-bins'' process have the same conditional (upon their total sum being $S$) joint distribution as
$\Delta^2$ i.i.d. Poisson RVs.  
%Hence we prove the conditional independence relation using the ``balls-into-bins'' process.consisting of 
%can be equivalently generated by the 
Note that throwing a ball uniformly and randomly into a bin consists of the following two independent steps.
The first step is to select $i_1$ uniformly from $[0, \Delta)$ and throw the ball to the $(i_1)^{th}$ row (thus increasing the horizontal strip-sum $S^H_{i_1}$ by 1). 
The second step is to select $i_2$ uniformly from $[0, \Delta)$ and throw the ball to the $(i_2)^{th}$ column (thus increasing the vertical strip-sum $S^V_{i_2}$ by 1). 
Therefore, $(S^V_0, S^V_1, \cdots, S^V_{\Delta-1})$ and $(S^H_0, S^H_1, \cdots, S^H_{\Delta-1})$ are conditionally independent given $S$, because these two 
steps, in throwing each of the $S$ balls, are independent.

\section{Two Example Negative Cases:  Rademacher and Cauchy}\label{sec:casenegative}
%The \emph{second} example is that when the underlying distribution is standard Poisson, the conditional distribution $X_{i, j} \mid S^H_i=\xi_i$ is binomial $B(\xi_i, 1/\Delta)$ for all $i\in[\Delta]$. Then, the conditional distribution of the vertical strip $S^V_j\mid \{(S^H_0, S^H_1, \cdots, S^H_{\Delta-1}) = (\xi_0, \xi_1, \cdots, \xi_{\Delta-1})\}$ is binomial $B(\xi_0 + \xi_1 + \cdots + \xi_{\Delta-1}, 1/\Delta)$ and is parameterized by only $\xi_0 + \xi_1 + \cdots + \xi_{\Delta-1}$ besides the constant $\Delta$. Therefore, Condition~(\ref{condindep}) holds when the underlying distribution is Poisson.

%\begin{remark*}\label{ex:poissonsplit}

%We now state how 1D-DST splits the range-sum when the target distribution is standard Poisson, since this result is not included in~\cite{1ddyadic}. Recall that the split RV is generated from the following %conditional pmf $f(x\mid z) = \phi_n(x)\phi_n(z-x)/\phi_{2n}(z)$. For Poisson RVs, $\phi_n(x) = e^{-n}n^x/(x!)$ and $\phi_{2n}(x) = e^{-2n}(2n)^x/(x!)$, so $f(x\mid z) = 2^{-z}\binom{z}{x}$. Therefore, the split %RV is generated from Binomial distribution $B(z, 1/2)$, which is not a linear transform in terms of $z$.Condition~(\ref{condindep}) does not hold, as illustrated byhave the following counterexample. Suppose $\Delta=2$, and

As the negative cases are shown by counterexamples, we set $\Delta$ to a small number $2$ to make this job easier.
In this case we are dealing with only 4 underlying RVs $X_{0,0}$, $X_{0,1}$, $X_{1,0}$, and $X_{1,1}$.  
%so that counterexamples can be revealed easily. 
We start with the target distribution being \emph{Rademacher} ($\Pr[X = 1] = \Pr[X = 0] = 0.5$).  We prove by contradiction.  Suppose $S^V_0$ is independent of $(S^H_0, S^H_1)$ conditioned on $S$.
We have $\Pr(S^V_0 = 0 \mid S=0) \Pr(S^H_0 = 2, S^H_1 = -2 \mid S=0) = \Pr(S^V_0 = 0 \mid S^H_0 = 2, S^H_1 = -2, S=0)\Pr(S^H_0 = 2, S^H_1 = -2 \mid S=0)$, 
since both the LHS and the RHS are equal to $\Pr(S^V_0 = 0, S^H_0 = 2, S^H_1 = -2 \mid S=0)$.  Since the LHS and the RHS contain a common factor $\Pr(S^H_0 = 2, S^H_1 = -2 \mid S=0)\neq 0$,  
we obtain $\Pr(S^V_0 = 0 \mid S=0) = \Pr(S^V_0 = 0 \mid S^H_0 = 2, S^H_1 = -2, S=0)$ by removing the common factor.
% by dividing both the LHS and the RHS by $\Pr(S^H_0 = 2, S^H_1 = -2 \mid S=0)$.
%the latter condition is a realization (case) of $S^H_0$ and $S^H_1$ over the former condition. 
It is not hard to calculate using~(\ref{eq:dstsplit}) that $\Pr(S^V_0 = 0 \mid S=0) = 2/3$.  However, $\Pr(S^V_0 = 0 \mid S^H_0 = 2, S^H_1 = -2, S=0) = 1$, because 
$S^H_0 = 2$ implies $X_{0, 0} = 1$ and $S^H_1 = -2$ implies $X_{1, 0} = -1$, and as a result $S^V_0 = X_{0, 0} + X_{1, 0} = 0$ with probability 1.  
%the only realization of the 4 underlying RVs in the event $\{S^H_0 = 2, S^H_1 = -2\}$ is also in $\{S^V_0 = 0\}$.
Therefore, we have a contradiction.
%Therefore, $S^V_0$ is dependent on $(S^H_0, S^H_1)$ conditioned on $S$ since the two conditional probabilities are not equal.

\begin{comment}
	There are two cases for this event to happen.
	The first case is $S^H_0 = 0$ and $S^H_1 = 0$.  This case contains 4 equally probable sub-cases that correspond to 4 possible (given $S = 0$) value combinations of the 4 underlying RVs.   
	We have $S^V_0=0$ in 2 out of these 4 sub-cases.  Hence, $\Pr(S^V_0 = 0 | S^H_0 = 0, S^H_1 = 0) = 2/4 = 1/2$. 
	The second case is .  In this case, it is not hard to verify that $\Pr(S^V_0 = 0 | S^H_0 = 2, S^H_1 = -2) = 1$.
	Since $S = 0$ in both cases, clearly $S^V_0$ is dependent on $(S^H_0, S^H_1)$ conditioned on $S$.
	% determines the 
\end{comment}
% and only 1 realization of underlying RVs lead to this case. Since in this realization $S^V_0 = 0$, $\Pr(S^V_0 = 0 | S^H_0 = 2, S^H_1 = -2) = 1$. 
% In these two realizations, $S = S^H_0 + S^H_1$ are the same, but the conditional distribution for $S^V_0 = 0$ are different, so $S^V_0$ is dependent on $(S^H_0 , S^H_1)$ given $S$. 

% It is straight forward to check   and 
%it 
%, conditionally $\Pr(X_{0,0} = 1| S^H_0 = 0) = \Pr(X_{0,0} = 1, X_{0, 1} = -1) / \left(\Pr(X_{0,0} = 1, X_{0, 1} = -1)+ \Pr(X_{0,0} = -1, X_{0, 1} = 1)\right) = 1/2$. Similarly, $\Pr(X_{0,0} = -1| S^H_0 = 0) = \Pr(X_{1,0} = -1| S^H_1 = 0) =\Pr(X_{1,0} = 1| S^H_0 = 0)  = 1/2$.
% $\Pr(S^V_0 = 0) = \Pr(X= 0)$
\begin{remark}
	The authors in~\cite{muthukrishnanhistogram} were looking for a 2D-ERS solution for Rademacher RVs with (approximately) $4$-wise independence guarantee.  The fact that~(\ref{condindep}) does not hold likely rules out a DST-based Rademacher-ERS
	solution that works in 2D.  However, it does not rule out a generalization of 1D ECC-based Rademacher-ERS schemes such as EH3~\cite{feigenbaum} to 2D, although no such generalization has been discovered to this day~\cite{rusu-jour}.
\end{remark}

Now we move on to the target distribution being \emph{Cauchy}.  
%The \emph{fourth} example is that when the underlying distribution is Cauchy,have the following counterexample.  Condition~(\ref{condindep}) does not hold, as illustrated by the following counterexample. Therefore, $S^V_0 \nindep (S^H_0, S^H_1) \mid S^H_0 + S^H_1$  $\Delta=2$, 
Suppose $S^H_0=z$ and $S^H_1 = -z$ for a real number $z$ such that $S = S^H_0 + S^H_1 = 0$. It can be shown that (\emph{e.g.}, in~\cite{1ddyadic}) 
$X_{0,0}$ has conditional pdf $f(x\mid z) = (4+z^2) /(2\pi(1+x^2)(1+(z-x)^2))$, and that $X_{1,0}$ has pdf $f(x\mid-z)$. Hence, $S^V_0$ has pdf $g(x\mid z) = f(x\mid z) \ast f(x\mid -z)$, where $\ast$ refers to the convolution on $x$. 
We consider the value of $g(x\mid z)$ at $x=0$. 
We have $g(0\mid z) = \int_{-\infty}^{\infty}  f(x\mid z)\cdot f(-x\mid -z)\mathrm{d}x = (4+z^2)^2/(4\pi^2)\cdot \int_{-\infty}^{\infty} 1/((1+x^2)^2(1+(z-x)^2)^2)\cdot\mathrm{d}x = (20+z^2)/(4\pi(4+z^2))$. 
Since $g(0\mid z)$ is still a function of $z$, we conclude $S^V_0$ is dependent on $(S^H_0 , S^H_1)$ given $S$.

\section{Universal Algorithmic Framework}\label{sec:2ddyasim}

%Formula ? implies that how the total $S$ is split horizontally into ... + ... is independent of how it is split vertically into ... + ...intuitively This suggests that the DST governing horizontal splits is independent of the DST governing vertical splits.  Hence, 2D dyadic simulation approach is the catesian product of the two DST's, as will explain shortly.

In this section, we describe the universal algorithmic framework, for all target distributions that satisfy the aforementioned sufficient condition (which so far include only Gaussian and Poisson), 
that can be instantiated to compute the sum of any 2D dyadic range in $O(\log^2\Delta)$ time;  its space complexities for providing the two aforementioned types of 
independence guarantees (mutual independence and $k$-wise independence) are the same as those of the 2D ERS-Gaussian solution.
%In describing this framework, we focus on the dyadic case first and show that any 2D dyadic range-sum can be computed in $O(\log^2\Delta)$ time and space.   computed (generated)
Interestingly, the time complexity of computing any general 2D range-sum remains $O(\log^2\Delta)$, as we will explain shortly.
%after describing the dyadic case.  
This framework is simply to perform a 4-way-split procedure recursively on all ancestors (defined next) of the 2D dyadic range whose sum is to be generated.
In a 4-way-split procedure, we compute a 2D dyadic range-sum say $S[\vec{\mathbf{l}}, \vec{\mathbf{u}})$ by splitting
the sum of its {\it lattice grandparent}, 
which is the unique (if it exists) 2D dyadic range that contains $[\vec{\mathbf{l}}, \vec{\mathbf{u}})$ and is twice as large along both horizontal and vertical 
dimensions, into four pieces.  

\begin{figure}
	\includegraphics[width=0.45\textwidth]{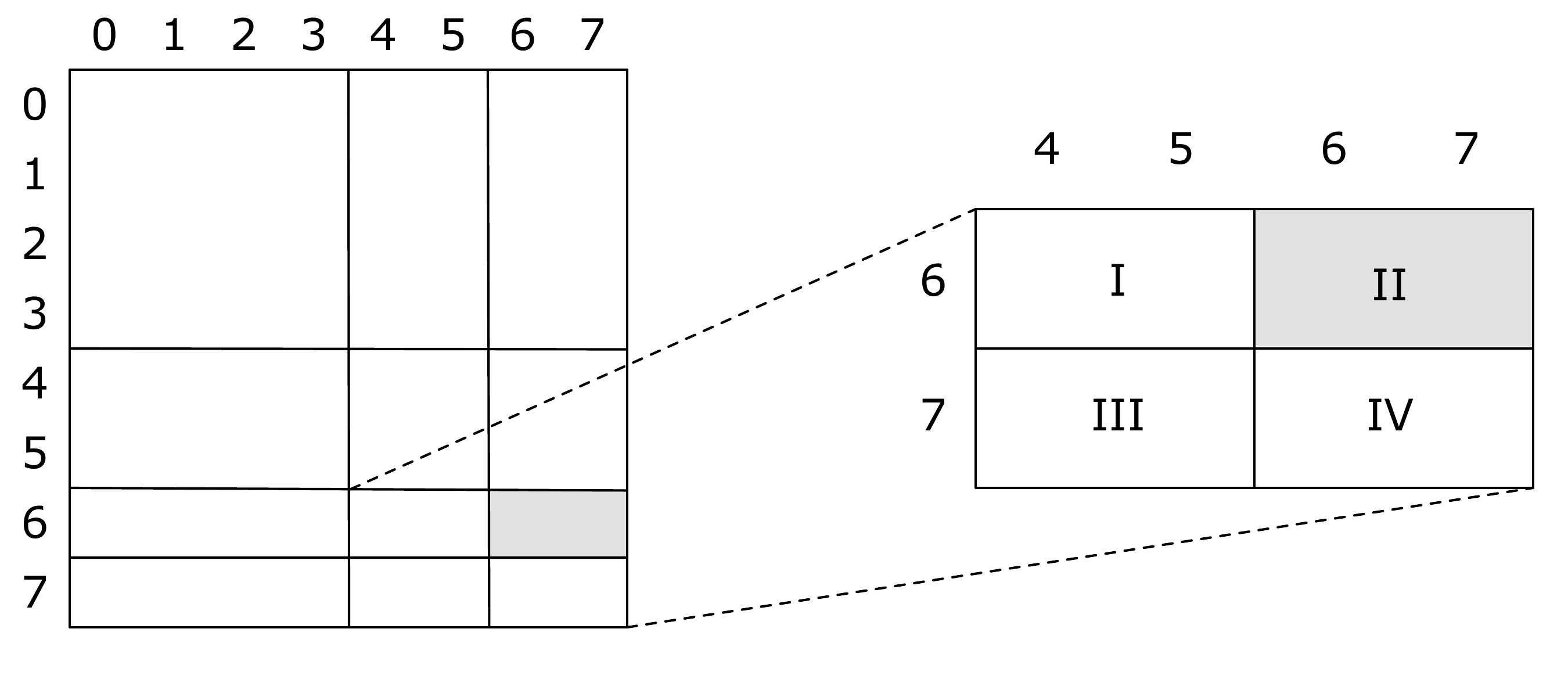}
	\caption{An example of the universal algorithmic framework in 2D with $\Delta=8$.} \label{fig:4parents}
\end{figure}

We illustrate this procedure by a tiny example (with universe size $\Delta=8$) 
shown in~\autoref{fig:4parents}.  Our computation task is to generate the sum of the 2D dyadic range $[6,7)\times [6,8)$ that is shadowed and marked as region II in~\autoref{fig:4parents}.  
We call the region I + II
its \emph{horizontal parent}, since it contains II and is twice as large in the horizontal dimension.
Similarly, we call the region II + IV its \emph{vertical parent}.  Its lattice grandparent is the union of the 4 regions I + II + III + IV.  
We refer to the horizontal and the vertical parents, and the lattice grandparent,
of a 2D dyadic range as its \emph{direct generation ancestors} (DGAs).  

In this example, 
the sum of the 2D range $[6,7)\times [6,8)$ is to be generated according to a conditional distribution 
parameterized by the sums of its three DGAs.  To do so, however, the sum of each such DGA is
to be generated according to the sums of the DGA's three DGAs, and so on.  
Given any 2D dyadic range, since it has $O(\log^2\Delta)$ distinct {\it ancestors} (DGAs, DGAs' DGAs, and so on), 
we can generate its sum in $O(\log^2\Delta)$ time, by arranging the computations of the sums of these $O(\log^2\Delta)$ ancestors in the 
\emph{dynamic programming} order.  
One can think of these $O(\log^2\Delta)$ 4-way-split procedures as 
the Cartesian product of the $O(\log\Delta)$ binary splits along the horizontal 1D-DST
and those along the vertical 1D-DST.  As explained in~\autoref{sec:infeasibility}, the sufficient condition makes taking this Cartesian product
possible.

%which takes $ with, which has which leads to the complexity guarantees above. To this end,  each of its DGA range-sums

%\textbf{Since the 1D dyadic range $[6,7)$ has 4 ancestors (including itself) $\{[0,8), [4,8), [6,8), [6,7)\}$, and the range $[6,8)$ has 3 ancestors $\{[0,8), [4,8), [6,8)\}$, the 2D dyadic range has 12 %ancestors (including itself) that are the ``Cartesian product'' of the two sets above.  To generate the sum of range $[6,7)\times [6,8)$, we need to perform 11 such 4-way-splits (some of which are 
%degenerated).  }

In a 4-way-split procedure, degenerated cases arise when the 2D dyadic range whose sum is to be generated 
spans the entire 1D universe on either dimension or on both dimensions.  
In the former case, 
this 2D dyadic range has only one parent and no lattice grandparent.   
%Specially, if a dyadic range spans the entire universe on either d, which is $[6,8)\times [6,8)$ in the green frameimension, then it has only one parent and no lattice grantparent. , which is $[6,7)\times [4,8)$In such cases, 
For example, the range $[4,6)\times[0,8)$ has only a vertical parent $[4,8)\times[0,8)$.  In this case, the 4-way-split degenerates to the 2-way split in 1D as already
explained in~(\ref{eq:dstsplit}).  In the latter case, which is the only boundary condition of the aforementioned dynamic programming,
%This recursive process 
%goes all the way to 
the 2D dyadic range is the entire 2D universe.  In this case, its
sum is directly generated from the distribution $X^{*\Delta^2}$.
In the cases of both Gaussian and Poisson target distributions, $X^{*\Delta^2}$ takes a simple form and can be generated efficiently~\cite{boxmuller, marsaglia}.

%has been explained above (generating this dyadic range-sum from $X^{*\Delta^2}$).
%We have explained the latter case (of generating ) above.

%\textbf{In the latter case,}  the range is the entire universe, which is the only boundary condition of the recursion. In this case, its range-sum is directly generated from the distribution of the sum of $\Delta^2$ i.i.d. underlying RVs.

%This scheme has only one  that needs to be fixed beforehand: the range-sum of the universe, which

So far we have only explained why a 2D dyadic range-sum can be generated in $O(\log^2\Delta)$ time.
In fact, any general 2D range-sum can also be generated in $O(\log^2\Delta)$ time due to the following two ``2D facts''.
% that we will establish shortly.
First, any general 2D range can be ``pieced together'' using $O(\log^2\Delta)$ 2D dyadic ranges, as explained in~\autoref{sec:infeasibility}.
Second, these $O(\log^2\Delta)$ 2D dyadic ranges together have only $O(\log^2\Delta)$ distinct (2D DST) ancestors for the following reason.
%These two facts are true for the following reasons.
The following ``1D fact'' was proved in Corollary 1 in~\cite{1ddyadic}:  Every 1D general range can be ``pieced together'' using $O(\log\Delta)$ 1D dyadic ranges and 
these $O(\log\Delta)$ 1D dyadic ranges share $O(\log\Delta)$ common ancestors on the 1D DST.  
Since a 2D general range is by definition the Cartesian product of two 1D ranges (namely vertical and horizontal), ``multiplying'' this 1D fact for the vertical 1D range by the 1D fact for the horizontal 1D range
proves the second 2D fact.   

%The second 2D fact can be similarly proved using the second 1D fact.

%the 1D version of the first fact implies 
%the 2D version of the first fact, and the 1D version of the second fact implies the 2D version of the second fact.

%since they are the ``Kronecker product'' of  two sets of dyadic ranges, each of which ``pieces together'' a 1D range in a 1D-DST and has only $O(\log \Delta)$ ancestors as shown by Corollary 1 in~\cite{1ddyadic}.

%we can actually prove that any . The proof is similar to that of and is based on the following two observations. 

%range-sums need to be generated on demand for these $O(\log^2\Delta)$ 2D dyadic ranges.

\begin{comment}
Condition~(\ref{condindep}) being satisfied recursively ensures that this split can be 
performed recursively all the way from the root to the goal.   
As we will show shortly, Condition~(\ref{condindep}) implies a 4-way split operation in the same spirit as the 2-way formula~(\ref{eq:dstsplit}) for the 1D case.  Starting from the root $[0,\Delta)\times [0, \Delta)$, we can perform this split recursively until we generate any 2D dyadic range-sum. Condition~(\ref{condindep}) being satisfied recursively ensures that this split can be 
performed recursively all the way from the root to the goal.   
\end{comment}

We can generalize the universal algorithmic framework  above to $d$D.  In the generalized framework, the 2-way conditional independence relation in 2D, namely formula~\autoref{condindep}, 
becomes a $d$-way conditional  independence relation 
in $d$D, and the $4$-way-split procedure in 2D becomes a $2^d$-way-split procedure in $d$D.  We omit the details of this generalization in the interest of space.

\noindent
\textbf{4-Way-Split Procedure for Poisson}: While the algorithmic framework of performing 4-way splits recursively is the same for any target distribution that satisfies the sufficient condition, 
the exact 4-way-split procedure is different for each such target distribution.  In the following, we specify only the
4-way-split procedure for Poisson, since that for Gaussian has already been seamlessly ``embedded'' in the HWT-based solution.  
%When the distribution of the underlying distribution is Poisson, the four way split takes the following. 
Suppose given a 2D dyadic range $[\vec{\mathbf{l}}, \vec{\mathbf{u}})$, we need to compute $S[\vec{\mathbf{l}}, \vec{\mathbf{u}})$
by performing a 4-way split.  
We denote as $S_h$, $S_v$, and $S_g$ the sums of the horizontal parent, 
the vertical parent, and the lattice grandparent of $[\vec{\mathbf{l}}, \vec{\mathbf{u}})$, respectively.  
%are are $s_h$, $s_v$, and $s_g$ respectively.
Then $S[\vec{\mathbf{l}}, \vec{\mathbf{u}})$ is generated using the following conditional distribution:  
%In 2D, when we simulate a 2D dyadic range-sum $S_R$, suppose the two parents $S_h$ and $S_v$ of $R$ have range-sums $s_h$ and $s_v$, and the lattice grandparent $S_d$ has range-sum $s_d$. Then, $S_R$ is generated from the following conditional distribution
%suppose $S(I^{(1)}_{i_1},I^{(2)}_{i_2})$ is the range-sum of $N$ underlying RVs. In Step~\ref{step4} of the dyadic simulation scheme, $S(I^{(1)}_{i_1},I^{(2)}_{i_2})$ is generated from the following conditional distribution given $S(I^{(1)}_{i_1},I^{(2)}_{i_2-1}) = s_h$, $S(I^{(1)}_{i_1-1},I^{(2)}_{i_2}) = s_v$ and $S(I^{(1)}_{i_1-1},I^{(2)}_{i_2-1}) = s_d$: S(I^{(1)}_{i_1},I^{(2)}_{i_2}) , then the range-sums of the two parents of  a dyadic range to generate are  and 
\begin{equation}
	\Pr\left(S[\vec{\mathbf{l}}, \vec{\mathbf{u}}) = x\mid S_h=s_h, S_v = s_v, S_g = s_g\right)  = 1/c \cdot f(x\mid s_h, s_v, s_g), %=\frac{\frac{1}{x!(s_v-x)!(s_h-x)!(s_d-s_v-s_h+x)!}} {\sum_{y=0}^\infty\frac{1}{y!(s_v-y)!(s_h-y)!(s_d-s_v-s_h+y)!}}a dyadic range has only one parent and no lattice grandparent,  since only ..., ..., and ... was covered in~\cite{our1D}.
\end{equation}
where $f(x\mid s_h, s_v, s_g) =1/\left(x!(s_h-x)!(s_v-x)!(s_g-s_h-s_v+x)!\right)$ and $c\triangleq \sum_{y=0}^\infty f(y\mid s_h, s_v, s_g)$ is a normalizing constant. %After that, we generate the other three range-sums as $S_{II} = s_v - S_I$, $S_{III} = s_g - s_v - S_I$, and $S_{IV} = s_v - s_h + S_I$, respectively.
As explained, in a degenerated case, this split is the same as a 2-way split in the 1D case.  
In the case of Poisson, by plugging the Poisson pmfs $\phi_n(x) = e^{-n}n^x/(x!)$ and $\phi_{2n}(x) = e^{-2n}(2n)^x/(x!)$ into~(\ref{eq:dstsplit}), we obtain
$f(x\mid z) = 2^{-z}\binom{z}{x}$, which is the pmf of $Binomial(z, 1/2)$.

\section{Proof of Theorem~\ref{th:hvdep}}\label{appd:l17}
\begin{proof}
	Note that for any $(i_1, i_2)$ in $[0, \Delta)^2$, $S^H_{i_1}$ intersects $S^V_{i_2}$ on exactly one underlying RV $X_{i_1, i_2}$, so $S^H_{i_1} -X_{i_1, i_2}, S^V_{i_2}-X_{i_1, i_2}$, and $X_{i_1, i_2}$ are independent. By~\Cref{lem:xyz}, $S^H_{i_1}$ and $S^V_{i_2}$ are always dependent unless there exists some constant $c$ such that  $\Pr[X_{i_1, i_2}=c]=1$. %Since $S^H_{i_1}$ and $S^V_{i_2}$ are scalars in $\mathbf{h}$ and $\mathbf{v}$ respectively, the two random vectors $\mathbf{h}$ and $\mathbf{v}$ are dependent.
\end{proof}

\begin{lemma}\label{lem:xyz}
	Let $X, Y, Z$ be three  independent RVs.  Suppose $X+Z$ and $Y+Z$ are independent.  Then there exists some constant $c$ such that $\Pr[Z=c]=1$.% $Z$ equals to some constant with probability $1$.(say $z_1 < z_2$)
\end{lemma}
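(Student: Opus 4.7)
The plan is to use characteristic functions to convert the independence hypothesis into a local multiplicative functional equation on $\phi_Z$ and then lift its unique solution to the distributional statement that $Z$ is almost surely constant. Writing $\phi_X$, $\phi_Y$, $\phi_Z$ for the three characteristic functions, I would compute the joint CF of $(X+Z,Y+Z)$ two ways: once using mutual independence of $X,Y,Z$ (which gives $\phi_X(s)\phi_Y(t)\phi_Z(s+t)$), and once using the hypothesized independence of the two sums (which gives $\phi_X(s)\phi_Z(s)\phi_Y(t)\phi_Z(t)$). Equating the two yields the identity
\[
\phi_X(s)\,\phi_Y(t)\,\bigl[\phi_Z(s+t) - \phi_Z(s)\phi_Z(t)\bigr] = 0 \qquad \text{for all } s,t\in\mathbb{R}.
\]

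Next, continuity of $\phi_X$ and $\phi_Y$ together with $\phi_X(0)=\phi_Y(0)=1$ supplies a rectangle $(-\delta,\delta)^2$ on which neither $\phi_X$ nor $\phi_Y$ vanishes. Dividing out on this rectangle leaves the Cauchy-type equation $\phi_Z(s+t)=\phi_Z(s)\phi_Z(t)$ for $s,t\in(-\delta,\delta)$. After further shrinking $\delta$ so that $\phi_Z$ is also nonvanishing on $(-2\delta,2\delta)$, I would take the continuous branch of $\log\phi_Z$ vanishing at $0$, reduce the multiplicative equation to additive Cauchy, and invoke the standard continuous-solution theorem to conclude $\phi_Z(t)=e^{ct}$ on $(-\delta,\delta)$ for some complex constant $c$. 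The a priori bound $|\phi_Z(t)|\le 1$ then forces $\mathrm{Re}(c)=0$, so $\phi_Z(t)=e^{iat}$ on $(-\delta,\delta)$ for some real $a$.

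The final step is to show this local form of $\phi_Z$ forces $Z=a$ almost surely. Letting $W\triangleq Z-a$, the local identity becomes $\phi_W(t)=1$, so in particular $\mathbb{E}[1-\cos(tW)]=0$ for every $t$ in the neighborhood. Since $1-\cos\ge 0$, this forces $tW\in 2\pi\mathbb{Z}$ almost surely for each such $t$. Intersecting the probability-one events over the countable dense set of rationals $t\in(-\delta,\delta)\setminus\{0\}$ identifies a probability-one event on which $W(\omega)$ satisfies $tW(\omega)\in 2\pi\mathbb{Z}$ for every such $t$; since the set of $t$ satisfying this is discrete whenever $W(\omega)\neq 0$, it cannot contain a dense subset of an interval, forcing $W(\omega)=0$ on this event. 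Hence $Z=a$ almost surely, completing the proof.

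The main obstacle I expect is this last lifting step. The CF analysis only identifies $\phi_Z$ on a (possibly tiny) neighborhood of $0$, and in general two characteristic functions agreeing on a neighborhood of $0$ need not agree globally, so one cannot merely invoke CF-uniqueness of point masses. The density-plus-nonnegativity argument sketched above is the cleanest way I know to bridge this gap without any moment hypothesis; a finite-variance shortcut, $\mathrm{Cov}(X+Z,Y+Z)=\mathrm{Var}(Z)=0$, would be a one-liner, but the lemma must cover heavy-tailed target distributions such as Cauchy, so moment-free reasoning is essential here.
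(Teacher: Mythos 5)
Your proof is correct, but it takes a genuinely different route from the paper. The paper works entirely with distribution functions: it writes $\Pr(X+Z\leq a,\,Y+Z\leq b)-\Pr(X+Z\leq a)\Pr(Y+Z\leq b)=\mathrm{Cov}\bigl(F_X(a-Z),F_Y(b-Z)\bigr)$, invokes the Chebyshev correlation inequality (the covariance of two non-increasing functions of the same RV is nonnegative, and vanishes only if one of them is a.s.\ constant) to conclude that, say, $F_X(a-Z)$ is a.s.\ constant for every $a$, and then derives a contradiction: if $Z$ charges two points $z_1<z_2$, then $F_X(a-z_1)=F_X(a-z_2)$ for all $a$ makes $F_X$ periodic, which is incompatible with its limits $0$ and $1$ at $\mp\infty$. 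Your argument instead passes to characteristic functions, extracts the local multiplicative Cauchy equation $\phi_Z(s+t)=\phi_Z(s)\phi_Z(t)$ on a neighborhood of $0$ where $\phi_X\phi_Y\neq 0$, identifies $\phi_Z(t)=e^{iat}$ there, and then lifts this local information with the nonnegativity trick $\mathbb{E}[1-\cos(t(Z-a))]=0$ plus a density/lattice argument; you correctly flag (and correctly resolve) the one real pitfall of this route, namely that agreement of characteristic functions near $0$ does not by itself pin down the law, and you rightly reject the finite-variance shortcut since the lemma must cover Cauchy-type targets. In exchange, the paper's proof needs a named correlation inequality but stays at the level of elementary CDF manipulations, while yours uses standard Fourier machinery (existence of a continuous logarithm, continuous solutions of Cauchy's equation) but is self-contained in the sense of not appealing to a monotone-rearrangement/correlation theorem; both are moment-free and both are complete proofs of the lemma.
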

\begin{proof}
	Let $a, b$ be two arbitrary real numbers. Denote as $F_X(\cdot)$ and $F_Y(\cdot)$  the cumulative distribution functions (cdfs) of $X$ and $Y$, respectively. Denote as $\mathbbm{1}_{\mathcal{E}}$ the indicator RV  of an event $\mathcal{E}$, which has value $1$ if  $\mathcal{E}$ happens and $0$ otherwise. Then, $\Pr(X+Z\leq a) = E[\mathbbm{1}_{X+Z\leq a}] = E[E[\mathbbm{1}_{X\leq a-Z}|Z]] = E[F_X(a-Z)]$, where the second equation is by the total expectation formula, and the third equation is because $X$ and $Z$ are independent (so the conditional cdf of $X|Z$ is also $F_X(\cdot)$). Similarly, we have $\Pr(Y+Z\leq b) = E[F_Y(b-Z)]$ and $\Pr(X+Z\leq a, Y+Z\leq b) = E[F_X(a-Z)F_Y(b-Z)]$. Therefore, $\Pr(X+Z\leq a, Y+Z\leq b) - \Pr(X+Z\leq a)\Pr(Y+Z\leq b) = E[F_X(a-Z)F_Y(b-Z)] - E[F_X(a-Z)] E[F_Y(b-Z)] = Cov(F_X(a-Z), F_Y(b-Z))$. This covariance always exists since cdfs are bounded functions. Since $X+Z$ and $Y+Z$ are independent, $\Pr(X+Z\leq a, Y+Z\leq b)	= \Pr(X+Z\leq a)\Pr(Y+Z\leq b)$, so $Cov(F_X(a-Z), F_Y(b-Z)) = 0$.

	We prove by contradiction. Suppose $Z$ is not with probability $1$ a constant. Then for any set $\mathcal{I}$ such that $\Pr(Z\in \mathcal{I})= 1$, $\mathcal{I}$ must contain at least two numbers. 
	Let $z_1 < z_2$ be two such numbers such that the pdf (or pmf) of $Z$ is nonzero at  $z_1$ and $z_2$.	
	Since both $F_X(a-Z)$ and $F_Y(b-Z)$ are non-increasing functions of $Z$, and $Cov(F_X(a-Z), F_Y(b-Z)) = 0$, either $F_X(a-Z)$ or $F_Y(b-Z)$ 
	must be a constant with probability $1$ by~\Cref{th:chebyint}. Without loss of generality, suppose $F_X(a-Z)$ is a constant with probability $1$.  
	Then, $F_X(a-z_1) = F_X(a-z_2)$ holds for arbitrary values of $a$. 
	Define a sequence $x_0 \triangleq z_1, x_1 \triangleq z_1 + (z_2-z_1), x_2 \triangleq z_1 + 2\cdot(z_2 - z_1), 
	\cdots, x_k \triangleq  z_1 + k\cdot(z_2 - z_1), \cdots$ such that we have $\lim_{i\to\infty} x_i = \infty$. Then, we have $F_X(x_0) = F_X(x_1) = F_X(x_2) = \cdots$, where the $k$-th equation $F_X(x_{k-1}) = F_X(x_k)$ is by applying $F_X(a-z_2) = F_X(a-z_1)$ with $a = 2z_1 + k\cdot(z_2 - z_1)$.  Since $\lim_{i\to\infty} x_i = \infty$, we have $ F_X(x_0) = \lim\inf_{i\to\infty}F_X(x_i) \geq \lim\inf_{x\to\infty}F_X(x)$. Similarly, we can prove $F_X(x_0) \leq\lim\sup_{x\to-\infty}F_X(x)$. By the properties of cdf, we have $\lim\inf_{x\to\infty}F_X(x) = 1$ and $\lim\sup_{x\to-\infty}F_X(x)=0$. As a result, we have proved $0 \geq 1$, which is a contradiction.
	
	%	, which contradicts the limits of cdf $\lim_{x\to-\infty}F_X(x) < \lim_{x\to\infty}F_X(x)$ (the former limit is $0$, and the latter limit is $1$). 
	%	we can find a sequence of points $x_1, x_2, \cdots$ that is unbounded in both directions such that $F_X(x_1) = F_X(x_2) = \cdots$, $Z$ is not for a contradiction. Then 
\end{proof}

\begin{theorem}[\cite{chebyshevint}]\label{th:chebyint}
	Let $Z$ be an RV, and $f(\cdot)$ and $g(\cdot)$ be two non-increasing functions.  Then the covariance $Cov(f(Z), g(Z)) \geq 0$ if it exists. Furthermore, $Cov(f(Z), g(Z)) = 0$ if and only if either $f(Z)$ or $g(Z)$ is a constant with probability $1$.
\end{theorem}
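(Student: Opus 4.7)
The plan is to prove this via the classical ``tensorization'' coupling trick: introduce an independent copy $Z'$ of $Z$ on the same probability space, and observe the identity
\[
2\,\mathrm{Cov}(f(Z), g(Z)) \;=\; \mathbb{E}\bigl[(f(Z)-f(Z'))(g(Z)-g(Z'))\bigr].
\]
This identity follows by expanding the right-hand side and using that $Z$ and $Z'$ are i.i.d. (so $\mathbb{E}[f(Z)g(Z')] = \mathbb{E}[f(Z)]\mathbb{E}[g(Z)]$, etc.), and it is valid whenever the covariance exists. The non-negativity assertion is then immediate: since $f$ and $g$ are both non-increasing, for any realization of $(Z, Z')$ the two factors $f(Z)-f(Z')$ and $g(Z)-g(Z')$ carry the same sign (both $\le 0$ if $Z>Z'$, both $\ge 0$ if $Z<Z'$), so their product is almost surely non-negative.

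For the equality characterization, the ``if'' direction is trivial: a constant has zero covariance with anything. For the ``only if'' direction, suppose $\mathrm{Cov}(f(Z),g(Z))=0$. Then the non-negative random variable $(f(Z)-f(Z'))(g(Z)-g(Z'))$ has zero expectation, hence vanishes almost surely. Equivalently, for $P_Z\!\otimes\!P_Z$-almost every pair $(z,z')$, either $f(z)=f(z')$ or $g(z)=g(z')$. I would then argue by contradiction: assume neither $f(Z)$ nor $g(Z)$ is a.s.\ constant. Since $f$ is non-increasing and $f(Z)$ is not a.s.\ constant, I can pick a threshold $t$ so that the set $A := \{z : f(z) > t\}$ satisfies $0 < P_Z(A) < 1$; by monotonicity $A$ is a left-infinite interval, and $f(z) > f(z')$ strictly whenever $z\in A$, $z'\in A^c$.

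Therefore, on the positive-measure product set $A \times A^c$ the first factor is strictly nonzero, which forces $g(z) = g(z')$ for $(P_Z|_A) \otimes (P_Z|_{A^c})$-almost every $(z,z')$. By Fubini this means there is a single constant $c$ such that $g(z) = c$ for $P_Z$-a.e.\ $z \in A$ and also for $P_Z$-a.e.\ $z \in A^c$; hence $g(Z) = c$ almost surely, contradicting our assumption. The main obstacle is purely the bookkeeping in this last step — one has to handle the case where $f$ has jumps (so that the threshold $t$ must be chosen off the countable set of jump values, which is possible since $f(Z)$ being non-constant produces uncountably many valid thresholds or at least one that cuts $P_Z$ strictly) and invoke Fubini cleanly to transfer the ``a.e.\ pair'' statement to an ``a.e.\ point'' statement on each side.
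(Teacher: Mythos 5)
Your argument is correct, but there is nothing in the paper to compare it against: Theorem~\ref{th:chebyint} is stated there as an imported result, cited from the literature (it is the classical Chebyshev association/covariance inequality together with its equality case) and used only as a black box inside the proof of Lemma~\ref{lem:xyz}; the paper gives no proof of its own. Your symmetrization identity $2\,\mathrm{Cov}(f(Z),g(Z))=\mathbb{E}\bigl[(f(Z)-f(Z'))(g(Z)-g(Z'))\bigr]$ with an independent copy $Z'$ is the standard route, and both halves go through: the integrand is pointwise nonnegative because $f$ and $g$ are similarly ordered, and in the equality case the almost-sure vanishing of this nonnegative integrand, combined with a threshold set $A=\{z: f(z)>t\}$ of measure strictly between $0$ and $1$ and a Fubini argument on $A\times A^c$, forces $g(Z)$ to be almost surely constant. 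One small simplification: your worry about choosing $t$ away from jump values of $f$ is unnecessary. If $f(Z)$ is not a.s.\ constant, its distribution is not a point mass, so some $t$ satisfies $0<\Pr[f(Z)>t]<1$, and then $f(z)>t\ge f(z')$ holds for every $z\in A$, $z'\in A^c$ regardless of jumps, which is exactly the strict separation you need; the only genuine bookkeeping is intersecting the two full-measure sets of ``good'' points when concluding that the essential value of $g$ on $A$ and on $A^c$ is the same constant, which your sketch handles. So your proposal is a complete, elementary, self-contained substitute for the paper's citation.
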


\end{document}